\crefname{equation}{}{} 
\colorlet{refkey}{orange!20}
\colorlet{labelkey}{blue!30}
\numberwithin{equation}{section}
\newtheorem{theorem}{Theorem}[section]
\newtheorem{lemma}[theorem]{Lemma}
\crefname{claim}{Claim}{Claims}
\newtheorem{corollary}[theorem]{Corollary}
\newtheorem*{question*}{Question}
\theoremstyle{definition}
\newtheorem{definition}[theorem]{Definition}
\newtheorem*{definition*}{Definition}
\newtheorem{example}[theorem]{Example}
\theoremstyle{remark}
\newcommand{\assign}{\leftarrow}
\newcommand{\norm}[1]{\left\lVert#1\right\rVert}
\newcommand{\one}{\mathbbm{1}}
\newcommand{\poly}{\mathrm{poly}}
\newcommand{\potential}{\phi}
\newcommand{\eps}{\varepsilon}
\newcommand{\E}{\mathbb{E}}
\newcommand{\R}{\mathbb{R}}
\newcommand{\calP}{\mathcal{P}}
\newcommand{\calA}{\mathcal{A}}
\newcommand{\calB}{\mathcal{B}}
\newcommand{\polylog}{\mathrm{polylog\ }}
\newcommand{\1}[1]{\mathbb{1}[#1]}
\title{Universally-Optimal Distributed Shortest Paths and Transshipment via Graph-Based $\ell_1$-Oblivious Routing\footnote{The author ordering was randomized using \url{https://www.aeaweb.org/journals/policies/random-author-order/generator}. The authors of this paper encourage citations by listing the authors with \texttt{\textbackslash textcircled\{r\}} instead of commas: Zuzic\textcircled{r}Goranci\textcircled{r}Ye\textcircled{r}Haeupler\textcircled{r}Sun.}}
\author{
  Goran Zuzic\thanks{Supported in part by the Swiss National Foundation (project grant 200021-184735).}\\
  \small ETH Zurich\\
  \small goran.zuzic@inf.ethz.ch

  \and\textcircled{r}\and
  
  Gramoz Goranci\\  
  \small School of Computing Science,\\
  \small University of Glasgow \\
  \small gramoz.goranci@glasgow.ac.uk

  \and\textcircled{r}\and
  
  Mingquan Ye\\
  \small University of Illinois at Chicago\\
  \small mye9@uic.edu

  \and\textcircled{r}\and  
    
  Bernhard Haeupler\thanks{Supported in part by NSF grants CCF-1527110, CCF-1618280, CCF-1814603, CCF-1910588, NSF CAREER award CCF-1750808, a Sloan Research Fellowship, funding from the European Research Council (ERC) under the European Union's Horizon 2020 research and innovation program (ERC grant agreement 949272), and the Swiss National Foundation (project grant 200021-184735).} \\
  \small ETH Zurich \& Carnegie Mellon University\\
  \small haeupler@cs.cmu.edu

  \and\textcircled{r}\and  
  
  Xiaorui Sun\thanks{Supported by start-up funds from University of Illinois at Chicago.}\\
  \small University of Illinois at Chicago\\
  \small xiaorui@uic.edu
}
\newcommand{\defeq}{\stackrel{\mathrm{\scriptscriptstyle def}}{=}}
\newcommand{\opt}{\mathrm{OPT}}
\newcommand{\dist}{\operatorname{dist}}
\newcommand{\imax}{{i_{\max}}}
\newcommand{\parent}{\mathrm{p}}
\newcommand{\shortcutquality}{\mathrm{Shortcut}\-\mathrm{Quality}}
\newcommand{\diameter}{\mathrm{HopDiameter}}
\newcommand\numberthis{\addtocounter{equation}{1}\tag{\theequation}}
\date{}
\begin{document}

\maketitle
\thispagestyle{empty}
\begin{abstract}
We provide universally-optimal distributed graph algorithms for $(1+\eps)$-approximate shortest path problems including shortest-path-tree and transshipment.

\medskip

The universal optimality of our algorithms guarantees that, on any $n$-node network $G$, our algorithm completes in $T \cdot n^{o(1)}$ rounds whenever a $T$-round algorithm exists for $G$. This includes $D \cdot n^{o(1)}$-round algorithms for any planar or excluded-minor network. Our algorithms never require more than $(\sqrt{n} + D) \cdot n^{o(1)}$ rounds, resulting in the first sub-linear-round distributed algorithm for transshipment.

\medskip

The key technical contribution leading to these results is the first efficient $n^{o(1)}$-competitive linear $\ell_1$-oblivious routing operator that does not require the use of $\ell_1$-embeddings. Our construction is simple, solely based on low-diameter decompositions, and---in contrast to all known constructions---directly produces an oblivious flow instead of just an approximation of the optimal flow cost. This also has the benefit of simplifying the interaction with Sherman's multiplicative weight framework [SODA'17] in the distributed setting and its subsequent rounding procedures.  
\end{abstract}

\newpage

\tableofcontents
\thispagestyle{empty}

\newpage

\section{Introduction}\setcounter{page}{1}

Computing single-source shortest paths (SSSP) is one of the most fundamental and well-studied problems in combinatorial optimization. Likewise, the distributed version of the SSSP problem has received wide-reaching attention in the distributed community~\cite{Nan14,EN16,HKN16,BKKL17,Elk17,FN18,GL18,HL18,EN19,LPP19,CM20}. For $(1 + \eps)$-approximate SSSP, this effort culminated in a distributed algorithm that is guaranteed to complete in $\tilde{\Theta}(\sqrt{n} + D)$ rounds\footnote{We use $\sim$, as in $\tilde{O}$, to hide $\poly(\log n)$ factors, where $n$ is the number of nodes in the network.}~\cite{BKKL17} on every $n$-node  network with hop-diameter $D$ in the standard message-passing model (CONGEST).

This $\tilde{\Theta}(\sqrt{n} + D)$-round algorithm is \emph{existentially-optimal}, in the sense that there exists a family of pathological networks where one cannot do better~\cite{sarma2012distributed}. However, the barrier that precludes fast algorithms in these pathological networks does not apply to most networks of interest, and faster algorithms often exist in non-worst-case networks such as planar graphs~\cite{GH16}. This motivates a much stronger notion of optimality---called \emph{universal optimality}---which requires a single (i.e., uniform) algorithm to be as fast as possible on \emph{every} network $G$, i.e.,  (approximately) as fast as the running time of any other algorithm on $G$. In particular, for any (unknown) network $G$ the universally-optimal algorithm must be competitive with whichever algorithm happens to be the fastest for $G$, including any algorithm explicitly designed to be fast on $G$ (but potentially very slow on any other network). More formally, let $T_{\calA}(G)$ be the running time on the network $G$ of the longest-running (i.e., worst-case) problem-specific input to an algorithm $\calA$. An algorithm $\calA$ is ($n^{o(1)}$-) universally optimal if $T_\calA(G)$ is $n^{o(1)}$-competitive with $T_\calB(G)$ for every other correct algorithm $\calB$, i.e., if $T_\calA(G) \le n^{o(1)}\cdot T_\calB(G)$~\cite{HWZ21}.

This beyond-worst-network guarantee is therefore in some sense the strongest form in which an algorithm can adjust (on the fly) to the network topology it is run on. The concept of universal optimality was already (informally) proposed in 1998 by Garay, Kutten and Peleg~\cite{GKP98} but only over the last six years work towards universally-optimal distributed algorithms has found traction, in the form of the low-congestion shortcut framework~\cite{GH16,haeupler2016low,haeupler2016near,haeupler2018minor,KKOI19,ghaffari2020low,HWZ21}. This framework can be used to design universally-optimal distributed algorithms for problems that can be solved  very fast using a communication primitive called part-wise aggregation. For example, such universally-optimal algorithms exist for (exact) minimum spanning tree (MST) and $(1 + \eps)$-minimum cut. The shortcut framework also implies fast algorithm with concrete guarantees for special graph classes, including any excluded-minor graph family. We refer to \Cref{sec:prelims} and \cite{HWZ21} for more details on universal optimality and the shortcut framework but remark that SSSP algorithms have been notoriously hard to achieve within the shortcut framework. The only non-trivial SSSP algorithm~\cite{HL18} within the framework has a bad super-constant approximation factor of at least $\polylog n$ (or even $n^{o(1)}$). An improvement using \cite{BKKL17} towards better approximations was proposed and stated as the main open problem in the abstract and conclusion of \cite{HL18}, however this suggested approach turned out to be (provably) impossible (see below).

\textbf{Our results.} In this paper, we resolve the open question of \cite{HL18} by giving a universally-optimal distributed algorithms for $(1+\eps)$-approximate single-source shortest path (SSSP) whenever there exists a $n^{o(1)}$-round algorithm. The running time of our algorithm is $n^{o(1)}$-competitive with the fastest possible correct algorithm and thus leads to ultra-fast sub-polynomial-round distributed algorithms on networks of interest. For example, our SSSP algorithm provably completes in $D \cdot n^{o(1)}$ rounds on any minor-free network.

We also give the first universally-optimal distributed algorithms for $(1+\eps)$-approximate transshipment.
Transshipment is a well-studied generalization of the shortest path problem also known as uncapacitated min-cost flow, earth mover's distance, or Wasserstein metric. No sub-linear-round algorithm was known for $O(1)$- or $(1+\eps)$-approximate transshipment in the distributed setting (CONGEST). Therefore even our worst-case running time of $(\sqrt{n} + D) \cdot n^{o(1)}$ rounds for $(1+\eps)$-approximate distributed transshipment improves over the state-of-the-art $\tilde{O}(n)$-round CONGEST algorithm~\cite{BKKL17} for this problem.


\textbf{Challenges.} All approaches to the distributed $(1+\eps)$-approximate SSSP problem that appear in the literature can be categorized as follows: (1) Either they select $\tilde{\Theta}(\sqrt{n})$ nodes (e.g., via random sampling), construct a backbone graph on the selected nodes in $\tilde{\Theta}(\sqrt{n})$ rounds, and finally solve some variant of the shortest path problem on the selected nodes. (2) Or they simulate all-to-all communications on a large number of nodes. 

Algorithms using hop-sets and similar ideas like \cite{Nan14,EN16,HKN16,Elk17,EN19,LPP19,CM20} are of the first type. The algorithm of \cite{BKKL17} actually fall into both categories simultaneously: \cite{BKKL17} explicitly uses the $\tilde{\Theta}(\sqrt{n})$ reduction to a backbone graph from \cite{Nan14} and then simulates a fast BROADCAST CONGESTED CLIQUE algorithm on the $\tilde{\Theta}(\sqrt{n})$ backbone nodes, which can be easily achieved in CONGEST with a slow-down of $\tilde{\Theta}(\sqrt{n})$ (via flooding). The algorithms from \cite{GL18b} are the cleanest example of the last category. They simulate almost any near-linear time PRAM algorithm (for SSSP, transshipment, or almost any other problem)  by implementing all-to-all communications between a linear number of nodes (in this case serving any communication which does not involve any node too often). This is possible in graphs with good expansion with a slowdown linear in the mixing time of the graph but impossible to do in sub-linear time in general graphs.

Once one looks at these algorithms in this way it is clear that these approaches fundamentally cannot \emph{ever} lead to a  $o(\sqrt{n})$ runtime, even when run on a simple network where one would/should expect $\tilde{o}(\sqrt{n})$-round algorithms to exist.

Our paper uses a different approach to SSSP, which requires us to solve the more-general transshipment problem. On a high level, transshipment asks us to find the minimum cost flow which satisfies a given demand $d \in \R^V$ without any capacity constraints on edges. For example, the $s$-$t$ shortest path corresponds to transshipment with the demand $d(v) = (\1{v=s} - \1{v=t})$. We solve the transshipment problem using \emph{Sherman's framework}~\cite{She17b} which yielded great success in the sequential and parallel settings~\cite{Li20,AndoniSZ20}. In particular, we show that Sherman's framework for $\ell_1$ can be implemented in CONGEST within the shortcut framework. The main technical barrier preventing Sherman's framework from being utilized in the distributed setting was the construction of a crucial object called \emph{(linear) $\ell_1$-oblivious routing (cost approximator)}. Concretely, the $\ell_1$-oblivious routing is a linear operator $R$ (i.e., a matrix) that takes the demand $d \in \R^V$ as input, and outputs a vector $R d$ such that $\norm{R d}_1$ is a good approximation to the (cost of the) optimal solution. All known approaches of constructing this matrix relied on first embedding the graph in $\ell_1$ space and then routing the demand over this space. Unfortunately, while such approaches are well-suited for shared-memory settings, the approach is fundamentally broken in the distributed world where arbitrary data shuffling cannot be done efficiently. In more detail, all known oblivious routing algorithms route the demand of a vertex $v$ to an intermediary vertex that depends on the $\ell_1$-embedding coordinate of $v$; this seems infeasible to do at scale in a message-passing context.

\textbf{Technical contribution.} Our main technical contribution is a new construction of $n^{o(1)}$-approximate $\ell_1$-oblivious routings that is graph-based and can be efficiently implemented in the distributed setting. The construction is simple and relies on routing the demand along several low-diameter decompositions (LDDs) in a bottom-up way. 

Our construction is particularly appealing in that it produces a \emph{flow vector} that routes (i.e., satisfies) the given demand and whose cost is $n^{o(1)}$-competitive with the optimal flow that satisfies the demand. This is in contrast to previous ``oblivious routings'' which routed the demand over the fictional $\ell_1$ space in a way that cannot be easily pulled back to a near-optimal flow in the underlying graph. In other words, previous constructions would be more aptly named \emph{cost approximators}, rather than oblivious routings, since they do not produce a flow in the graph. This can be compared to $(1+\eps)$-approximate maximum flow computation, where many fast algorithms construct a \emph{congestion approximator} which approximates the solution~\cite{She13,Pen16}; it is known that fast construction of an oblivious routing with the same property (i.e., a flow with near-optimal congestion satisfying the demand) is a significantly harder task.

To simplify the presentation of distributed algorithms in the shortcut framework, we introduce a new interface to this framework called the \emph{Distributed Minor-Aggregation model}. On one hand, the interface restricts the operations the nodes can perform in that they can only compute aggregates of adjacent nodes. On the other hand, the Distributed Minor-Aggregation model provides powerful high-level primitives like graph contractions which allow for succinct descriptions of otherwise complicated distributed algorithms. Our Minor-Aggregation model interface is our best attempt at making the most recent advancements in theoretical distributed computing and universal optimality~\cite{HWZ21,ghaffari2021hop} easily accessible. The idea is that any fast algorithm in the Minor-Aggregation model for a vast class of problems can be converted in a black-box way to a universally-optimal distributed algorithm in the standard CONGEST model. This gives rise to the following  informal, but helpful, claim about (non-local) distributed algorithms: \emph{An algorithm can be efficiently distributed if it can be written as only computing aggregates over a minor of the original network}. Our model connects nicely with similar intuitions or (implicit) models that have appeared in the distributed literature (e.g., \cite{GKKLP15,forster2020minor}). The following informal theorem shows how to convert running times from the Minor-Aggregation model to the standard CONGEST model.

\begin{theorem}[Informal, Minor-Aggregation model simulation]\label{theorem:informal-dmam-simulation}
  Any Minor-Aggregation algorithm on $G$ that terminates in $\tau$ rounds for SSSP, transshipment, MST, etc., can be converted into a $\tau \cdot n^{o(1)}$-competitive universally-optimal algorithm in the CONGEST model on $G$. The algorithm is guaranteed to complete in $D \cdot \tau \cdot n^{o(1)}$ CONGEST rounds when $G$ is a planar, genus-bounded, treewidth-bounded, excluded-minor graph, or an $n^{-o(1)}$-expander.
\end{theorem}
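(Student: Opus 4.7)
The plan is to simulate each round of the Minor-Aggregation model in CONGEST via the low-congestion shortcut framework, and then invoke the distributed shortcut construction of \cite{HWZ21} to make this simulation near-optimal. Recall that a single round of Minor-Aggregation on $G$ consists of (i) a contraction step specifying a vertex partition $\pi$ of $G$ whose parts are connected and which defines a minor $M=G/\pi$, together with (ii) an aggregation step where each supernode in $M$ computes an aggregate (sum, min, max, or similar) of the values held by its neighbors in $M$. Both (i) and (ii) reduce, in the underlying graph $G$, to \emph{part-wise aggregation}: each supernode of $M$ corresponds to a connected part, and carrying out the aggregation only requires propagating $O(\log n)$-bit values between nodes within the same part (supernode) and across edges of $G$ crossing two parts. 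Given a low-congestion shortcut for $\pi$ of dilation and congestion at most $Q$, standard pipelining computes all such part-wise aggregates in $\widetilde{O}(Q)$ CONGEST rounds.

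Next, I would invoke the main shortcut-construction result (as used throughout \cite{HWZ21,ghaffari2021hop}): given any partition $\pi$ of the current minor, one can construct, in $Q^*(G)\cdot n^{o(1)}$ CONGEST rounds, a shortcut for $\pi$ whose quality is at most $Q^*(G) \cdot n^{o(1)}$, where $Q^*(G)$ denotes the optimal shortcut quality of $G$. Combining this with the per-round simulation above, each Minor-Aggregation round takes $Q^*(G)\cdot n^{o(1)}$ CONGEST rounds, and the full $\tau$-round algorithm therefore terminates in $\tau \cdot Q^*(G) \cdot n^{o(1)}$ CONGEST rounds. Universal optimality now follows from the matching lower bound that, for each of SSSP, transshipment, MST, approximate min-cut, etc., any correct CONGEST algorithm $\calB$ on $G$ must satisfy $T_{\calB}(G) \ge \widetilde{\Omega}(Q^*(G))$; this is established by the standard shortcut-quality lower bounds (e.g., via the two-part hardness construction together with indistinguishability arguments). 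Hence our running time is $\tau \cdot n^{o(1)} \cdot T_{\calB}(G)$, i.e.\ $\tau \cdot n^{o(1)}$-competitive with any correct algorithm.

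For the special graph classes (planar, bounded genus, bounded treewidth, excluded-minor, or $n^{-o(1)}$-expanders), I would simply cite the existing bounds $Q^*(G) \le D \cdot n^{o(1)}$ from the shortcut literature, which plugged into the above yields $D \cdot \tau \cdot n^{o(1)}$ CONGEST rounds.

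The main obstacle is the efficient \emph{distributed} construction of near-optimal low-congestion shortcuts for arbitrary partitions of $G$; this is the technical workhorse of \cite{HWZ21} and is what we invoke as a black box. A secondary subtlety is that, because the algorithm can contract edges on the fly, the partition $\pi$ is different in each of the $\tau$ rounds and the shortcut must be re-built every round on the current (contracted) minor. Fortunately, the HWZ21 construction is already stated for arbitrary partitions of $G$, and moreover the problem of building a shortcut on a minor $M = G/\pi$ can be handled by working in $G$ with the parts specified by $\pi$, so no additional machinery is needed beyond checking that the additive $n^{o(1)}$ overheads compose across the $\tau$ rounds — which they do, since multiplying by $\tau$ preserves the $\tau \cdot n^{o(1)}$-competitive guarantee.
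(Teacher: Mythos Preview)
Your approach is essentially the one the paper takes: simulate each Minor-Aggregation round via part-wise aggregation using low-congestion shortcuts (the paper's \Cref{thm:dmam}), invoke the distributed PA/shortcut construction (\Cref{thm:pa-simulation}), combine with the $\tilde{\Omega}(\shortcutquality(G))$ lower bound for SSSP/transshipment/MST (\Cref{thm:lower-bound-shortcuts}) to obtain universal optimality, and finally cite the class-specific shortcut-quality bounds (\Cref{theorem:shortcut-on-specific-graphs}) for the $D\cdot\tau\cdot n^{o(1)}$ guarantees.

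One small correction worth flagging: the currently known PA upper bound (from \cite{GHR21}, which the paper cites, rather than \cite{HWZ21}) is $\poly(\shortcutquality(G))\cdot n^{o(1)}$, not $\shortcutquality(G)\cdot n^{o(1)}$ as you write. Consequently, the $\tau\cdot n^{o(1)}$-competitiveness only follows on graphs with $\shortcutquality(G)\le n^{o(1)}$; the paper is explicit about this caveat (see the discussion after \Cref{thm:pa-simulation}). This does not affect the special-graph-class conclusion, since there one has direct $\tilde{O}(D)$ (or analogous) PA algorithms that bypass the $\poly(\cdot)$ loss.
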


\textbf{Formal statements of our results.} With the required preliminaries in place, we can now formally state our results. The following theorems concern $(1+\eps)$-approximate distributed SSSP and transshipment in the Minor-Aggregation model. 
\begin{restatable}{theorem}{thmSSSP}\label{thm:SSSP}\textnormal{(Distributed SSSP).}
  Given an undirected and weighted graph $G$ with edge weights in $[1, n^{O(1)}]$, there exists a distributed algorithm that computes $(1+\eps)$-approximate SSSP in $\eps^{-2} \cdot \exp(O(\log n \cdot \log \log n)^{3/4})$ Minor-Aggregation rounds.
\end{restatable}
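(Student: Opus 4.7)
The plan is to reduce $(1+\eps)$-approximate SSSP to $(1+\eps)$-approximate transshipment and to solve the latter by running Sherman's multiplicative-weights framework inside the Minor-Aggregation model, with all of the heavy lifting done by the graph-based $\ell_1$-oblivious routing $R$ constructed earlier in the paper.

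First I would set up the reduction. For source $s$, consider the transshipment LP with demand $d = (n-1)e_s - \sum_{v\ne s} e_v$. Any $(1+\eps)$-approximate dual solution is a potential function $\phi:V\to\R$ with $\phi(s)=0$ and $\phi(v) \in (1\pm\eps)\dist(s,v)$, and Sherman's framework produces such a $\phi$ alongside a primal $(1+\eps)$-optimal flow. A shortest-path tree is then recovered by a single edge-based aggregation in which each $v\ne s$ selects its parent as a neighbor $u$ minimizing $\phi(u)+w_{uv}$; this is directly supported by the Minor-Aggregation primitives and therefore essentially free. Edge weights in $[1,n^{O(1)}]$ give the standard $\poly(\log n)$ bounds on the dynamic range that Sherman's analysis needs.

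Second I would run Sherman's algorithm with $R$. Given $R$ of quality $\alpha = n^{o(1)}$, Sherman converges to a $(1+\eps)$-approximate transshipment in $\tilde{O}(\alpha^2/\eps^2)$ iterations, where each iteration consists of a few applications of $R$ and $R^T$, coordinate-wise nonlinearities, and multiplications by the signed edge-vertex incidence matrix of $G$. The incidence-matrix operations and the elementwise steps are immediate in the Minor-Aggregation model, so the cost per iteration is dominated by one round-trip through $R$. Third, I would exploit the fact that $R$ is built from a nested sequence of low-diameter decompositions and, crucially, produces an actual flow in $G$ rather than routing through a fictitious $\ell_1$ space. Consequently, each application of $R$ or $R^T$ is a cascade of aggregations on minors of $G$ obtained by contracting LDD clusters level-by-level, with depth and per-level distortion balanced so that the product of (per-application cost) and (iteration count) is at most $\exp(O((\log n\cdot\log\log n)^{3/4}))$. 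Multiplying by the $\eps^{-2}$ factor from Sherman's convergence bound yields the claimed round count.

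The main obstacle I anticipate is verifying that \emph{every} sub-step of Sherman's iteration --- the residual $Rx-d$, the nonlinearity $g(\cdot)$, and the transpose step $R^T g(\cdot)$ --- can be realized purely through aggregations on minors of $G$, and that the LDD hierarchy defining $R$ actually exposes the right sequence of contracted minors for both the forward and the backward pass. The graph-based nature of $R$ is precisely what makes this feasible: each level of the hierarchy corresponds to a contraction of $G$, forward application is a leaves-to-root aggregation on this contraction tree, and $R^T$ is the symmetric root-to-leaves pass, both of which are expressible as $O(1)$ Minor-Aggregation rounds per level. Conversion from the Minor-Aggregation model to the standard CONGEST model then follows from the simulation theorem stated earlier in the introduction.
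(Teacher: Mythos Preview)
Your plan for implementing Sherman's framework with the graph-based $\ell_1$-oblivious routing inside the Minor-Aggregation model is essentially what the paper does to prove the transshipment result, and that part is fine. The gap is in the reduction from transshipment to SSSP.

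You claim that a $(1+\eps)$-approximate dual solution $\phi$ for the demand $d = (n-1)e_s - \sum_{v\ne s} e_v$ satisfies $\phi(v)\in(1\pm\eps)\dist(s,v)$ for every $v$. This is false. Dual feasibility gives $\phi(v)-\phi(s)\le \dist(s,v)$ for all $v$, but the $(1+\eps)$-approximation only says
\[
\sum_{v} \phi(v)\ \ge\ \tfrac{1}{1+\eps}\sum_v \dist(s,v),
\]
i.e., an \emph{aggregate} guarantee. Individual vertices can have $\phi(v)$ far below $\dist(s,v)$, so your ``each $v$ picks the neighbor minimizing $\phi(u)+w_{uv}$'' step does not produce a $(1+\eps)$-approximate tree.

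The paper handles exactly this issue, and it is where most of the SSSP-specific work goes. It invokes Li's ESSSP construction: from the transshipment \emph{primal} flow one samples, for each vertex, a random outgoing edge with probability proportional to flow; the resulting graph may contain cycles, which are found (\Cref{lem:cycle}), contracted, and the process recurses on the minor to depth $O(\log n)$. This yields a tree whose distances are $(1+\eps)$-approximate only \emph{in expectation} over the demand mass. The outer algorithm (\Cref{algo:TStoSSSP}) then iterates $O(\log n)$ times: in each iteration it computes both transshipment potentials and an ESSSP tree, uses the potentials as per-vertex \emph{certificates} to identify the (at least half of the remaining) vertices for which the current tree is already $(1+\eps)$-good, removes them from the target set, and repeats. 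Parent pointers are updated across iterations to stitch together a single tree. None of this machinery appears in your proposal, and without it the per-vertex guarantee does not follow.
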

\begin{restatable}{theorem}{theoremDistributedTransshipment}
  \label{theorem:distributed-transshipment} \textnormal{(Distributed transshipment). }
  Let $G$ be a weighted graph with weights in $[1, n^{O(1)}]$ and let $\eps > 0$ be a parameter. There exists an algorithm that computes $(1+\eps)$-approximate transshipment on $G$ in $\eps^{-2} \cdot \exp(O(\log n \cdot \log \log n)^{3/4})$ Minor-Aggregation rounds.
\end{restatable}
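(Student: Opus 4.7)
The plan is to prove \Cref{theorem:distributed-transshipment} by instantiating Sherman's $\ell_1$ multiplicative-weights framework~\cite{She17b} in the Minor-Aggregation model. The only missing ingredient is a graph-based linear $\ell_1$-oblivious routing operator $R$ of approximation quality $\alpha = n^{o(1)}$ whose matrix-vector products with $R$ and $R^\top$ can be evaluated in $n^{o(1)}$ Minor-Aggregation rounds. Granting such $R$, Sherman's framework reformulates $(1+\eps)$-approximate transshipment as the smoothed problem $\min \{\norm{f}_1 : Bf = d\}$, with $B$ the edge-vertex incidence matrix, and solves it by a gradient-descent-style iteration that terminates in $\tilde{O}(\alpha^2/\eps^2)$ steps, each performing $O(1)$ matrix-vector products with $B, B^\top, R, R^\top$ plus entry-wise vector operations. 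Every such operation is either a minor-aggregation or a local vector update, so the whole algorithm runs in $\eps^{-2} \cdot \exp(O(\log n \cdot \log\log n)^{3/4})$ Minor-Aggregation rounds as claimed, provided the per-step quality is balanced so that $\alpha^2 = \exp(O(\log n \cdot \log\log n)^{3/4})$.

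To construct $R$, I would stack $L$ geometrically spaced low-diameter decompositions (LDDs) of $G$ covering the full range $[1, n^{O(1)}]$ of edge weights. At scale $i$, compute a randomized LDD with cluster strong diameter $O(r_i \log n)$ in which each edge $e$ is cut with probability $O(w_e \log n / r_i)$, and pick one root per cluster. Linking the scale-$i$ roots to the scale-$(i{+}1)$ cluster that contains them yields a laminar family of graph minors. Define $Rd$ to be the flow produced by pushing $d$ upward level by level: inside each scale-$i$ cluster, aggregate the residual demand to the cluster root along an in-cluster spanning tree, and recurse on the aggregated residuals at the next coarser minor. A telescoping argument shows that $\norm{Rd}_1$ is $\alpha$-competitive with the optimum transshipment cost of $d$, and balancing $L$ against the per-level blow-up yields $\alpha = n^{o(1)}$ of the exact shape required for the $3/4$ exponent. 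Crucially, every layer of this construction is a pure aggregation over a laminar family of minors---the native primitive of the Minor-Aggregation model---so a single multiplication by $R$ (resp.\ $R^\top$) costs $O(L) = \polylog n$ rounds.

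The main obstacle is twofold. First, one must certify the approximation quality: each LDD loses $O(\log n)$ factors in both cluster cutting and cluster diameter, so a naive chain telescopes to $(\log n)^L$, which is too large; a careful choice of $L$, together with a chained upper bound that tracks residual demands across levels and a matching lower bound via the random LDD cutting probabilities, is required to extract the target $\alpha$. Second, and more delicately, $R$ must produce a genuine feasible flow in $G$ rather than a mere cost estimate, so that the primal iterates maintained by Sherman's framework can be read off directly as a feasible near-optimal transshipment flow without any post-processing that would require non-local data movement---precisely the feature highlighted in the introduction as the core technical innovation. Verifying feasibility at every level of the LDD hierarchy, and ensuring the associated flow can be represented and manipulated using only minor-aggregations, is the heart of the proof.
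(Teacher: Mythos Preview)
Your high-level plan---instantiate Sherman's framework in the Minor-Aggregation model once you have an $\ell_1$-oblivious routing $R$ with $n^{o(1)}$ approximation and $n^{o(1)}$-round matrix-vector products---matches the paper exactly. The gap is in your construction of $R$.

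You propose a single hierarchy of geometrically spaced LDDs, one decomposition per scale, glued into a laminar family; routing a demand up this hierarchy defines $Rd$. This is precisely the ``naive'' construction the paper discusses (and rejects) in its guided tour: a single LDD hierarchy is essentially a random tree embedding, and for any \emph{fixed} demand $d$ it gives $\E[\norm{WRd}_1] \le \alpha\,\norm{d}_\opt$. But an oblivious routing must be a \emph{fixed} linear map that is $\alpha$-competitive for \emph{every} demand simultaneously. A single sampled hierarchy has no such guarantee; the bad demands depend on the realized partitions. Your proposal never addresses concentration, and the phrase ``a matching lower bound via the random LDD cutting probabilities'' does not help here---the lower bound $\norm{d}_\opt \le \norm{WRd}_1$ is automatic because $Rd$ is a feasible flow; it is the upper bound that fails. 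The paper's fix is to sample, at each level $i$, not one but $g = \tilde{O}(\rho)$ independent LDDs of radius $\rho^i$ and \emph{average} the induced routings; combined with a carefully chosen potential $\Phi_i = \norm{d_i}_\opt + \rho^i \norm{d_i}_1$ and only $\imax = O(\log n)^{1/4}$ levels (with $\rho = \exp(O(\log n \log\log n)^{3/4})$), a Chernoff argument gives concentration for every pair-demand, and linearity plus flow decomposition extends it to all demands. This averaging is also why evaluating $R$ costs $n^{o(1)}$ rounds, not the $\polylog n$ you claim.

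Two smaller issues. First, you assume $O(\log n)$-quality LDDs, but the LDDs available in the Minor-Aggregation model (via \cite{HL18}) have quality only $\exp(O(\sqrt{\log n \log\log n}))$; this is what forces the parameter balance yielding the $3/4$ exponent. Second, Sherman's iteration (Lemma~\ref{lem:mainTransshipment}) returns a flow $f$ with $\norm{Bf - d}_\opt \le \eps'\norm{d}_\opt$, not $Bf = d$; the paper runs the framework with $\eps' = \eps/(2\alpha)$ and then routes the residual $d - Bf$ once more through $R$ to obtain an exactly feasible $(1+\eps)$-approximate flow. This last step is where it is essential that $R$ is a genuine routing and not merely a cost approximator.
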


Combining the Minor-Aggregation model result with the simulation of \Cref{theorem:informal-dmam-simulation}, we immediately obtain the following set of results:
\begin{corollary}[SSSP and transshipment in CONGEST]
  Let $G$ be a weighted graph with weights in $[1, n^{O(1)}]$. We can solve both the $\left(1+\frac{1}{n^{o(1)}}\right)$-approximate SSSP and $\left(1+\frac{1}{n^{o(1)}}\right)$-approximate transshipment in $\opt(G) \cdot n^{o(1)}$ CONGEST rounds, where $\opt_G$ is the runtime of the fastest possible correct algorithm that works on $G$. Moveover, this algorithm is guaranteed to complete in $D \cdot n^{o(1)}$ CONGEST rounds if $G$ is a planar, genus-bounded, treewidth-bounded, minor-free graph, or an $n^{-o(1)}$-expander.
\end{corollary}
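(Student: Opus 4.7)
The proof is essentially a direct composition of the two main results stated just above, together with a careful choice of $\eps$, so the plan is brief.

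First, I would choose $\eps = 1/n^{o(1)}$ specifically so that $\eps^{-2}$ remains $n^{o(1)}$ while still yielding the desired approximation factor $1 + 1/n^{o(1)}$. The next step is to verify that the "inner" runtime from \Cref{thm:SSSP} and \Cref{theorem:distributed-transshipment} is subpolynomial: since
\[
(\log n \cdot \log\log n)^{3/4} = (\log n)^{3/4} \cdot (\log\log n)^{3/4} = o(\log n),
\]
we get $\exp(O(\log n \cdot \log\log n)^{3/4}) = n^{o(1)}$. Combining with the choice of $\eps$, the Minor-Aggregation round complexity $\tau \defeq \eps^{-2} \cdot \exp(O(\log n \cdot \log\log n)^{3/4})$ is $n^{o(1)}$.

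Now I would invoke \Cref{theorem:informal-dmam-simulation} as a black box. For the universal-optimality claim, the simulation converts the $\tau$-round Minor-Aggregation algorithm into a CONGEST algorithm whose runtime on any network $G$ is at most $\tau \cdot n^{o(1)}$ times the optimum CONGEST runtime $\opt(G)$ for the problem, which yields the $\opt(G) \cdot n^{o(1)}$ bound. For the special graph-class claim, the same simulation theorem guarantees a worst-case bound of $D \cdot \tau \cdot n^{o(1)}$ CONGEST rounds on any planar, genus-bounded, treewidth-bounded, excluded-minor, or $n^{-o(1)}$-expander graph, which equals $D \cdot n^{o(1)}$ since $\tau = n^{o(1)}$.

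The only nontrivial content is the verification that the specific subpolynomial factor $\exp(O(\log n \cdot \log\log n)^{3/4})$ multiplies cleanly with $\eps^{-2}$ without escaping $n^{o(1)}$; no graph-theoretic work is needed at this stage since both the Minor-Aggregation algorithm and the simulation theorem have already done the heavy lifting. There is no real obstacle — the corollary is essentially a bookkeeping statement assembling the paper's two headline theorems.
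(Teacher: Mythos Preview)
Your proposal is correct and matches the paper's approach exactly: the paper states the corollary as an immediate consequence of combining the Minor-Aggregation results (\Cref{thm:SSSP} and \Cref{theorem:distributed-transshipment}) with the simulation \Cref{theorem:informal-dmam-simulation}, without giving a separate proof. Your additional bookkeeping about the choice of $\eps$ and the verification that $\exp(O(\log n \cdot \log\log n)^{3/4}) = n^{o(1)}$ is precisely the content implicit in the paper's word ``immediately.''
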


The above set of results all rely on the distributed evaluation of $\ell_1$-oblivious routing, which we state below for reference.
\begin{restatable}{theorem}{theoremDistributedEvaluationRouting}
  \label{theorem:distributed-evaluation-routing}\textnormal{(Distributed oblivious routing evaluation).}
  Let $G = (V, E)$ be a weighted graph with weights in $[1, n^{O(1)}]$. There exists an $\exp(O(\log n \cdot \log \log n)^{3/4})$-approximate $\ell_1$-oblivious routing $R$ such that we can perform the following computations in $\exp(O(\log n \cdot \log \log n)^{3/4})$ rounds of Minor-Aggregation. Given any distributedly stored node vector $d \in \R^V$ and edge vector $c \in \R^V$, we can evaluate and distributedly store $R d \in \R^{\vec{E}}$ and $R^T c \in \R^V$.
\end{restatable}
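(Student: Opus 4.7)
The plan is to construct $R$ from a hierarchy of low-diameter decompositions (LDDs) taken at geometrically increasing distance scales. First I would fix scales $2^0, 2^1, \ldots, 2^{O(\log n)}$ and, for each scale $i$, compute an LDD $\mathcal{L}_i$ that partitions $V$ into clusters of weak diameter $\tilde{O}(2^i)$ with the guarantee that every edge $e$ is separated by $\mathcal{L}_i$ with probability at most $O(w(e)/2^i)$; a distinguished center is chosen in every cluster. The operator $R$ is then defined bottom-up: a unit of demand at $v$ is aggregated toward the center of $v$'s level-$0$ cluster along a fixed shortest-path tree inside that cluster, then toward the center of $v$'s level-$1$ cluster, and so on. The edge-flows generated across all $O(\log n)$ levels are superposed to form $Rd$. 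Because each level merely rearranges mass within a bounded-diameter piece while conserving the net demand, $Rd$ is a valid flow routing $d$, not just a cost estimate—this is what distinguishes the construction from previous cost-approximator-style ``oblivious routings''.

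For the approximation analysis I would follow the standard Räcke-style oblivious-routing template: for any demand $d$, expand $\|Rd\|_1 = \sum_e c(e) \cdot |(Rd)(e)|$ and bound each level's contribution. A pair of endpoints $u,v$ contributes to level $i$ only when their level-$i$ clusters differ, which by the cutting guarantee occurs with probability $O(\dist(u,v)/2^i)$, and in that event the routing cost is $\tilde{O}(2^i)$; summing over levels yields $\mathbb{E}\|Rd\|_1 \le \tilde{O}(\log n) \cdot \opt(d)$. To produce a \emph{deterministic, linear} operator with the quoted worst-case bound, I would average $n^{o(1)}$ independent LDD draws per scale and invoke a concentration/derandomization argument so that the resulting fixed matrix $R$ is $\exp(O(\log n \cdot \log \log n)^{3/4})$-competitive on every demand simultaneously.

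For the distributed evaluation my plan leans on two facts. First, a single LDD can be constructed in $\tilde{O}(1)$ Minor-Aggregation rounds via the Miller–Peng–Xu shifted-exponential procedure, which requires only broadcasting start-times within balls of the sampled radius and taking minima—both native Minor-Aggregation operations. Second, once the hierarchy $\{\mathcal{L}_i\}$ is stored, evaluating $Rd$ becomes $O(\log n)$ part-wise aggregations: at each level the graph is contracted along $\mathcal{L}_i$ (a minor of $G$), the surviving demand is summed into each cluster center, and the per-level flow on the internal tree edges is emitted. Evaluating $R^T c$ reverses the direction of each aggregation, turning the same hierarchy into a top-down broadcast of aggregated edge costs back to the nodes. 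Thus both $Rd$ and $R^T c$ are computable within the Minor-Aggregation round budget, with the overall complexity dominated by constructing the $n^{o(1)}$ independent LDDs per scale.

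The hardest step, I expect, is the approximation analysis rather than the distributed piece. A careless per-level bound already loses a $\log n$ factor, and a further $\log n$ from summing over scales only yields a polylogarithmic ratio in expectation; promoting this to a \emph{single deterministic linear} operator with a worst-case $n^{o(1)}$ guarantee requires a delicate quantitative trade-off between the per-scale LDD parameters (cluster radius versus cutting probability) and the number of independent LDDs one averages per scale—the $\exp(O(\log n \cdot \log \log n)^{3/4})$ figure is precisely the balance point between these two costs, and matching it in both the approximation ratio and the round complexity is the core technical challenge. A secondary subtlety is ensuring that the flows produced at different levels compose linearly into a bona-fide matrix $R$ so that $R^T$ is meaningful and can be evaluated by the same hierarchy with reversed aggregations.
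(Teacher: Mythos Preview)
Your high-level plan---an LDD hierarchy with bottom-up routing to cluster centers, averaged over independent samples for concentration---matches the paper, but two concrete steps would fail as written.

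First, you cannot obtain an $O(\log n)$-quality LDD of radius $\rho$ in $\tilde O(1)$ Minor-Aggregation rounds on a \emph{weighted} graph with polynomial weights. The MPX shifted-exponential procedure needs to grow balls in \emph{weighted} distance up to $\rho$, which is essentially the shortest-path problem you are trying to solve. The paper instead invokes a recursive LDD construction (from~\cite{HL18}, \Cref{theorem:distributed-LDD-sampling}) that runs in $\exp(O(\sqrt{\log n \log\log n}))$ rounds and achieves only quality $\alpha = \exp(O(\sqrt{\log n \log\log n}))$; this worse quality then drives all downstream parameter choices.

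Second, with scales $2^0, 2^1, \ldots$ (i.e.\ $O(\log n)$ levels of ratio $2$), averaging $n^{o(1)}$ LDDs per level does \emph{not} concentrate for all demands. At a high level $i$, a pair at distance $\ell \ll 2^i$ contributes $\Theta(2^i)$ when separated and $0$ otherwise, with mean $O(\ell\log n)$; the max-to-mean ratio $2^i/\ell$ can be polynomial, so Chernoff would require polynomially many samples (the paper explicitly notes this obstruction in the discussion preceding \Cref{theorem:oblivious-routing-approximation}). The paper's fix is two coupled ideas: use only $\imax = O(\log n)^{1/4}$ levels with radii $\rho, \rho^2, \ldots$ for $\rho = \exp(O(\log n \log\log n)^{3/4})$, and track the potential $\Phi_i = \norm{d_i}_{\opt} + \rho^i \norm{d_i}_1$, which artificially inflates the level-$i$ baseline to at least $\rho^{i-1}$ so that the per-level max-to-mean ratio is always $O(\rho)$. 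Then $\tilde O(\rho)$ samples per level suffice for high-probability concentration on every pair-demand (and hence, by linearity, on every demand), and the cumulative blow-up over $\imax$ levels is $(O(\alpha))^{\imax} = n^{o(1)}$. This potential-function trick is the missing idea; without it your ``average $n^{o(1)}$ draws'' step does not go through.
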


\subsection{Related Work}
Techniques from continuous optimization have brought breakthroughs for many problems in combinatorial optimization. Sherman \cite{She13} obtained an approximate max-flow algorithm in undirected graphs with time complexity $O(m^{1+o(1)}\eps^{-2})$ by constructing an $n^{o(1)}$-congestion approximator \cite{Mad10} in nearly linear time. Peng~\cite{Pen16} improved the run-time to $O(m\mathrm{polylog}(n))$ by solving the approximate max-flow and constructing the cut-based hierarchical tree decompositions \cite{RST14} recursively, and using ultra-sparsifiers \cite{KMP10,ST14} to reduce the graph size. Sherman \cite{She17b} designed a generalized preconditioner and applied that to produce an approximation algorithm for uncapacitated min-cost flow on undirected graphs with time complexity $O(m^{1+o(1)}\eps^{-2})$. Kelner et al. \cite{KLOS14} solved the approximate maximum concurrent multicommodity flow problem with $k$ commodities in time $O(m^{1+o(1)}k^2\eps^{-2})$ by an $O(m^{o(1)})$-competitive oblivious routing scheme. Sherman \cite{She17} improved this time complexity to $O(mk\eps^{-1})$ via the proposed area-convex regularization. In the distributed settings, Ghaffari et al. \cite{GKKLP15} proposed an $(1+o(1))$-approximate algorithm that takes $(\sqrt{n}+D)n^{o(1)}$ rounds for max-flow in undirected weighted networks, where $D$ is the hop-diameter of the communication network. 


For the directed min-cost flow problem, Daitch and Spielman \cite{DS08} provided an algorithm that runs in time $\widetilde{O}(m^{3/2}\log^2{U})$ by solving the linear equations efficiently; this time complexity was improved to $\widetilde{O}(m\sqrt{n}\log^2{U})$ by Lee and Sidford \cite{LS14} by virtue of the new algorithm for solving linear programs. In terms of transshipment which is a generalization of the min-cost flow problem, Sherman \cite{She17b} proposed an $(1+\varepsilon)$-approximate algorithm with time complexity $m^{1+o(1)}\varepsilon^{-2}$ for uncapacitated and undirected min-cost flow problem using a generalized preconditioner obtained by $\ell_1$ embedding and a hierarchical routing scheme in $\ell_1$; Li \cite{Li20} proposed a parallel algorithm with $m\mathrm{polylog}(n)\varepsilon^{-2}$ work and $\mathrm{polylog}(n)\varepsilon^{-2}$ time for the transshipment problem in \cite{She17b} by utilizing the multiplicative weight update method to solve linear programs. A crucial ingredient tying together these approaches is a property of transshipment that solvers that return an approximate dual solution can be \emph{boosted} to an $(1 + \eps)$-approximate solver~\cite{zuzic2021simple}.

In the CONGEST model, substantial progress has been made on the SSSP problem in the past decade. For undirected graphs, Lenzen and Patt-Shamir \cite{LP13} proposed an $O(\varepsilon^{-1}\log{\varepsilon^{-1}})$-approximation algorithm that runs in $\widetilde{O}(n^{1/2+\varepsilon}+D)$ rounds for the weighted SSSP problem. This approximation factor was improved by \cite{Nan14}, in which Nanongkai presented an $\widetilde{O}(\sqrt{n}D^{1/4}+D)$-round $(1+o(1))$-approximation algorithm. Elkin and Neiman \cite{EN16} gave an $(1+\varepsilon)$-approximation algorithm that runs in $(\sqrt{n}+D)2^{\widetilde{O}(\sqrt{\log{n}})}$ rounds using their hopsets. Henzinger et al. \cite{HKN16} proposed a deterministic $(1+o(1))$-approximation algorithm that takes $n^{1/2+o(1)}+D^{1+o(1)}$ rounds. Haeupler and Li \cite{HL18} proposed the first approximation algorithm for SSSP that adjusts to the topology of the communication network, and gave a $\widetilde{O}(\shortcutquality(G)n^{o(1)})$-round algorithm with an approximation factor of  $n^{o(1)}$, where $\shortcutquality(G)$ is the quality of the best shortcut that can be constructed efficiently for the given topology. Ghaffari and Li \cite{GL18b} designed distributed algorithms that run in $\tau_{mix}\cdot 2^{O(\sqrt{\log{n}})}$ rounds for $(1+\eps)$-approximate SSSP and transshipment, where $\tau_{mix}$ is the mixing time of the graph. More generally, \cite{GL18b} proved that each parallel algorithm with efficient work that takes $\tau$ rounds can be simulated on any (expander) graph in $\tau_{mix}\cdot 2^{O(\sqrt{\log{n}})}\cdot \tau$ rounds in the CONGEST model. 
Besides these approximate algorithms, Elkin \cite{Elk17} devised an exact SSSP algorithm for undirected graph that requires $O((n\log{n})^{5/6})$ rounds when $D=O(\sqrt{n\log{n}})$, and $O((n\log{n})^{2/3}D^{1/3})$ rounds for larger $D$. 

For the directed SSSP problem, Ghaffari and Li \cite{GL18} presented an $\widetilde{O}(n^{3/4}D^{1/4})$-round algorithm, and an improved algorithm for large values of $D$ with round complexity $\widetilde{O}(n^{3/4+o(1)}+\min\{n^{3/4}D^{1/6}, D^{6/7}\}+D)$. Forster and Nanongkai \cite{FN18} gave two randomized algorithms that take $\widetilde{O}(\sqrt{nD})$ and $\widetilde{O}(\sqrt{n}D^{1/4}+n^{3/5}+D)$ rounds respectively, and obtained an $(1+\eps)$-approximation algorithm with  $\widetilde{O}((\sqrt{n}D^{1/4}+D)\varepsilon^{-1})$ rounds. Chechik and Mukhtar \cite{CM20} proposed a randomized algorithm that takes $\widetilde{O}(\sqrt{n}D^{1/4}+D)$ rounds. Cao et al. \cite{CFR21} provided an $(1+\varepsilon)$-approximation algorithm with round complexity $\widetilde{O}((\sqrt{n}+D+n^{2/5+o(1)}D^{2/5})\varepsilon^{-2})$. Censor-Hillel et al. \cite{CLP21} gave an exact SSSP algorithm with $(n^{7/6}/m^{1/2}+n^2/m)\cdot\tau_{mix}\cdot 2^{O(\sqrt{\log{n}})}$ rounds in the CONGEST model. 

Besides CONGEST model, SSSP problem was also studied in the Congested Clique model~\cite{Nan14,HKN16, CHDKL20}, Broadcast CONGEST model~\cite{BKKL17}, Broadcast Congested Clique model \cite{BKKL17}, and Hybrid model~\cite{AHKSS20,FHS20,KS20,CLP21}.

In addition, the all-pairs shortest paths (APSP) problem was also studied in various distributed models~\cite{HW12,CHKKLPS15,HP15,Le16,Elk17,HNS17,ARKP18,CHLT18,PR18,AR19,BN19,AHKSS20,AR20,CHDKL20,DP20,KS20,CLP21,DFKL21}. 

\section{Preliminaries}\label{sec:prelims}

\textbf{Graph Notation} Let $G = (V, E)$ be a simple undirected graph. We denote with $n := |V|$ the number of nodes, with $m := |E|$ the number of edges, and with $D$ the hop-diameter of $G$.
It is often convenient to direct $E$ consistently. For simplicity and without loss of generality, we assume that the vertices $V = \{ v_1, \ldots, v_n \}$ are numbered from $1$ to $n$, and we define $\vec{E} = \{ (v_i,v_j) \mid (v_i, v_j) \in E, i < j\}$. We identify $E$ and $\vec{E}$ by the obvious bijection.
We denote with $B \in \{-1, 0, 1\}^{V \times \vec{E}}$ the node-edge incidence matrix of $G$, which for any $e = (s,t) \in \vec{E}$ assigns $B_{s,e} = 1$, $B_{t,e} = -1$, and $B_{u,e} = 0$ for all other $u \in V$.
In this paper, we typically assume the graph is weighted and the weights are polynomially bounded. For this, a weight or length function $w$ assigns each edge $e \in E$ a weight $w_e \in [1,n^{O(1)}]$. The weight function can also be interpreted as 
a diagonal \emph{weight matrix} $W \in [0,n^{O(1)}]^{E \times E}$ which assigns $W_{e,e} = w_e \ge 1$ for any $e \in \vec{E}$ (and $0$ on all off-diagonal entries).

\textbf{Flows and Transshipment}\label{sec:prelim-transshipment}
A \emph{demand} is a $d \in \R^{V}$. We say a demand is \emph{proper} if $\one^T d=0$. A \emph{flow} is a vector $f \in \R^{\vec{E}}$. A flow $f$ \emph{routes} demand $d$ if $Bf = d$. It is easy to see only proper demands are routed by flows. The cost $W(f)$ of a flow $f$ is $\norm{Wf}_1$.
For a weighted graph $G$ and a given proper demand $d$ the \emph{transshipment problem} asks to find a flow $f_d^*$ of minimum cost among flows that route $d$. When the underlying graph is clear from the context, we let $\norm{d}_\opt := W(f_d^*)$ denote the cost of the optimal flow for routing demand $d$. For any $\alpha \geq 1$, we say a flow $f$ is $\alpha$-\emph{approximate} if $W(f) \leq \alpha \cdot \norm{d}_\opt$ (such a flow does not necessarily need to route $d$).
The transshipment problem naturally admits the following convex-programming formulation:
\begin{align}
  \min~ \norm{Wf}_1 : Bf = d, \label{eq:transshipmentLP}
\end{align}
and its dual:
\begin{align}
  \max~ d^{\top} \potential : \norm{W^{-1} B^\top \phi}_{\infty} \leq 1. \label{eq:transshipmentDualLP}
\end{align}
The entries in the vector $\phi \in \R^{V}$ are generally referred to as vertex \emph{potentials}.

\textbf{Distributed Algorithm (i.e., CONGEST Model).} The distributed algorithms designed in this paper are message-passing algorithms following the standard CONGEST~\cite{peleg2000distributed} model of distributed computing. A network is given as an undirected graph $G = (V, E)$ in which nodes are individual computational units (i.e., have private memory and do their own computations). Communication between the nodes occurs in synchronous rounds. In each round, each pair of nodes adjacent in $G$ exchanges an $O(\log n)$-bit message. Nodes perform arbitrary computation between rounds. Initially, nodes only know their unique $O(\log n)$-bit ID and the IDs of adjacent nodes as well as the weights of incident edges, for problems like SSSP with weights as input. 

\textbf{Asymptotic Notation.}
We use $\tilde{O}$ to hide polylogarithmic factors in $n$, i.e., $\tilde{O}(1) = \polylog n $ and $\tilde{O}(f(n,D)) = O(f(n,D) \cdot  \polylog n)$. We use the term \emph{with high probability} to denote success probabilities of at least $1 - n^{-C}$ where $C > 0$ is a constant that can be chosen arbitrarily large.


\subsection{Low-Congestion Shortcut Framework and Part-wise Aggregation}

Our universally-optimal distributed algorithms build on the \emph{low-congestion shortcut framework}~\cite{GH16,haeupler2016low,haeupler2016near,KKOI19,HWZ21} which identifies the \emph{part-wise aggregation task} as the crucial communication task in many optimization problems. The framework also introduced shortcuts as a near-optimal way of solving this communication problem. We summarize the facts needed in this paper and refer for more details to \cite{HWZ21}.

\textbf{Aggregations.} Let $\bigoplus$ be some function that combines two $O(\log n)$-bit messages into one (e.g., sum or max). We call such an $\bigoplus$ an \emph{aggregation operator}. Given messages $x_1, x_2, \ldots, x_k$, their \emph{aggregate} $\bigoplus_{i=1}^k x_i$ is the resulting message after iteratively taking two arbitrary messages $m', m''$, deleting them, and inserting $m' \bigoplus m''$ into the sequence until a single message remains (e.g., the sum-aggregation of $x_1, \ldots, x_k$ is $x_1+\ldots+x_k$). Throughout this paper, $\bigoplus$ will be commutative and associative, making the value $\bigoplus_{i\in I} x_i$ unique and well-defined even if the order under which the messages are aggregated changes. However, it is often useful to consider more general aggregations, allowing us to use any \emph{mergeable $O(\log n)$-bit sketches}~\cite{agarwal2013mergeable} as an aggregation operator (examples of such operators include approximate heavy hitters, quantile estimation, frequency estimations, random sampling, etc.).

\textbf{Part-wise aggregation (PA).} The PA task is a central task used to solve many distributed operations problems. A formal definition follows.
\begin{definition}[Part-wise aggregation (PA) task]
  \label{def:pa-task}
  Suppose that the nodes $V$ of a graph $G = (V, E)$ are subdivided into a set of \textbf{connected} and \textbf{node-disjoint} parts $P_1, \ldots, P_k$. Initially, each node $v$ chooses a private $O(\log n)$-bit input $x_v$ and the task is to compute, for each part $P_i$, the aggregate over all private inputs belonging to that part, i.e., $\bigoplus_{v \in P_i} x_v$ (all nodes learn the same value). 
\end{definition}

Low-congestion shortcuts near-optimally solve the PA task. Specifically, \cite{GH16, HWZ21} define a function $\shortcutquality(\cdot)$ which assigns any undirected and unweighted graph $G$ a positive integer $\shortcutquality(G)$ which characterizes (both as an upper and lower bound) the distributed complexity of solving PA. The lower bound is based on the network coding gaps of \cite{HWZ20, HWZ21}, while the upper bound is based on the very recent hop-constrained oblivious routings and efficient hop-constrained expander decompositions~\cite{ghaffari2021hop, GHR21}.

\begin{theorem}\label{thm:pa-simulation}
  Suppose $A$ is a distributed CONGEST algorithm solving the part-wise aggregation task. For any $G$, the running time of $A$ on $G$ is at least $\tilde{\Omega}(\shortcutquality(G))$ rounds~\cite{HWZ21}. Moreover, there is a randomized CONGEST algorithm that solves any PA instance in $\poly(\shortcutquality(G)) \cdot n^{o(1)}$ rounds~\cite{GHR21}. Specifically, if $\shortcutquality(G) \le n^{o(1)}$, then we can solve PA in $n^{o(1)}$ rounds.
\end{theorem}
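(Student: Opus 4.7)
The plan is to prove the lower and upper bounds separately, treating the theorem as a round-complexity characterization of part-wise aggregation via $\shortcutquality(G)$.

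For the lower bound, the starting point is that $\shortcutquality(G)$ is defined combinatorially as (essentially) the best quality achievable by a low-congestion shortcut over a worst-case partition of $V$ into connected parts. I would take any CONGEST algorithm $A$ claiming to solve PA in $T$ rounds, choose the adversarial partition $\{P_1,\ldots,P_k\}$ and inputs $x_v$ achieving the combinatorial optimum, and view PA as a multi-message multicast task: each part $P_i$ must collectively compute a function of its inputs, which information-theoretically forces a ``routing'' of one bit of entropy per part through the subgraph induced on $P_i$ (or extensions thereof). Invoking the network coding gap of \cite{HWZ20,HWZ21}, any $T$-round CONGEST schedule simulating this multi-message task yields a collection of connected subgraphs witnessing a shortcut of quality $\tilde{O}(T)$, so $T = \tilde{\Omega}(\shortcutquality(G))$.

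For the upper bound, I would proceed in two conceptually separate steps. First, \emph{construct} a near-optimal shortcut distributedly. Given the partition specified at runtime, I invoke the distributed construction of \cite{GHR21}, which uses hop-constrained expander decompositions together with the hop-constrained oblivious routings of \cite{ghaffari2021hop}: this produces, in $\poly(\shortcutquality(G)) \cdot n^{o(1)}$ CONGEST rounds, subgraphs $H_1,\ldots,H_k$ with $P_i \subseteq V(H_i)$ and with combined congestion plus hop-dilation at most $\shortcutquality(G) \cdot n^{o(1)}$. Second, \emph{use} the shortcut to solve PA: in each $H_i$ the nodes locally build a BFS tree and pipeline $\bigoplus$ up and down the tree (which is legal because $\bigoplus$ is commutative and associative and acts on $O(\log n)$-bit sketches), and all parts run in parallel. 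Standard pipelining across the union of the $H_i$ yields a running time of $\tilde{O}(\shortcutquality(G)) \cdot n^{o(1)}$ rounds, and summing with the construction cost gives the claimed $\poly(\shortcutquality(G)) \cdot n^{o(1)}$ bound; the ``specifically'' clause is immediate from $\poly(n^{o(1)}) = n^{o(1)}$.

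The main obstacle is the shortcut-construction step: one must build a near-optimal shortcut for an arbitrary runtime-specified partition using only CONGEST communication on the original graph $G$, without knowing $\shortcutquality(G)$ in advance. This is the technical heart of \cite{GHR21,ghaffari2021hop}, which I would import as a black box via the hop-constrained expander decomposition and oblivious routing machinery; the actual aggregation via pipelining is a standard exercise once the shortcut is in hand, and the lower bound is essentially forced by the very definition of $\shortcutquality(G)$ together with the network coding gap.
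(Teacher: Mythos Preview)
The paper does not give its own proof of this theorem; it is stated as a compilation of prior results, with the lower bound attributed to \cite{HWZ21} and the upper bound to \cite{GHR21}, and no further argument is provided. Your proposal is a reasonable high-level reconstruction of the arguments in those references---the network coding gap for the lower bound and the black-box shortcut construction via hop-constrained expander decompositions and oblivious routings, followed by standard pipelining on the constructed shortcuts, for the upper bound---and is consistent with the paper's intent.

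One small caution on the lower-bound sketch: the way you phrase it, a $T$-round PA algorithm would directly ``yield a collection of connected subgraphs witnessing a shortcut of quality $\tilde{O}(T)$.'' The actual argument in \cite{HWZ21} is more indirect: it does not extract shortcuts from the algorithm's execution, but rather shows (via a distributed completion/network-coding-gap argument) that solving a certain hard multicast instance in $T$ rounds on $G$ forces the combinatorial shortcut quality of $G$ to be $\tilde{O}(T)$. The distinction does not affect the conclusion, but your phrasing suggests a constructive extraction that is not what happens.
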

We remark that on some special graph classes like excluded-minor graphs there exist algorithms that lose only $\tilde{O}(1)$ factors on the upper bound and can be made deterministic.~\cite{haeupler2016low,ghaffari2020low}

What makes part-wise aggregation and shortcuts so powerful is that they relate to many important (non-local) algorithmic problems, including, min-cut, MST, connectivity, etc. In particular, $\shortcutquality(G)$ is a lower bound on many important optimization problems, including all shortest-path problems considered in this paper.

\begin{theorem}[\cite{HWZ21}]\label{thm:lower-bound-shortcuts}
Suppose $A$ is a (correct) distributed algorithm for SSSP or transshipment (or MST, min-cut, max-flow, etc.) with a sub-polynomial approximation ratio. For any $G$, the running time of $A$ on $G$ is at least $\tilde{\Omega}(\shortcutquality(G))$ rounds.
\end{theorem}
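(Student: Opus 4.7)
The approach is to reduce the part-wise aggregation (PA) task on $G$ to each of the listed optimization problems on a graph $G'$ obtained from $G$ by attaching a constant number of auxiliary ``gadget'' nodes and edges per vertex, and then invoke the PA lower bound in \Cref{thm:pa-simulation}. Because $G'$ differs from $G$ only by local modifications of bounded size, standard arguments show that both the hop-diameter and the shortcut quality are preserved up to $\tilde{O}(1)$ factors, i.e., $\shortcutquality(G') = \tilde{\Theta}(\shortcutquality(G))$; moreover, the reduction itself and the post-processing cost only $\tilde{O}(1)$ additional CONGEST rounds on $G$. Hence a $T$-round $\Pi$-algorithm on $G'$ for a problem $\Pi$ in the list would yield a $(T + \tilde{O}(1))$-round PA algorithm on $G$, forcing $T \ge \tilde{\Omega}(\shortcutquality(G))$.

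\textbf{The reductions.} For MST and min-cut, I would use the classical ``PA-to-MST'' and ``PA-to-min-cut'' gadgets (as employed in the $\sqrt{n}$-round lower bound of Das Sarma \etal and subsequently generalized inside the shortcut framework of \cite{HWZ21}): encode the private input $x_v$ of every vertex $v \in P_i$ as the weight of a dedicated auxiliary edge incident to $v$, and glue the parts together with gadgets designed so that the optimum MST weight (resp.\ min-cut value) reveals $\bigoplus_{v \in P_i} x_v$ for every part $P_i$. For SSSP/transshipment and max-flow, I would attach a super-source/super-sink via weighted auxiliary edges to one designated representative of each part, and encode the $x_v$ values as edge weights or capacities inside the part, so that the optimal distance (resp.\ max-flow value) to or from each representative encodes the aggregate. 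In each case, having all nodes learn the optimal objective value (as required by a correct distributed algorithm) solves PA on $G$.

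\textbf{Handling sub-polynomial approximation.} The main subtlety and principal obstacle is ensuring that the reduction remains valid under a sub-polynomial approximation factor $\alpha = n^{o(1)}$. This is handled by amplifying the gadget gaps: encode the PA inputs at widely separated weight scales $1, n^c, n^{2c}, \ldots$ for a suitable constant $c$, exploiting the freedom of weights in $[1, n^{O(1)}]$, so that even an $\alpha$-approximate value of the optimization objective uniquely determines the exact PA answer bit-by-bit. Reading out $O(\log n)$ bits may require running the approximate solver on $O(\log n)$ different weight scalings, each of which adds only polylogarithmic overhead, leaving the final lower bound at $\tilde{\Omega}(\shortcutquality(G))$. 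The only nontrivial case-by-case work is verifying, for each specific problem $\Pi$ in the list, that the polynomial weight/capacity separation in the gadget indeed survives any $n^{o(1)}$-approximation; this follows the template developed in \cite{HWZ21}, whose network-coding-gap machinery is precisely what underlies the PA lower bound of \Cref{thm:pa-simulation} in the first place.
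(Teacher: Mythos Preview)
The paper does not prove this theorem; it is quoted from \cite{HWZ21} as a black box, so there is no in-paper proof to compare your attempt against.

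That said, your sketch has a directional gap. You reduce PA on $G$ to the optimization problem on a gadgeted graph $G'$, and from the PA lower bound you conclude that $A$ must take $\tilde\Omega(\shortcutquality(G)) \approx \tilde\Omega(\shortcutquality(G'))$ rounds \emph{on $G'$}. But the theorem asserts the bound for \emph{every} graph, and not every graph arises as the gadgeted image $G'$ of some other graph; your argument as written therefore proves the statement only for graphs in the range of your gadget construction, not for the given $G$ itself. Simulating $A$-on-$G'$ using the physical network $G$ does not help: the quantity you are lower-bounding is still $T_A(G')$, not $T_A(G)$.

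The repair is to encode the hard PA instance into the problem-specific \emph{inputs} on $G$ itself---edge weights for MST and min-cut, weights plus the choice of source for SSSP, the demand vector for transshipment---without altering the topology, so that an (approximate) optimum on $G$ already reveals the part-wise aggregates. Your weight-scale amplification idea for surviving an $n^{o(1)}$ approximation factor is the right one; it simply needs to live in the weights of $G$ rather than on auxiliary gadget edges.
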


On the other hand, many problems can be solved with a small number of part-wise aggregations, giving universally-optimal algorithms with matching $\tilde{\Theta}(\shortcutquality(G))$ lower and upper bounds on the round complexity (when $\shortcutquality(G) \le n^{o(1)}$).
The contribution of this paper is to show that this is true for approximate shortest-path problems.

We note that current solution to PA in $\poly(\shortcutquality(G)) \cdot n^{o(1)}$ only allows for $n^{o(1)}$-competitive universal optimality when there exists a $n^{o(1)}$-round solution. This is typically sufficient as many networks for interest have $\shortcutquality(G) \le n^{o(1)}$. Moreover, a future results might improve the PA upper bound to $\tilde{O}(\shortcutquality(G))$, in which case we would always get polylog-competitive universally-optimal algorithms.

Furthermore, on many special graph classes not only does the low-congestion shortcut framework guarantee that any universally-optimal algorithm is competitive with the fastest correct one, it also provides concrete upper bounds on this runtime. For example, when run on a planar graph, we know that $\shortcutquality(G) = \tilde{O}(D)$ and we know how to solve the PA problem in $\tilde{O}(D)$ rounds, giving us a useful guarantee on the runtime on any algorithm that uses PA as a subroutine or any universally-optimal algorithm (e.g., the one presented in this paper). We provide a compiled list of concrete shortcut qualities and PA runtimes for specific graph classes.

\begin{theorem}[Shortcut quality on special graphs]\label{theorem:shortcut-on-specific-graphs}
  Let $G$ be a (undirected and unweighted) graph and let $D := \diameter(G)$. The following bounds hold:
  \begin{itemize}
  \item For all graphs $G$ we have $\shortcutquality(G) \le \tilde{O}(\sqrt{n} + D)$. PA can be solved in deterministic $\tilde{O}(\sqrt{n} + D)$ rounds.~\cite{GH16}
  \item When $G$ is planar, then $\shortcutquality(G) \le \tilde{O}(D)$. PA can be solved in deterministic $\tilde{O}(D)$ rounds.~\cite{GH16}
  \item When $G$ has excluded minor, then $\shortcutquality(G) \le \tilde{O}(D)$ (the hidden constants depend on the excluded minor). The PA problem can be solved in deterministic $\tilde{O}(D)$ rounds. This generalizes the result for planar graphs. The same result holds even when $G$ excludes $\tilde{O}(1)$-dense minors.~\cite{ghaffari2020low}
  \item When $G$ has treewidth at most $k$, then $\shortcutquality(G) \le \tilde{O}(kD)$. PA can be solved in randomized $\tilde{O}(kD)$.~\cite{haeupler2016near,haeupler2018round}
  \item When $G$ has genus at most $g \ge 1$, then $\shortcutquality(G) \le \tilde{O}(\sqrt{g}D)$. PA can be solved in randomized $\tilde{O}(\sqrt{g}D)$ rounds.~\cite{haeupler2016near}
  \item When $G$ is an $n^{-o(1)}$-expander, then $\shortcutquality(G) \le n^{o(1)}$. PA can be solved in randomized $n^{o(1)}$ rounds.~\cite{GL18b}
  \item When $G$ has hop-diameter $D$, then $\shortcutquality(G) \le \tilde{O}(n^{(D - 2) / (2D - 2)} + D)$. For $3 \le D = O(1)$, PA can be solved in randomized $\tilde{O}(n^{(D - 2) / (2D - 2)})$ rounds.~\cite{kogan2021low, KKOI19}
  \end{itemize}
\end{theorem}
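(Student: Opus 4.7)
The statement is a compilation theorem: each bullet is established by invoking a different prior construction of low-congestion shortcuts for the relevant graph class, combined with the recent generic simulator of \cite{GHR21} that turns a shortcut-quality bound into a PA round complexity bound. My plan is therefore to treat each bullet as a separate invocation, sketch which shortcut construction is being used, and explain how one extracts both the $\shortcutquality(G)$ bound and the matching PA upper bound. The only genuine ``proof work'' is bookkeeping: checking that the objects constructed in each cited paper actually fit the formal definition of a shortcut and that the PA simulator of \Cref{thm:pa-simulation} can be applied (or that a more direct PA algorithm tailored to the class achieves the stated bound, often deterministically).

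For the general bound $\shortcutquality(G) \le \tilde{O}(\sqrt{n}+D)$, I would invoke the random-sampling shortcut of Ghaffari--Haeupler: for each part $P_i$, sample a BFS-tree root and add $\tilde{O}(\sqrt{n})$ random shortcut edges; a Chernoff+union-bound argument shows both the congestion and the dilation of the resulting augmentation are $\tilde{O}(\sqrt{n}+D)$, which is the definition of a shortcut of that quality. The PA upper bound follows deterministically by pipelining along the shortcut, as in \cite{GH16}. For planar graphs, I would use the planar-shortcut construction of \cite{GH16}: partition the graph using a $\tilde{O}(D)$-depth BFS decomposition, exploit the planar separator / interdigitating-tree argument to bound edge congestion by $\tilde{O}(D)$, and conclude $\shortcutquality(G)=\tilde{O}(D)$; the same construction yields a deterministic PA algorithm with no additional overhead. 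For excluded-minor graphs, the plan is identical in shape but replaces planar separators with the small-diameter low-congestion shortcuts from \cite{ghaffari2020low}; the constants depend on the excluded minor through its Robertson--Seymour structure, but the $\tilde{O}(D)$ asymptotics are preserved.

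For bounded-treewidth graphs I would use the tree-decomposition-based shortcut of \cite{haeupler2016near,haeupler2018round}: each part routes to its ``root bag'' in the tree decomposition, giving congestion $\tilde{O}(k)$ and dilation $\tilde{O}(D)$, hence $\shortcutquality(G) \le \tilde{O}(kD)$. The genus-bounded case is the same plan with a $\sqrt{g}$-separator replacing the $O(1)$-separator of planar graphs, producing the $\tilde{O}(\sqrt{g}D)$ bound of \cite{haeupler2016near}. For $n^{-o(1)}$-expanders, I would invoke Ghaffari--Li \cite{GL18b}: using random walks of length $\tau_{\mathrm{mix}} \cdot 2^{O(\sqrt{\log n})}$, one can embed a low-congestion communication schedule for PA directly, yielding $\shortcutquality(G) \le n^{o(1)}$ and a matching randomized PA algorithm. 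Finally, for graphs of constant hop-diameter $D$, the bound $\tilde{O}(n^{(D-2)/(2D-2)})$ comes from the Kogan--Parter / Kitamura et al.\ shortcuts \cite{kogan2021low,KKOI19}, which balance the tradeoff between short-hop BFS trees and sampled shortcut edges using a pigeonhole argument on $D$-layered neighborhoods.

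In each case, to obtain the PA round complexity I invoke \Cref{thm:pa-simulation}, except on the special classes (planar, excluded-minor, treewidth, genus) where the cited works provide direct, often deterministic, PA algorithms matching the shortcut quality up to polylogarithmic factors. The only subtle step is the excluded-minor case, where extending from planar requires the dense-minor generalization of \cite{ghaffari2020low}; I would highlight that this is the technically heaviest citation but that the statement here only uses it as a black box. No new construction is introduced by this theorem --- its role is to consolidate known bounds so that the universally-optimal SSSP/transshipment results of the paper can be instantiated uniformly on each of the listed graph families.
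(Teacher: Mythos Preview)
Your proposal is correct in spirit and matches the paper's approach: the paper provides no proof whatsoever for this theorem, treating it purely as a compiled list of prior results with a citation attached to each bullet. Your sketch goes further than the paper by outlining the flavor of each cited construction, but the core recognition---that nothing new is proven here and each item is a black-box invocation of the referenced work---is exactly right.
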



\section{Graph-Based $\ell_1$-Oblivious Routing via LDDs}\label{sec:oblivious-routing-general}
In this section, we present our graph-based $n^{o(1)}$-approximate $\ell_1$-oblivious routing $R$. We first define them in \Cref{sec:def-oblivious-routing}. \Cref{sec:oblivious-routing-construction} then presents an existential, model-oblivious, construction of $R$ and proves its approximation guarantees. \Cref{sec:linear-algebra-oblivious-routing} presents a linear-algebraic interpretation of the constructed oblivious routing $R$.

\subsection{Definition}\label{sec:def-oblivious-routing}

For any $k$ and $\alpha > 1$, Sherman~\cite{She17b} defined an $\alpha$-approximate (linear) $\ell_1$-preconditioner for a weighted graph $G$ as a $k \times n$ matrix $P$, such that, for any proper demand $d$ it holds that $$\norm{d}_\opt \le \norm{P d}_1 \le \alpha \norm{d}_\opt.$$ Any such $\alpha$-preconditioner which can be computed in $\tau$ rounds can be used to give a $\tilde{O}(\eps^{-2} \alpha^2 \tau)$ algorithm for transshipment~\cite{She17}. We are constructing a strictly stronger object that is required to output a flow that routes any demand with $\alpha$-approximate cost.

\begin{definition}
For a graph $G=(V,E)$, a matrix $R \in \R^{\vec{E} \times V}$ is a (linear) oblivious routing if for any proper demand $d \in \R^V$ the flow $Rd$ routes $d$, i.e., $B R d = d$ for all $d \in \R^V$ with $\sum_{v \in V} d_v = 0$.
\end{definition}

\begin{definition}
A routing $R$ for a graph $G=(V,E)$ with weight matrix $W$ is an $\alpha$-approximate (linear) $\ell_1$-oblivious routing if for any proper demand $d$ the cost of the flow $f_{R,d}=Rd$ is at most $\alpha \norm{d}_\opt$, i.e., $W(f_{R,d}) = \norm{W R d}_1 \leq\alpha\norm{d}_\opt$. 
\end{definition}

\begin{corollary}
If $R$ is an $\alpha$-approximate $\ell_1$-oblivious routing for a graph $G$ with weight matrix $W$ then $P = W R$ is an $\alpha$-approximate (linear) $\ell_1$-preconditioner for $G$.
\end{corollary}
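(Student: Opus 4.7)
The plan is to unpack the two definitions and observe that the required two-sided bound on $\norm{Pd}_1 = \norm{WRd}_1$ follows immediately, one side from the approximation guarantee of $R$ and the other side from the fact that $R$ produces an actual feasible flow (which is the conceptual advantage over mere cost approximators that the paper emphasizes).

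First I would fix an arbitrary proper demand $d \in \R^V$ (i.e., $\one^T d = 0$) and compute $\norm{Pd}_1 = \norm{WRd}_1$. This equals the cost $W(f_{R,d})$ of the flow $f_{R,d} = Rd$ by the definition of the cost of a flow given in the transshipment preliminaries. The upper bound $\norm{Pd}_1 \le \alpha \norm{d}_\opt$ is then just a restatement of the $\alpha$-approximate $\ell_1$-oblivious routing guarantee.

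The lower bound $\norm{d}_\opt \le \norm{Pd}_1$ is the place where the distinction between an oblivious routing and a general $\ell_1$-preconditioner matters, so I would emphasize it. Because $R$ is an oblivious routing, the vector $f_{R,d} = Rd$ satisfies $B f_{R,d} = B R d = d$, meaning that $f_{R,d}$ is a \emph{feasible} flow for the demand $d$ in the convex program \eqref{eq:transshipmentLP}. Since $\norm{d}_\opt$ is the minimum of $\norm{Wf}_1$ over all $f$ with $Bf = d$, we get $\norm{d}_\opt \le \norm{W f_{R,d}}_1 = \norm{WRd}_1 = \norm{Pd}_1$.

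Combining the two inequalities yields $\norm{d}_\opt \le \norm{Pd}_1 \le \alpha \norm{d}_\opt$ for every proper $d$, which is exactly Sherman's definition of an $\alpha$-approximate $\ell_1$-preconditioner, and the corollary follows. There is no real obstacle here; the only subtlety is invoking the feasibility $BRd = d$ (rather than just a cost bound) for the lower direction, which is precisely why oblivious routings are strictly stronger than preconditioners.
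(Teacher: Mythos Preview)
Your proposal is correct and takes essentially the same approach as the paper: the paper's proof is the single chain of inequalities $\norm{d}_\opt = \min_{f: Bf=d} \norm{Wf}_1 \leq \norm{W f_{R,d}}_1 = \norm{W R d}_1 \leq\alpha\norm{d}_\opt$, which is exactly your argument compressed into one line. You have simply spelled out more explicitly where feasibility $BRd=d$ is used for the lower bound and where the approximation guarantee is used for the upper bound.
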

\begin{proof}
  $$\norm{d}_\opt = \min_{f: Bf=d} \norm{Wf}_1 \leq \norm{W f_{R,d}}_1 = \norm{W R d}_1 \leq\alpha\norm{d}_\opt. \qedhere$$
\end{proof}

We remark that Li~\cite{Li20}, like Sherman~\cite{She17}, constructs a sparse $\alpha$-approximate (linear) $\ell_1$-preconditioner $P$ but at times calls such a matrix $P$ an $\ell_1$-oblivious routing even though $P d$ only produces a vector with the right norm and not a flow/routing. To our knowledge, this paper gives the first matrix $R$ that can be evaluated in almost linear time and is an $\alpha$-approximate $\ell_1$-oblivious routing with sub-polynomial $\alpha$.

\subsection{Construction}\label{sec:oblivious-routing-construction}

The main graph-theoretic tool we use to construct our $\ell_1$-oblivious routing is the well-studied low-diameter decomposition or LDD~\cite{awerbuch1985complexity,awerbuch1990sparse,bartal1996probabilistic}. Informally, an LDD decomposes a graph into disjoint node partitions such that each pair of close nodes has a large probability of ending up in the same part. A formal definition follows.
\begin{definition}[Low-diameter decomposition]
  For a weighted graph $G = (V, E)$, a low-diameter decomposition (LDD) $\mathcal{P}$ of radius $\rho$ and quality $\alpha \ge 1$ is a probability distribution over node disjoint partitions of $V$ into (connected) components $S_1 \subseteq V, \ldots, S_k \subseteq V$ along with centers $c_1 \in S_1, \ldots, c_k \in S_k$ such that:
  \begin{enumerate}
  \item For each $i$, the center $c_i$ is within distance $\rho$ of every other node in the induced subgraph $G[S_i]$, w.h.p.
  \item For every two vertices $x, y \in V$, if $d$ is the distance between them in the original graph, then the probability that they do not belong to the same part $S_i \supseteq \{u, v\}$ is at most $\alpha \cdot \frac{d}{\rho}$.
  \end{enumerate}
\end{definition}

Our $\ell_1$-oblivious routing algorithm based on LDDs is given in \Cref{algo:oblivious-routing-construction}.

\begin{algorithm}[H]
  \caption{$\ell_1$-oblivious routing}
  \label{algo:oblivious-routing-construction}
  \begin{enumerate}\itemsep0em
  \item Let $\alpha := \exp(O(\sqrt{\log n \cdot \log \log n}))$ (representing the LDD quality).
  \item Let $\rho := \exp(O(\log n \cdot \log \log n)^{3/4})$ (representing the LDD radius).
  \item Let $d_0 \in \R^V$ by any demand vector.
  \item For $i = 1, 2, \ldots, \imax := O(\log n)^{1/4}$ repeat the following:
    \begin{enumerate}
    \item Sample $g := O(\log n) \cdot \frac{\rho}{\alpha}$ LDDs $\calP_1^i, \calP_2^i, \ldots, \calP_g^i$ with radius $\rho^i$ and quality $\alpha$.
    \item A node $v$ sends $\frac{1}{g} d_{i-1}(v)$ to the center of its component in each $\calP_j^i$ along any path of length at most $\rho^i$, for each $j \in \{1, \ldots, g\}$. This constructs the next-step demand $d_i$.
    \end{enumerate}
    
  \item Compute an arbitrary spanning tree $T$ of $G$ and choose an arbitrary root $r$. Each node $v$ sends its remaining flow $d_{\imax}(v)$ to $r$ along $T$. \label{algo-step:spanning-tree}

  \end{enumerate}
\end{algorithm}
\textbf{Choosing the constants.} Each hidden constant in the $O$-notation of \Cref{algo:oblivious-routing-construction} can be replaced by a universal constant. Specifically, the $O$-constant in the definition of $\alpha$ is inherited from prior work~\cite{HL18} and explained in \Cref{theorem:distributed-LDD-sampling}. Constants in the definition of $\rho$ and $\imax$ are arbitrary as long as $\rho^{\imax} \ge n^{C+1}$, where $n^C$ is the largest edge weight in the graph, with $C = O(1)$ as we assumed they are polynomially bounded. The constant in the definition of $g$ makes the algorithm succeed with high probability, hence making it a sufficiently large constant drives the success probability to at least $1 - n^{-C'}$ for any chosen $C' = O(1)$.

\begin{theorem}\label{theorem:oblivious-routing-approximation}
  With high probability, \Cref{algo:oblivious-routing-construction} produces an $\exp(O(\log n \cdot \log \log n)^{3/4})$-approximate $\ell_1$-oblivious routing.
\end{theorem}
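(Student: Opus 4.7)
My plan has two parts: verify that the random matrix $R$ implicitly defined by \Cref{algo:oblivious-routing-construction} is a valid oblivious routing (i.e.\ $BRd=d$), and bound $\norm{WRd}_1 \le \alpha' \norm{d}_\opt$ simultaneously for every proper demand $d$ with high probability, where $\alpha' = \exp(O(\log n \cdot \log\log n)^{3/4})$. Validity is immediate: each operation---aggregating $\tfrac{1}{g}d_{i-1}(v)$ to the LDD center along a path of length at most $\rho^i$, and the final tree routing to $r$---is linear in $d_0$, so $R$ is a well-defined random linear operator; mass conservation within each LDD component preserves $\one^\top d_i = 0$ at every step, so the tree routing correctly settles the residual proper demand $d_\imax$ at $r$. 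For the approximation statement, linearity of $R$ together with the transport decomposition $d=\sum_k \lambda_k (e_{s_k}-e_{t_k})$ achieving $\sum_k \lambda_k \dist(s_k,t_k)=\norm{d}_\opt$ reduces the claim to proving $\norm{WR(e_s-e_t)}_1 \le \alpha' \dist(s,t)$ for each fixed pair $(s,t)\in V\times V$; a union bound over the $O(n^2)$ pairs then promotes per-pair high-probability bounds into a single bound valid for all demands.

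The technical heart is a per-iteration LDD lemma. Let $\calP$ be an LDD of radius $\rho$ and quality $\alpha$ and let $d'$ be the demand obtained from $d$ by sending each $d(v)$ to the center of its component in $\calP$. I claim that (a)~$\E[\norm{d'}_1] \le O(\alpha/\rho)\norm{d}_\opt$ and (b)~$\E[\norm{d'}_\opt] \le O(\alpha)\norm{d}_\opt$. To prove it, fix any optimal transport decomposition $d=\sum_k \lambda_k(e_{s_k}-e_{t_k})$ with $\sum_k \lambda_k\dist(s_k,t_k)=\norm{d}_\opt$. For each pair $(s_k,t_k)$, the LDD property bounds the separation probability by $\min(1,\alpha\dist(s_k,t_k)/\rho)$; on separation the pair contributes $\pm\lambda_k$ at two centers $c_{s_k},c_{t_k}$ whose graph-distance is at most $\dist(s_k,t_k)+2\rho$ by the triangle inequality and the LDD radius guarantee. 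Linearity of expectation then yields (a) directly and (b) after case-splitting on whether $\dist(s_k,t_k) \le \rho/\alpha$ (in both cases the per-pair contribution to $\E[\norm{d'}_\opt]$ is $O(\alpha)\lambda_k\dist(s_k,t_k)$).

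Iterating (b) conditional on $d_{i-1}$ gives $\E[\norm{d_i}_\opt]\le O(\alpha)^i \norm{d_0}_\opt$, and iterating (a) conditional on $d_{i-2}$ gives $\E[\norm{d_{i-1}}_1]\le O(\alpha/\rho^{i-1})\cdot O(\alpha)^{i-2}\norm{d_0}_\opt$. Since the cost of iteration $i$ is at most $\rho^i\norm{d_{i-1}}_1$, its expectation is at most $O(\alpha\rho)\cdot O(\alpha)^{i-2}\norm{d_0}_\opt$; summing the resulting geometric series over $i\in\{1,\ldots,\imax\}$ yields total expected iteration cost at most $O(\rho)\cdot O(\alpha)^\imax\norm{d_0}_\opt$. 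Substituting $\rho=\exp(O(\log n \cdot \log\log n)^{3/4})$, $\alpha=\exp(O(\sqrt{\log n \cdot \log\log n}))$, and $\imax=O((\log n)^{1/4})$ gives exactly the claimed factor $\exp(O(\log n \cdot \log\log n)^{3/4})$ (the $\log\rho$ term dominates $\imax \log \alpha$). The final tree routing contributes at most $n^{O(1)}\norm{d_\imax}_1$; the choice $\rho^\imax\ge n^{C+1}$ combined with $\E[\norm{d_\imax}_1]\le O(\alpha/\rho^\imax)\cdot O(\alpha)^{\imax-1}\norm{d_0}_\opt = n^{-\omega(1)}\norm{d_0}_\opt$ renders this contribution negligible.

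To convert expectations into a high-probability statement, the $g=O(\log n)\cdot \rho/\alpha$ parallel LDDs per iteration enter crucially: for the single-pair target demand $e_s-e_t$, each tracked pair's contribution to $\norm{d_i}_1$ is $\tfrac{2\lambda}{g}\cdot X$ with $X\sim \mathrm{Binom}(g,p)$ and $p\le \alpha\dist(\cdot)/\rho^i$, and a Chernoff bound concentrates it around its mean up to an additive $O(\log n)/g$ slack that can be absorbed into $\alpha'$. I expect the main obstacle to be propagating this concentration across all $\imax$ iterations simultaneously: although the LDDs sampled in different iterations are independent, the demands $d_i$ are not, so concentration must be performed iteratively through a tower of conditional expectations, applying a Chernoff tail at each iteration conditioned on $d_{i-1}$ and then union-bounding over the $O(n^2)$ pairs $(s,t)$ from the first-paragraph reduction.
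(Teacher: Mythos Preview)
Your expectation analysis (lemmas (a) and (b) and their iteration) and the reduction to pair-demands via an optimal-transport decomposition are both correct and match the paper. The gap is in the concentration step, which you correctly flag as ``the main obstacle'' but do not actually resolve. Your plan concentrates $\norm{d_i}_1$ per tracked pair, yielding an additive slack of $O(\log n)/g = O(\alpha/\rho)$ per unit mass; summed over an optimal decomposition of $d_{i-1}$ this gives, w.h.p.,
\[
\norm{d_i}_1 \;\le\; O(\alpha/\rho^i)\,\norm{d_{i-1}}_\opt \;+\; O(\alpha/\rho)\,\norm{d_{i-1}}_1.
\]
But the right-hand side contains $\norm{d_{i-1}}_\opt$, for which you have only an \emph{expectation} bound---you cannot substitute $\E[\norm{d_{i-1}}_\opt]$ into a w.h.p.\ inequality. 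If you try to concentrate $\norm{d_i}_\opt$ as well, its single-LDD range is $\norm{d_{i-1}}_\opt + \rho^i\,\norm{d_{i-1}}_1$, so its Chernoff slack drags $\norm{d_{i-1}}_1$ back in. The two recursions are coupled and neither closes alone. Equivalently, at the pair level: a tracked pair after step $i-1$ can have distance $\ell'=O(1)$ (centers of adjacent LDD components may be that close), whence in step $i$ one has $gp \le O(\log n)/\rho^{i-1}\ll 1$, and per-pair Chernoff gives nothing beyond the additive slack you already acknowledge.

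The paper resolves this by introducing the potential $\Phi_i := \norm{d_i}_\opt + \rho^i\,\norm{d_i}_1$ and proving the single-step bound $\Phi_i \le O(\alpha)\,\Phi_{i-1}$ directly. For a pair-demand $d_{i-1}=d_{s,t}$ one gets the deterministic floor $\Phi_{i-1}\ge 2\rho^{i-1}$ (from the $\rho^{i-1}\norm{d_{i-1}}_1$ term) and the deterministic single-LDD increment $\Phi_i^{\text{single}}-\Phi_{i-1}\le 4\rho^i$; the range-to-floor ratio is therefore $O(\rho)$ \emph{regardless of} $\dist(s,t)$, and $g=\tilde{O}(\rho/\alpha)$ samples suffice for Chernoff. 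The single-step potential bound is then lifted from pair-demands to arbitrary $d_{i-1}$ by exactly your first-paragraph decomposition, but applied \emph{per step} rather than once globally---so the union bound is over $n^2$ pairs per step, not over the exponentially many tracked pairs accumulated across steps. Your (a) and (b) are precisely the two summands of $\Phi_i$; what is missing is binding them into one quantity before iterating.
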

\textbf{Remark.} If one would use (optimal) LDDs of quality $O(\log n)$, the construction described would yield an $\exp(O(\sqrt{\log n \cdot \log \log n}))$-approximate $\ell_1$-oblivious routing.

\textbf{A guided tour of the analysis.} We define a \emph{pair-demand} $d_{s, t} \in \R^V$ as $d_{s, t}(x) := \1{x = s} - \1{x = t}$, i.e., requesting a unit flow from $s$ to $t$. Due to linearity of our routing, it is sufficient to prove that the routing offers a good approximation only with respect to all pair demands $d_{s, t}$ in order to prove it does the same for all (non-pair) demands. Therefore, the assumption that $d$ is a pair-demand is without loss of generality. This greatly simplifies the analysis.

We analyze how the optimal solution changes when routed along a single LDD $\calP$ (of quality approximately $2^{\sqrt{\log n}}$). Since $d = d_{s, t}$ the optimal solution $\norm{d}_\opt$ is initially equal to the distance between between $s$ and $t$, namely $\norm{d}_\opt = \ell := \dist_G(s, t)$. After routing along a single LDD with radius $\rho$ and routing each $d(x)$ to the center of the component of $x$, we pay a cost of at most $2\rho$ to route the demand and obtain a new (residual) demand $d' \in \R^V$ (which is supported at centers of components of $\calP$). Note that if $s$ and $t$ are in the same component of $\calP$, then $d' = \vec{0}$; if they are in different components we have $\norm{d'}_\opt \le \norm{d}_\opt + 2 \rho$ since $s$ and $t$ are both moved by distance at most $\rho$. However, in expectation, the increase is $\E[\norm{d'}_\opt - \norm{d}_\opt] \le \Pr[\text{s, t in different components}] \cdot O(\rho) = 2^{\sqrt{\log n}} \cdot \frac{\ell}{\rho} \cdot O(\rho) \le 2^{\sqrt{\log n}} \cdot \norm{d}_\opt$. In other words, the optimal solution increases in expectation by a manageable $2^{\sqrt{\log n}}$ factor. A naive way to use this result would be to a single hierarchy of LDD decompositions (i.e., find an LDD of radius $\rho = 2^{(\log n)^{3/4}}$, contract, and repeat). We showed this loses a $2^{\sqrt{\log n}}$ multiplicative factor in the value of the optimal solution in each level and, if done for $(\log n)^{1/4}$ levels, ultimately loses a $2^{(\log n)^{3/4}} = n^{o(1)}$ factor. This constructs an $n^{o(1)}$-approximate routing for any demand \emph{in expectation}. It essentially corresponds to a low-quality tree embedding---an analogous (but better!) routing with a $O(\log n)$-approximation guarantee in expectation would be to simply sample an FRT tree~\cite{fakcharoenphol2004tight}.

This, however, completely ignores the issue of concentration---we require our single constructed routing to be good with respect to \emph{all demands}, and not just a single demand in expectation. This can typically be done by repeating the same process a sufficient number of times and taking the average until it works with high probability. For example, would need to repeat the above LDD-hierarchy routing at least $\poly(n)$ times until the averaged-out process succeeds for \emph{each demand}. We address this by achieving concentration for each level of the LDD hierarchy. Most of the conceptual heavy-lifting comes from proving that \Cref{algo:oblivious-routing-construction} works for all demands with high probability, since we have showed that designing a demand that only works in expectation is straightforward. To this end, our idea is to gradually increase the LDD radii as $\rho := 2^{(\log n)^{3/4}}, \rho^2, \rho^3, \ldots, \rho^{\imax} = \poly(n)$ and showing concentration in-between each step. As we shown before, the optimal solution \emph{in expectation} increases in each step by a manageable $2^{\sqrt{\log n}}$ factor, hence the total blow-up after $\imax \le (\log n)^{1/4}$ step is still $n^{o(1)}$, while the LDDs start consuming the entire graph, implying we are done (it is not hard to see why). To maintain concentration, in each step we sample $\tilde{O}(\rho) \approx 2^{(\log n)^{3/4}}$ LDDs to guarantee the optimal solution does not blow up by more than a $n^{o(1)}$ factor in each step. The crux of the analysis is in showing this averaging will keep the optimal solution small in every step and with respect to every demand.

It is fairly straightforward to show concentration when $i = 1$: the optimal solution is good in expectation and increases by at most an additive $\rho$ factor, hence repeating it for $\tilde{O}(\rho)$ steps guarantees the result with high probability---this is a standard Chernoff bound result, we need to take the average over $\tilde{O}(M / \mu)$ independent random variables that have expectation $\mu$ and take up values in the range $[0, M]$ with probability $1$. The issue, however, arises when $i > 1$: the optimal solution can grow by a factor of $\rho^i \gg \norm{d}_\opt$. Naively, this tells us we need to repeat the process for $\rho^i / \norm{d}_\opt$ times, a value which can be often as large as $\rho^i$. The trick, however, is to ``artificially'' increase the value of the optimal solution after routing it along an LDD of large radius. Specifically, after routing the demand along an LDD of radius $\rho$, we will increase the value of the optimal solution by an additive $\rho$. This does not influence the \emph{newly-increased optimal solution} significantly---the expectation even remains the same up to constant factors. Notably, this increase helps to prove the newly-increased solution concentrates: in step $i$, the optimal solution increases by an additive $\rho^i$ factor, but is of size at least $\rho^{i-1}$ after the previous step, hence repeating it $\tilde{O}(\rho^i / \rho^{i-1}) = n^{o(1)}$ times is sufficient to prove concentration (i.e., the newly-increased optimal solution does not blow up with respect to all demands). In the following formal proof, we simplify much of this conceptually complicated reasoning by introducing a potential $\phi_i := \norm{d_i}_\opt + \norm{d_i} \cdot \rho^i$ which intuitively corresponds to the newly-increased optimal solution.

\subsection{Proving the approximation guarantees of \Cref{algo:oblivious-routing-construction}}

This section is dedicated to proving \Cref{theorem:oblivious-routing-approximation}. The main technical insight that greatly simplifies the analysis is the following definition of a potential, which accounts for both the value of the optimal solution and the remaining $\ell_1$ mass into account. Note that the subscript $i$ corresponds to the $i^{th}$ step of \Cref{algo:oblivious-routing-construction}.

\begin{definition}[Potential]
  $\Phi_i := \norm{d_i}_\opt + \norm{d_i}_1 \cdot \rho^i$.
\end{definition}

Our ultimate goal is to prove the potential increases only by a $n^{o(1)}$ factor over all $\imax$ many steps; this can be shown to directly imply \Cref{theorem:oblivious-routing-approximation}. We prove this in several steps. First, we show that in each step $i$ the potential increases only by a multiplicative $O(\alpha) = n^{o(1)}$, but only \emph{in expectation} and when the demand $d_{i-1}$ is fixed to be a \emph{pair-demand} $d_{i-1} = d_{s, t}$. A \textbf{pair-demand} $d_{s, t} \in \R^{V}$ is a demand of the form $d_{s, t}(x) := \1{x = s} - \1{x = t}$, i.e., routing a unit flow from $s$ to $t$. Second, we show the same claim with high probability instead of in expectation. Third, we show the claim for all demands (not only pair-demands) with high probability. Together, they imply the result.

We start by showing the result in expectation and for pair-demands.
\begin{lemma}[Expectation analysis]\label{lemma:expectation-analysis}
  For any $1 \le i \le \imax$ and any two nodes $s, t \in V$ the following holds. Fix $d_{i-1}$ to be a pair-demand $d_{s, t}$ (with potential $\Phi_{i-1} = d_G(s, t) + 2 \rho^{i-1}$) and run the $i^{th}$ step of \Cref{algo:oblivious-routing-construction}. This induces a new random variable $\Phi_i$. We always have that $\E[\max\{\Phi_i, \Phi_{i-1}\}] \le 6 \alpha \cdot \Phi_{i-1}$ over the random choices of the $i^{th}$ step. 
\end{lemma}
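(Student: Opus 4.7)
Since $d_{i-1}$ is the pair demand $d_{s,t}$, I write $\ell := \dist_G(s,t)$, so $\norm{d_{i-1}}_1 = 2$, $\norm{d_{i-1}}_\opt = \ell$, and $\Phi_{i-1} = \ell + 2\rho^{i-1}$. My first move is to decompose the residual demand $d_i$ as a signed sum of pair demands between sampled LDD centers. Since step $i$ routes $+1/g$ units from $s$ and $-1/g$ units from $t$ to their respective component centers in each of the $g$ independent LDDs $\calP_1^i, \ldots, \calP_g^i$,
\[
  d_i \;=\; \frac{1}{g} \sum_{j \,:\, s, t \text{ separated in } \calP_j^i} d_{c_s^j,\, c_t^j},
\]
where $c_s^j, c_t^j$ denote the component centers containing $s$ and $t$ in $\calP_j^i$; when $s$ and $t$ share a component the two half-units cancel at the shared center and the corresponding term vanishes. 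Because $\norm{\cdot}_\opt$ and $\norm{\cdot}_1$ are both subadditive (summing flows routes the sum of demands at no extra cost), the norms of $d_i$ decompose term by term over the separating indices $j$.

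Next I would apply the LDD guarantees and take expectations. The quality property bounds the probability that any single $\calP_j^i$ separates $s$ and $t$ by $p := \min\{1,\, \alpha\,\ell / \rho^i\}$; conditioned on separation, the radius property forces $\dist_G(c_s^j, s), \dist_G(c_t^j, t) \le \rho^i$, so the triangle inequality yields $\dist_G(c_s^j, c_t^j) \le \ell + 2\rho^i$ deterministically. Combined with $\norm{d_{c_s^j, c_t^j}}_\opt = \dist_G(c_s^j, c_t^j)$ and $\norm{d_{c_s^j, c_t^j}}_1 = 2$, linearity of expectation gives $\E[\norm{d_i}_\opt] \le p\,(\ell + 2\rho^i)$ and $\E[\norm{d_i}_1] \le 2p$, hence $\E[\Phi_i] \le p\,(\ell + 4\rho^i)$. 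I then split on whether $\ell \ge \rho^i/\alpha$: in that regime $p \le 1$ and $\rho^i \le \alpha \ell$ give $\E[\Phi_i] \le (1+4\alpha)\,\ell \le 5\alpha\,\ell$; in the complementary regime $p \le \alpha\,\ell/\rho^i$ and $\ell/\rho^i \le 1/\alpha$ give $\E[\Phi_i] \le \alpha\,\ell\,(\ell/\rho^i + 4) \le 5\alpha\,\ell$. Either way $\E[\Phi_i] \le 5\alpha\,\Phi_{i-1}$, and the $\max$ claim follows by the crude bound $\E[\max\{\Phi_i,\Phi_{i-1}\}] \le \E[\Phi_i] + \Phi_{i-1} \le (5\alpha+1)\Phi_{i-1} \le 6\alpha\,\Phi_{i-1}$, using $\alpha \ge 1$.

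The only step I expect to require real care is the decomposition itself: I must verify that $d_i$ is literally the signed superposition of pair demands between sampled centers (so that the norms decompose cleanly over LDDs) and that the radius guarantee yields the deterministic bound $\dist_G(c_s^j, c_t^j) \le \ell + 2\rho^i$ on every separation event, rather than only in expectation. The two-regime arithmetic is routine, but one must carry the $+4\rho^i$ (not $+2\rho^i$) through the definition of $\Phi_i$ so that the final constant lands at $6\alpha$ rather than slipping above.
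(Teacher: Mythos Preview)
Your proposal is correct and follows essentially the same approach as the paper: both reduce to the single-LDD analysis (you via the explicit decomposition $d_i=\frac1g\sum_j d_{c_s^j,c_t^j}$ plus subadditivity, the paper via the observation that averaging over $g$ LDDs cannot increase the expected potential), bound the separation probability by $\min\{1,\alpha\ell/\rho^i\}$ and the conditional potential by $\ell+4\rho^i$, and finish with a two-case split. The only cosmetic differences are your case threshold $\ell\gtrless\rho^i/\alpha$ versus the paper's $\ell\gtrless\rho^i$, and your handling of the $\max$ via $\max\{\Phi_i,\Phi_{i-1}\}\le\Phi_i+\Phi_{i-1}$ versus the paper's direct bound on $\E[M_i^{\text{single}}]$.
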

\begin{proof}
  Suppose that the (weighted) distance in $G$ between $s$ and $t$ is $\ell$. Since we fixed $d_{i-1} = d_{s, t}$, then $\Phi_{i-1}=\ell + 2 \rho^{i-1} \ge \ell$.

  We first analyze a \emph{single} randomly sampled LDD with potential $\Phi_i^{\text{single}}$. We define a random variable $M_i^{\text{single}} := \max\{\Phi_i^{\text{single}}, \Phi_{i-1}\}$ which tracks the (single) potential without ever decreasing it.

  If both endpoints $s, t$ are in the same component, this demand cancels out and $\Phi_i^{\text{single}} = 0$, which in turn gives us $M_i^{\text{single}} = \Phi_{i-1}$. On the other hand, if they end up in separate components (which happens with probability at most $\min\{ 1, \alpha \cdot \frac{\ell}{\rho^i} \}$), we route them to separate component centers (that are at most $\rho^i$ away), which gives the following bound on the potential
  \begin{align*}
    \Phi_i^{\text{single}} = \norm{d_i}_{\opt} + \norm{d_i}_1 \cdot \rho^i \le (\ell + 2 \rho^i) + 2 \cdot \rho^i = \ell + 4 \cdot \rho^i .
  \end{align*}
  Note that the expectation does not change whether we take a single LDD or an average over $g$ LDDs, hence we can drop the superscript on ${\Phi_i}^{\text{single}}, M_i^{\text{single}}$ and just write $\Phi_i$ and $M_i$. In expectation, we get:
  \begin{align*}
    \E[ M_i ] & \le \Phi_{i-1} + \min\left\{ \alpha \cdot \frac{\ell}{\rho^i}, 1 \right\} \cdot (\ell + 4 \rho^i) .
  \end{align*}
  We now consider two cases. First, if $\ell \le \rho^i$, we get
  \begin{align*}
    \E[M_i] \le \Phi_{i-1} + \alpha \cdot \frac{\ell}{\rho^i} \cdot 5\rho^i = \Phi_{i-1} + 5 \alpha \cdot \ell \le \Phi_{i-1} + 5 \alpha \cdot \Phi_{i-1} \le 6\alpha \cdot \Phi_{i-1} .
  \end{align*}
  Otherwise, if $\ell > \rho^i$, we get
  \begin{align*}
    \E[M_i] \le \Phi_{i-1} + \ell + 4 \rho^i \le \Phi_{i-1} + 5 \ell \le 6 \Phi_{i-1} .
  \end{align*}
  Therefore, it always holds that $\E[\Phi_i] \le \E[M_i] \le 6\alpha \cdot \Phi_{i-1}$.
\end{proof}

Next, in \Cref{lemma:concentration-analysis}, we show that the potential does not blow up significantly \emph{with high probability}, instead of just being bounded in expectation. For this we use the following standard Chernoff bound:

\begin{lemma}[Chernoff Bound] \label{lem:ChernoffBound} Let $S = \sum_i X_i$ be a sum of non-negative independent random variables upper bounded by $Z$, i.e., $0 \le X_i \le Z$ with probability $1$. For any $t \ge 2 \E[S]$ it holds that $\Pr[S \ge t] \le \exp(- \frac{1}{3} \cdot t / Z)$.
\end{lemma}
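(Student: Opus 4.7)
The plan is to apply the standard exponential-moment (Chernoff) method for sums of bounded non-negative independent random variables; the constant $1/3$ should fall out at the end. First I would rescale by setting $Y_i := X_i/Z \in [0,1]$, so the problem reduces to bounding $\Pr[\sum_i Y_i \ge t/Z]$ for variables supported on $[0,1]$. The final exponent gets rescaled back by the factor $1/Z$ at the end, which is where the $t/Z$ in the statement comes from.

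Next, for any $\lambda > 0$, Markov's inequality applied to $e^{\lambda S}$ together with independence gives
\[
\Pr[S \ge t] \;\le\; e^{-\lambda t}\, \E[e^{\lambda S}] \;=\; e^{-\lambda t}\, \prod_i \E[e^{\lambda X_i}].
\]
Since $x \mapsto e^{\lambda x}$ is convex, on $[0,Z]$ it satisfies the chord bound $e^{\lambda x} \le 1 + (e^{\lambda Z} - 1)\tfrac{x}{Z}$; taking expectations and using $1+y \le e^y$ yields $\E[e^{\lambda X_i}] \le \exp\bigl((e^{\lambda Z} - 1)\E[X_i]/Z\bigr)$. Multiplying over $i$ and writing $\mu := \E[S]$,
\[
\Pr[S \ge t] \;\le\; \exp\!\bigl(-\lambda t + (e^{\lambda Z} - 1)\mu/Z\bigr).
\]

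Finally, I would pick $\lambda$ to make the exponent as negative as possible and use the hypothesis $t \ge 2\mu$. Substituting the optimizer $\lambda = Z^{-1}\ln(t/\mu)$ gives an exponent of $-(t/Z)\bigl(\ln r - 1 + 1/r\bigr)$ with $r := t/\mu \ge 2$; the desired constant $1/3$ then follows from an elementary calculus check on the function $r \mapsto \ln r - 1 + 1/r$ for the relevant regime of $r$. The whole argument is a textbook Chernoff bound for bounded non-negative summands; I do not foresee any substantive obstacle beyond this routine algebra in the last step, and all of the graph-theoretic content of the paper is absorbed later when this lemma is applied to the LDD-sampling process.
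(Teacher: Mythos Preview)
The paper does not give a proof of this lemma; it is stated as a standard Chernoff bound and used as a black box in \Cref{lemma:concentration-analysis}. Your exponential-moment argument is exactly the textbook route and is the right thing to do here.

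There is, however, a genuine gap in your final step. You arrive at the (correct and optimal) bound
\[
\Pr[S \ge t] \;\le\; \exp\!\bigl(-(t/Z)\,g(r)\bigr), \qquad g(r) = \ln r - 1 + 1/r, \quad r = t/\mu \ge 2,
\]
and then assert that an ``elementary calculus check'' gives $g(r) \ge 1/3$ on this range. That check fails: $g$ is increasing on $(1,\infty)$, so its minimum on $r \ge 2$ is $g(2) = \ln 2 - \tfrac12 \approx 0.193 < \tfrac13$. In fact the constant $1/3$ in the stated lemma is not achievable under the hypothesis $t \ge 2\E[S]$ alone (a Poisson computation at $t = 2\mu$ saturates the $0.193$ rate), so this is a defect of the statement rather than of your method. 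Your argument does yield the correct inequality with $1/3$ replaced by any constant $\le \ln 2 - \tfrac12$, e.g.\ $\exp(-t/(6Z))$; since the paper only uses this lemma to get a high-probability bound with $g = O(\log n)\cdot \rho/\alpha$ samples in \Cref{lemma:concentration-analysis}, the weaker constant changes nothing downstream beyond the hidden constant in $g$.
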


\begin{lemma}[Concentration analysis]\label{lemma:concentration-analysis}
  For any $1 \le i \le \imax$ and any two nodes $s, t \in V$ the following holds. Fix $d_{i-1}$ to be a pair-demand $d_{s, t}$ (with potential $\Phi_{i-1} = d_G(s, t) + 2 \rho^{i-1}$) and run the $i^{th}$ step of \Cref{algo:oblivious-routing-construction}. This induces a new random variable $\Phi_i$. With high probability, $\Phi_i \le 13 \alpha \cdot \Phi_{i-1}$ over the random choices of the $i^{th}$ step.
\end{lemma}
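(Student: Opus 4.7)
The plan is to derive \Cref{lemma:concentration-analysis} from the expectation bound in \Cref{lemma:expectation-analysis} by applying \Cref{lem:ChernoffBound} to the $g$ independently sampled LDDs of step $i$. First I will exploit that the routing is linear in the demand. For each $j$, let $\tilde d_i^{(j)}$ denote the residual demand one would obtain by routing the \emph{entire} $d_{i-1}$ along LDD $\calP_j^i$ alone (with the corresponding single-LDD flow $\tilde f^{(j)}$). Since each node splits its demand into $g$ equal pieces, one per LDD, the net shipped flow is $f=\frac{1}{g}\sum_j \tilde f^{(j)}$, whence $d_i = d_{i-1} - Bf = \frac{1}{g}\sum_{j=1}^g \tilde d_i^{(j)}$. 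Because $\norm{\cdot}_\opt$ is a seminorm (flows can be scaled and added) and $\norm{\cdot}_1$ is a norm, the potential $\Phi$ is subadditive, which yields
\[
  \Phi_i \;\le\; \frac{1}{g}\sum_{j=1}^g \tilde\Phi_i^{(j)}, \qquad \text{where } \tilde\Phi_i^{(j)} := \norm{\tilde d_i^{(j)}}_\opt + \norm{\tilde d_i^{(j)}}_1 \cdot \rho^i.
\]

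Second, I will reuse the per-LDD bounds already contained in the proof of \Cref{lemma:expectation-analysis}. Writing $\ell := d_G(s,t)$, a single LDD $\calP_j^i$ gives $\tilde\Phi_i^{(j)} = 0$ if it places $s,t$ in the same part, and $\tilde\Phi_i^{(j)} \le \ell + 4\rho^i$ otherwise. Hence $0 \le \tilde\Phi_i^{(j)} \le Z$ almost surely with $Z := \ell + 4\rho^i$, and the same proof establishes $\E[\tilde\Phi_i^{(j)}] \le 6\alpha\Phi_{i-1}$. The $\tilde\Phi_i^{(j)}$ are mutually independent because the $\calP_j^i$ are sampled independently.

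Third, I apply \Cref{lem:ChernoffBound} to $S := \sum_{j=1}^g \tilde\Phi_i^{(j)}$ with $t := 13 g\alpha\Phi_{i-1}$. Since $\E[S] \le 6 g\alpha\Phi_{i-1}$ we have $t \ge 2\E[S]$, so
\[
  \Pr[\Phi_i \ge 13\alpha\Phi_{i-1}] \;\le\; \Pr[S \ge t] \;\le\; \exp\!\left(-\frac{13\, g\alpha\, \Phi_{i-1}}{3(\ell+4\rho^i)}\right).
\]
Substituting $g = \Omega(\log n)\cdot \rho/\alpha$, the exponent is $-\Omega(\log n)\cdot \rho\cdot \Phi_{i-1}/(\ell+4\rho^i)$. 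A two-case check finishes the bound: if $\ell \ge \rho^i$, then $\Phi_{i-1} \ge \ell$ and the ratio is $\ge \rho/5$; if $\ell < \rho^i$, then $\Phi_{i-1} \ge 2\rho^{i-1}$ and the ratio is $\ge 2\rho\cdot \rho^{i-1}/(5\rho^i) = 2/5$. In either regime the exponent is $-\Omega(\log n)$, which yields the claimed high-probability bound.

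The only real subtlety—and the reason the potential is defined as $\Phi = \norm{\cdot}_\opt + \norm{\cdot}_1\cdot \rho^i$—is to keep the Chernoff exponent of order $\Omega(\log n)$ uniformly in $\ell$. Without the $\norm{\cdot}_1\cdot \rho^i$ term, the automatic lower bound $\Phi_{i-1} \ge 2\rho^{i-1}$ in the small-$\ell$ regime would be lost and the required sample count $g$ would have to grow with $\rho^i/\ell$. The added mass term restores a uniform lower bound on $\Phi_{i-1}$, so a single choice $g = \tilde\Theta(\rho/\alpha)$ suffices at every scale and the argument goes through cleanly.
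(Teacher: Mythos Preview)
Your proof is correct and follows essentially the same approach as the paper: both express the step-$i$ potential as an average over $g$ independent single-LDD potentials via subadditivity of $\norm{\cdot}_\opt$ and $\norm{\cdot}_1$, invoke the expectation bound of \Cref{lemma:expectation-analysis}, and apply the Chernoff bound of \Cref{lem:ChernoffBound}. The only cosmetic difference is that the paper works with the nonnegative \emph{increments} $\Delta_i^{(j)}=\max\{0,\tilde\Phi_i^{(j)}-\Phi_{i-1}\}$, which are uniformly bounded by $4\rho^i$ independently of $\ell$; this lets the paper use the single lower bound $\Phi_{i-1}\ge \rho^{i-1}$ and avoid your two-case split on $\ell\gtrless\rho^i$, but the substance of the argument is the same.
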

\begin{proof}
  We first analyze a \emph{single} randomly sampled LDD with potential $\Phi_i^{\text{single}}$. We define a random variable $M_i^{\text{single}} := \max\{\Phi_i^{\text{single}}, \Phi_{i-1}\}$ which tracks the (single) potential without decreasing it.

   Let $\Delta_i^{\text{single}} := M_i^{\text{single}} - \Phi_{i-1} = \max\{0, \Phi_i^{\text{single}} - \Phi_{i-1}\}$ be the random variable denoting the change in potential between steps $i-1 \to i$ after sampling a single LDD (or $0$ if the change is negative).

    Naturally, the final difference in potentials is $\Phi_i - \Phi_{i-1}$ obtained by taking the average of $g$ IID samples $\Delta_{i, 1}, \Delta_{i, 2}, \ldots, \Delta_{i, g} \sim \Delta_i^{\text{single}}$, i.e., $\Phi_i - \Phi_{i-1} \le \frac{1}{g}\sum_{j=1}^g \Delta_{i, j}$.

  We show that $\Delta_i^{\text{single}} \le O(\rho^i)$ with probability $1$. As before, $d_{i-1}$ is some pair-demand $d_{s, t}$ for some endpoints $s, t$ at distance $\ell$ apart. In a (single) randomly sampled LDD, if both endpoints $s, t$ are in the same component, then the demand cancels out (i.e., $\Phi^{\text{single}}_i = 0$), giving us $\Delta_i^{\text{single}} = 0$. On the other hand, if they end up in separate components, we route them to separate component centers (that are at most $\rho^i$ away), making
    \begin{align*}
      \Phi_i^{\text{single}} & = \norm{d_i}_{\opt} + \norm{d_i}_1 \cdot \rho^i \le (\ell + 2 \rho^i) + 2 \cdot \rho^i \\
                             & = (\ell + 2 \rho^{i-1}) + 4\rho^i - 2 \rho^{i-1} \le \Phi_{i-1} + 4 \rho^i .
    \end{align*}
    In other words, $\Delta_i^{\text{single}} \le 4 \rho^i$ with probability $1$, as required. 

    
   Using the Chernoff bound~(\Cref{lem:ChernoffBound}) and applying \Cref{lemma:expectation-analysis}, we have that 
   \[
   \Pr\left[\sum_{j=1}^g \frac{1}{g} \Delta_{i, j} \ge 12 \alpha \cdot \Phi_{i-1}\right] \le \exp( - \frac{1}{3} \cdot 12\alpha \cdot \Phi_{i-1} / Z),
   \]
   where $Z = 4 \rho^i / g$, since $\frac{1}{g} \Delta_{i, j} \sim \frac{1}{g} \Delta_i^{\text{single}} \in [0, \frac{1}{g} 4 \rho^i] = [0, Z]$ with probability $1$. Using this and the fact that $\Phi_{i-1} \ge \rho^{i-1}$ we have that
    \begin{align*}
      \Pr[\Phi_i - \Phi_{i-1} \ge 12 \alpha \cdot \Phi_{i-1}] & \le \Pr\left[\sum_{j=1}^g \frac{1}{g} \Delta_{i, j} \ge 12 \alpha \cdot \Phi_{i-1}\right] \\
                                                              & \le \exp\left(- \frac{1}{3} \cdot 12 \alpha \Phi_{i-1} \frac{g}{ 4 \rho^i }\right) \\
                                                              & \le \exp\left(- 4 \alpha \frac{ \rho^{i-1} \cdot  g}{ \rho^i }\right) \\
                                                              & = \exp\left(- \alpha \frac{g}{ \rho }\right) \\
                                                              & = \exp( - O(\log n) ) = n^{-O(1)} .
    \end{align*}
    Therefore, $\Phi_i \le \Phi_{i-1} + 12 \alpha \cdot \Phi_{i-1} = 13 \alpha \cdot \Phi_{i-1}$ with high probability. 
\end{proof}

Combining the expectation and concentration analysis, we obtain that the potential does not blow up significantly over a single step.
\begin{lemma}[Single-step analysis]\label{lemma:single-step-analysis}
  For any $1 \le i \le \imax$ and any demand $d \in \R^V$ with $\sum_{v \in V} d_v = 0$ the following holds. Fix $d_{i-1} := d$ (with a fixed potential $\Phi_{i-1}$) and run the $i^{th}$ step of \Cref{algo:oblivious-routing-construction}. This induces a new random variable $\Phi_i$. With high probability, $\Phi_i \le 13 \alpha \cdot \Phi_{i-1}$ over the random choices of the $i^{th}$ step.
\end{lemma}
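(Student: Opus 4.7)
I would reduce the arbitrary-demand bound to the pair-demand bound of \Cref{lemma:concentration-analysis} by combining two observations: (i) once the random LDDs $\calP_1^i,\ldots,\calP_g^i$ of step $i$ are fixed, the map $d_{i-1}\mapsto d_i$ implemented in step (b) of \Cref{algo:oblivious-routing-construction} is a deterministic \emph{linear} operator on $\R^V$, since each node's forwarding rule depends only on the sampled partitions and multiplies $d_{i-1}(v)$ by the fixed scalar $1/g$ per LDD; and (ii) the transshipment norm $\norm{\cdot}_\opt$ and $\norm{\cdot}_1$ are both norms, so the potential $\Phi_i(\cdot)=\norm{\cdot}_\opt+\rho^i\norm{\cdot}_1$ is subadditive under nonnegative combinations of demands.

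Next I would invoke a standard path decomposition of an optimal transshipment flow for $d$: there exist nonnegative scalars $\{\lambda_k\}$ and source-sink pairs $\{(s_k,t_k)\}$ such that $d=\sum_k\lambda_k\, d_{s_k,t_k}$, $\sum_k\lambda_k\,\dist_G(s_k,t_k)=\norm{d}_\opt$, and $\sum_k\lambda_k=\tfrac{1}{2}\norm{d}_1$. The last identity holds because $\sum_v d_v=0$ forces the positive and negative parts of $d$ to each have total mass $\tfrac{1}{2}\norm{d}_1$, and every pair-demand $d_{s_k,t_k}$ contributes exactly $\lambda_k$ to each part. By linearity of step $i$, the residual satisfies $d_i=\sum_k\lambda_k\, d_i^{(k)}$, where $d_i^{(k)}$ is what step $i$ (run on the same sampled LDDs) would produce from the pair-demand $d_{s_k,t_k}$.

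Now I would apply \Cref{lemma:concentration-analysis} to each of the at most $\binom{n}{2}$ possible pair-demands and take a union bound (absorbed into the high-probability guarantee by enlarging the hidden constant in $g$). With high probability,
\[
\Phi_i(d_{s_k,t_k}) \;\le\; 13\alpha\,\Phi_{i-1}(d_{s_k,t_k}) \;=\; 13\alpha\,(\dist_G(s_k,t_k)+2\rho^{i-1})
\]
holds \emph{simultaneously} for every pair. Combining subadditivity of $\Phi_i$ with the two identities from the decomposition then gives
\[
\Phi_i(d) \;\le\; \sum_k\lambda_k\,\Phi_i(d_{s_k,t_k}) \;\le\; 13\alpha\Bigl(\norm{d}_\opt+\rho^{i-1}\norm{d}_1\Bigr) \;=\; 13\alpha\,\Phi_{i-1}(d),
\]
which is the desired conclusion.

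\textbf{Main obstacle.} Essentially nothing beyond \Cref{lemma:concentration-analysis} is required once one commits to the pair-demand decomposition strategy; the argument is really a linearity-plus-union-bound reduction. The only care needed is (a) verifying that step (b) of \Cref{algo:oblivious-routing-construction} is genuinely a linear function of $d_{i-1}$ (including how mass is split across the $g$ sampled LDDs and forwarded along fixed paths to centers), and (b) producing a \emph{single} path decomposition of an optimal acyclic transshipment flow that simultaneously witnesses both the cost identity $\sum_k\lambda_k\dist_G(s_k,t_k)=\norm{d}_\opt$ and the mass identity $\sum_k\lambda_k=\tfrac{1}{2}\norm{d}_1$. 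Both are routine, so I expect no real technical difficulty here.
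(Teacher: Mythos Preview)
Your proposal is correct and matches the paper's proof essentially step for step: linearity of the step-$i$ map $d_{i-1}\mapsto d_i$, a union bound over the $O(n^2)$ pair-demands via \Cref{lemma:concentration-analysis}, a source-to-sink path decomposition of an optimal flow for $d$, and subadditivity of $\norm{\cdot}_\opt$ and $\norm{\cdot}_1$. The only cosmetic difference is that the paper keeps the cost relation as the inequality $\norm{d}_\opt \ge \sum_j \beta_j \norm{q_j}_\opt$ (which is all that is needed) whereas you assert equality; either version suffices.
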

\begin{proof}
  We first note that the routing defined via \Cref{algo:oblivious-routing-construction} is linear, i.e., there exists a linear operator $L = L_i$ ($L$ depends on $i$, but we drop the subscript for ease of notation) such that $d_{i} = L d_{i-1}$.
    
  Using \Cref{lemma:concentration-analysis}, the claim of this lemma holds for all pair-demands $d_{s, t}$ w.h.p. In other words,
  \begin{align*}
    \norm{L d_{s,t}}_\opt + \norm{L d_{s, t}}_1 \cdot \rho^i \le 13\alpha \cdot [ \norm{d_{s,t}}_\opt + \norm{d_{s, t}}_1 \cdot \rho^{i-1} ].
  \end{align*}
  Since there are only $|V|^2$ many pair-demands, it also holds for all demand-pairs w.h.p. We now show that it holds for an arbitrary demand $d_{i-1}$.

  Consider the optimal flow $f \in \R^{\vec{E}}$ that routes $d_{i-1}$ (for an edge $e \in E$, the sign of $f(e)$ denotes the direction of the flow along $e$). Any flow satisfying $d_{i-1}$ can be decomposed into a positive combination of $J$ paths $\{p_j\}_{j=1}^J$ where (1) each path $p_j$ is \emph{unit} in the sense that $p_j(e) \in \{0, 1, -1\}$ for all $e \in E$ (the sign depends on the arbitrary orientation of $e$), and (2) each path $p_j$ starts at a node $s$ where $d_{i-1}(s) > 0$ and ends in a node $t$ where $d_{i-1}(t) < 0$. %
  Therefore, we have that $f = \sum_{j=1}^J \beta_j p_j$ where $\beta_j > 0$ and $p_j \in \R^{\vec{E}}$ is a unit path. Since $p_j$ is a unit path, we have that it is a feasible transshipment solution for some pair-demand $q_j$ (namely, the pair-demand $q_j := d_{s,t}$ where $s$ and $t$ are the endpoints of $p_j$), hence $\norm{q_j}_{\opt} \le W(p_j)$. Due to the flow decomposition, $\norm{d_{i-1}}_{\opt} = W(f) = \sum_{j=1}^J \beta_j W(p_j) \ge \sum_{j=1}^J \beta_j \norm{q_j}_{\opt}$. Furthermore, due to the restriction on the endpoints of the paths we also have that $\norm{d_{i-1}}_1 = \sum_{j=1}^J \beta_j \norm{q_j}_1$.
      
  Using this decomposition into paths, we have that
  \begin{align*}
    \Phi_{i-1} & = \norm{d_{i-1}}_\opt + \norm{d_{i-1}}_1 \cdot \rho^{i-1} \\
               & \ge \sum_{j=1}^J \left( \beta_j \norm{q_j}_{\opt} + \beta_j \rho^{i-1} \norm{q_j}_1 \right) \\
               & \ge \frac{1}{13 \alpha} \cdot \sum_{j=1}^J \left( \beta_j \norm{L q_j}_{\opt} + \beta_j \rho^{i} \norm{L q_j}_1 \right) \\
               & \ge \frac{1}{13 \alpha} \cdot \left[ \norm{L \sum_{j=1}^J \beta_j q_j}_{\opt} + \rho^{i} \norm{L \sum_{j=1}^J \beta_j q_j}_1 \right] \numberthis \label{eq:used-subadd} \\
               & = \frac{1}{13 \alpha} \cdot \left[ \norm{d_i}_{\opt} + \rho^{i} \norm{d_i}_1 \right] = \frac{1}{13 \alpha} \cdot \Phi_i.
  \end{align*}
  Note that in \Cref{eq:used-subadd} we used $\norm{a+b} \le \norm{a} + \norm{b}$ for $\norm{\cdot}_{\opt}$ and $\norm{\cdot}_1$. Rewriting, we have that $\Phi_i \le 13 \alpha \cdot \Phi_{i-1}$ as required, if one assumes the claim for all pair demand, with high probability.
  %
\end{proof}

Finally, with all intermediate steps in place, we prove the main result of this section.
\begin{proof}[Proof of \Cref{theorem:oblivious-routing-approximation}]
  Fix a demand $d_0 := d$ and the value of the optimal solution $\opt := \norm{d}_{\opt}$. Using \Cref{lemma:single-step-analysis}, we have that w.h.p. 
  \begin{align*}
    \Phi_0 &= O(\opt), \\
    \Phi_i &\le (13 \alpha)^i \cdot \opt.
  \end{align*}
  
  In the $i^{th}$ step, for $i \in \{1, \ldots, \imax\}$, let $f_i$ denote the flow which is constructed via \Cref{algo:oblivious-routing-construction}. The total movement $W(f_i)$ can be bounded in the following way:
  \begin{align*}
    W(f_i) \le \rho^i ||d_{i-1}||_1 \le \rho \cdot \Phi_{i-1} \le \rho \cdot (13 \alpha)^{i-1} \cdot O(\opt) \le \exp\left(O(\log n \cdot \log \log n)^{3/4}\right) \cdot \opt .
  \end{align*}

  Summing up over all steps $i \in \{1, \ldots, \imax\}$, the total movement of $f = \sum_{i=1}^{\imax} f_i$ is at most
  \begin{align*}
    W(f) \le \imax \cdot \exp\left(O(\log n \cdot \log \log n)^{3/4}\right) \cdot \opt = \exp\left(O(\log n \cdot \log \log n)^{3/4}\right) \cdot \opt .
  \end{align*}

  We now analyze the final aggregation along an arbitrary spanning tree $T$ towards an arbitrarily chosen root $r$. The total movement of the final aggregation is $n^{C+1} \cdot \norm{d_{\imax}}_1$, where $n^C$ is the largest edge weight (by assumption, $C>0$ is a constant).

  \Cref{lemma:single-step-analysis} implies that $\rho^{\imax} \cdot \norm{d_{\imax}}_1 \le \Phi_{\imax} \le (13 \alpha)^{\imax} \cdot O(\opt)$. Remembering that $\rho^{\imax} \ge n^{C+1}$, we have that the final aggregation contributes $n^{C+1} \cdot \norm{d_{\imax}}_1 \le \exp\left(O(\log n \cdot \log \log n)^{3/4}\right) \cdot \opt$ to the cost. Adding all contributions together, we conclude that the $\ell_1$-oblivious routing yields an $\exp\left(O(\log n \cdot \log \log n)^{3/4}\right)$ approximation.
%
  %
\end{proof}

\subsection{Linear-algebraic interpretation of the routing}\label{sec:linear-algebra-oblivious-routing}

In this section, we develop an algebraic interpretation of the routing matrix $R$ constructed by \Cref{algo:oblivious-routing-construction} via simpler (LDD-induced) routing matrices.

Suppose that $G = (V, E)$ is an undirected graph. Edges are oriented arbitrarily by specifying a linear operator (i.e., matrix) $B\in\R^{V \times \vec{E}}$ which maps a flow $f \in \R^{\vec{E}}$ to the demand $d = B f$ that $f$ routes (defined in \Cref{sec:prelim-transshipment}). On the other hand, an oblivious routing is a linear operator (i.e., matrix) $R \in \R^{\vec{E} \times V}$ that maps a demand $d\in \R^V$ to a flow $f = R d \in \R^{\vec{E}}$.

Given an LDD $\calP$, we define the oblivious routing with respect to $\calP$ as follows. Let the components (partitions) of $\calP$ be $S_1 \subseteq V, \ldots, S_k \subseteq V$ with corresponding centers $\{ c_i \in S_i \}_i$, and let $\{T_i\}_i$ be the shortest path trees of $S_i$ rooted at $c_i$ (with all edges oriented from the root outwards). We define $R(\calP) \in \R^{\vec{E} \times V}$ to be the routing matrix that routes the demand along $T_i$ towards the root. More precisely, $R(\calP) \cdot d$ is the flow that, for each node $s \in S_i$, routes $d(s)$ amount of flow via the leaf-to-root path of $T_i$ (note that the output of $R(\calP) \cdot d$ needs to match the orientation determined by the matrix $B$, hence some components might need to have their sign flipped).

With respect to \Cref{algo:oblivious-routing-construction}, let $R$ denote the oblivious routing performed by the entire procedure, and $R_i$ be the routing that is performed in the $i^{th}$ step, i.e., $R_i$ is performed on $d_{i-1}$ in order to produce the next-step demand $d_i$. We specifically define $R_{\imax+1}$ to be routing in the final aggregation step, i.e., routing along the spanning tree $T$ in Step~\ref{algo-step:spanning-tree} of the algorithm.

By definition of $R_i$, the flow routed in the $i^{th}$ step is $R_i d_{i-1}$. This routes the demand $B R_i d_{i-1}$, hence the residual demand that needs to be routed is $d_i = d_{i-1} - B R_i d_{i-1} = ( I - B R_i ) d_{i-1}$. On the other hand, the routing $R_i$ is constructed by averaging out the routings with respect to $g$ LDDs $\calP^i_1, \ldots, \calP^i_g$. In other words, $R_i = \frac{1}{g} R(\calP^i_1) + \frac{1}{g} R(\calP^i_2) + \ldots + \frac{1}{g} R(\calP^i_g)$.

\begin{lemma}\label{lemma:formula-for-R}
  It holds that
  $$R = \sum_{i=1}^{\imax+1} R_i \cdot (I - B R_{i-1}) \cdot (I - B R_{i-2}) \cdot \ldots \cdot (I - B R_1),$$
  where
  $$R_i = \frac{1}{g} R(\calP^i_1) + \frac{1}{g} R(\calP^i_2) + \ldots + \frac{1}{g} R(\calP^i_g).$$
\end{lemma}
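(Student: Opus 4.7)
The plan is a direct unrolling of the recursion that defines the $d_i$'s together with the observation that the total flow produced by Algorithm~\ref{algo:oblivious-routing-construction} is, by construction, the sum of the per-step flows.

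First, I would establish the recursion for the residual demand. In step $i$, the algorithm routes the flow $f_i := R_i d_{i-1}$, which by definition of $B$ satisfies the demand $B f_i = B R_i d_{i-1}$. Hence the new residual demand is
\[
d_i \;=\; d_{i-1} - B R_i d_{i-1} \;=\; (I - B R_i)\, d_{i-1}.
\]
Iterating this identity from $i-1$ down to $1$ gives
\[
d_{i-1} \;=\; (I - B R_{i-1})(I - B R_{i-2}) \cdots (I - B R_1)\, d_0.
\]

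Next, since the overall routing $R$ is, by definition, the linear operator mapping $d_0$ to the total flow $f = \sum_{i=1}^{\imax+1} f_i$ (including the final spanning-tree step, which is exactly the role of $R_{\imax+1}$), we obtain
\[
R\, d_0 \;=\; \sum_{i=1}^{\imax+1} R_i\, d_{i-1} \;=\; \sum_{i=1}^{\imax+1} R_i\, (I - B R_{i-1})(I - B R_{i-2}) \cdots (I - B R_1)\, d_0.
\]
This holds for every $d_0 \in \R^V$, which yields the first identity of the lemma. (For $i = 1$ the empty product is interpreted as $I$, and for $i = \imax+1$ the single LDD-step is replaced by routing along the spanning tree $T$.)

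The second identity is immediate from step~4(b) of the algorithm: in step $i$, each node $v$ sends $\frac{1}{g} d_{i-1}(v)$ toward the center of its component in each of the $g$ independently sampled LDDs $\calP_1^i, \ldots, \calP_g^i$. Since $R(\calP_j^i)$ is precisely the linear operator that routes a demand to the component centers along the corresponding shortest-path trees, superposition gives $R_i = \frac{1}{g}\sum_{j=1}^{g} R(\calP_j^i)$. I do not expect any real obstacle here; the only point requiring care is to check the sign conventions so that the flows $R(\calP_j^i) d$ are expressed consistently with the fixed orientation $\vec{E}$ used by $B$, which is a bookkeeping matter rather than a genuine difficulty.
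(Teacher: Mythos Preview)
Your proposal is correct and follows essentially the same approach as the paper: both establish the recursion $d_i = (I - B R_i)\, d_{i-1}$, unroll it to express $d_{i-1}$ as a product applied to $d_0$, and then sum the per-step flows $f_i = R_i d_{i-1}$ to obtain the formula for $R$. The second identity is likewise handled identically, the paper having argued it in the discussion immediately preceding the lemma.
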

\begin{proof}
  We already argued the second identity.

  Fix some demand $d_0 \gets d$. Let $f$ be the total flow that is routed over all steps when routing $d$, and $f_i$ be the flow that is routed during the $i^{th}$ step of \Cref{algo:oblivious-routing-construction}. Specially, let $f_{\imax}$ be the flow routed in the final aggregation step (Step~\ref{algo-step:spanning-tree}). By definition, we have that $f = \sum_{i=1}^{\imax+1} f_i$. Furthermore, by definition of $R_i$ we have that $f_i = R_i d_{i-1}$. Finally, as argued before, we have that $d_i = ( I - B R_i ) d_{i-1}$. Combining all of these, we have that
  \begin{align*}
    R d & = f = \sum_{i=1}^{\imax+1} f_i = \sum_{i=1}^{\imax+1} R_i d_{i-1} \\
        & = \sum_{i=1}^{\imax+1} R_i (I - B R_{i-1}) d_{i-2} \\
        & = \sum_{i=1}^{\imax+1} R_i (I - B R_{i-1}) \ldots (I - B R_1)d_0 .
  \end{align*}
  In other words, $R = \sum_{i=1}^{\imax+1} R_i (I - B R_{i-1}) \ldots (I - B R_1)$, as required.
\end{proof}

\section{The Distributed Minor-Aggregation Model}\label{sec:dmam}

We contribute to the low-congestion shortcut framework by giving a simple yet powerful interface called the \emph{Distributed Minor-Aggregation model} which makes the recent advances in theoretical distributed computing such as universal optimality~\cite{HWZ21} and oblivious shortcut constructions~\cite{HWZ21,ghaffari2021hop} more accessible. The Distributed Minor-Aggregation Model offers a high-level interface which can be used to succinctly describe many interesting distributed algorithms. In spite of this expressiveness, any algorithm in the Minor-Aggregation model can be efficiently simulated in the standard CONGEST model. To demonstrate this, we describe a simple universally-optimal algorithm for the minimum spanning tree (MST) in \Cref{example:MST-using-dmam}.

\begin{definition}[Distributed Minor-Aggregation Model]\label{def:dmam}
  We are given an undirected graph $G = (V, E)$. Both nodes and edges are individual computational units (i.e., have their own processor and private memory). All computational units wake up at the same time and start communicating in synchronous rounds. Arbitrary local computation is allowed between (each step of) each round. Initially, nodes only know their unique $\tilde{O}(1)$-bit ID and edges know the IDs of their endpoint nodes. Each round of communication consists of the following three steps (in that order).
  \begin{itemize}
  \item \textbf{Contraction step.} Each edge $e$ chooses a value $c_e = \{\bot, \top\}$. This defines a new \emph{minor network} $G' = (V', E')$ constructed as $G' = G / \{ e : c_e = \top \}$, i.e., we contract all edges with $c_e = \top$. Vertices $V'$ of $G'$ are called supernodes, and we identify supernodes with the subset of nodes $V$ it consists of, i.e., if $s \in V'$ then $s \subseteq V$. 

  \item \textbf{Consensus step.} Each individual node $v \in V$ chooses a $\tilde{O}(1)$-bit value $x_v$. For each supernode $s \in V'$ we define $y_s := \bigoplus_{v \in s} x_v$, where $\bigoplus$ is some pre-defined aggregation operator. All nodes $v \in s$ learn $y_s$.

  \item \textbf{Aggregation step.} Each edge $e \in E'$ connecting supernodes $a \in V'$ and $b \in V'$ learns $y_a$ and $y_b$, and chooses two $\tilde{O}(1)$-bit values $z_{e, a}, z_{e, b}$ (i.e., one value for each endpoint). Finally, (every node of) each supernode $s \in V'$ learns the aggregate of its incident edges in $E'$, i.e., $\bigotimes_{e \in \text{incidentEdges(s)}} z_{e, s}$ where $\bigotimes$ is some pre-defined aggregation operator. All nodes $v \in s$ learn the same aggregate value (this might be relevant if the aggregate is non-unique).
  \end{itemize}
\end{definition}

\medskip

\textbf{Distributing the input and output.} Distributed Minor-Aggregation model is simply a communication models upon which one can run various algorithms. The goal is typically to consider a problem like transshipment, SSSP or MST, and design a Minor-Aggregation algorithm that provably terminates with a correct answer in the smallest possible number of rounds. At start, each node/edge receives problem-specific input. This input is distributed among the network in the following way. Such problems are performed on a weighted graph, hence the weight are distributed in a way that each edge $e$ initially knows its weight $w(e)$. Note: the weight does not influence the communication (i.e., always takes $1$ round regardless of the weights). Furthermore, for transshipment, each node $v$ additionally knows its demand value $d(v)$. For SSSP, all nodes and edges additionally know the ID of the source node. Similarly, upon termination, the output is also required to be \emph{stored distributedly}. For example, for the minimum spanning tree (MST) problem, at termination, each edge should know whether it is a part of the MST or not. For SSSP, upon termination, each node should know its distance to the source and each edge whether it is a part of the SSSP tree. For transshipment, upon termination, each edge $e$ should know its part of the flow $f(e)$ and each node $v$ should know its potential $\phi(v)$.

\textbf{Polylogarithmic factors.} We note that the above definition extensively uses the $\tilde{O}$-notation, thereby ignoring polylogarithmic factors (unlike CONGEST which only ignores constant factors). This is due to the fact that the goal of the Minor-Aggregation model is to illuminate the influence of polynomial factors on the runtime of distributed algorithms, while at the same time keeping the framework as simple as possible. Ignoring $\poly(\log n)$ factors is the standard in the literature for distributed global problems like MST or SSSP as the current state-of-the-art is also mostly focused on optimizing polynomial factors and the algorithmic descriptions of global distributed problems get significantly more complicated when one starts optimizing logarithmic factors.


\textbf{Simulation in CONGEST.} While this is not obvious, an algorithm in the Minor-Aggregation model can be efficiently simulated in the standard CONGEST model if one can efficiently solve the part-wise aggregation (PA) problem. Combining this with the very recent work~\cite{GHR21} which solves PA in $\shortcutquality(G) \cdot n^{o(1)}$ rounds (see \Cref{thm:pa-simulation}), any $\tau$-round Minor-Aggregation algorithm can be compiled into a $\tau \cdot \poly(\shortcutquality(G)) \cdot n^{o(1)}$-round CONGEST algorithm. Moreover, on many graph classes this result can be improved down to $\tau \cdot \tilde{O}(\shortcutquality(G))$ (see \Cref{theorem:shortcut-on-specific-graphs}) and potential future improvements in hop-constrained oblivious routing constructions might make lead to unconditional $\tau \cdot \tilde{O}(\shortcutquality(G))$ simulations (which would be near-optimal). The proof of the following simulation theorem is deferred to \Cref{sec:deferred-proofs}. It implies many useful concrete bounds for specific graph classes (see \Cref{theorem:shortcut-on-specific-graphs} for a list).

\begin{restatable}{theorem}{ThmAggregationCongest}\label{thm:dmam}
  Any $\tau$-round Minor-Aggregation algorithm on $G$ can be simulated in the (randomized) CONGEST model on $G$ in $\tau \cdot \poly(\shortcutquality(G)) \cdot n^{o(1)}$ rounds. More generally, if one can solve PA in $\tau_{PA}$ rounds, any $\tau$-round Minor-Aggregation algorithm can be simulated in $\tilde{O}(\tau \cdot \tau_{PA})$ (randomized) CONGEST rounds.
\end{restatable}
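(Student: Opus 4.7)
The plan is to show that each round of the Distributed Minor-Aggregation model can be implemented using a constant number of invocations of the Part-wise Aggregation (PA) primitive (together with $O(1)$ rounds of local $G$-edge communication), so that the total CONGEST round complexity is $\tilde{O}(\tau \cdot \tau_{PA})$; plugging in the $\tau_{PA} = \poly(\shortcutquality(G)) \cdot n^{o(1)}$ PA algorithm of \cite{GHR21} then yields the headline bound.

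First I would deal with the contraction step. After each edge $e$ chooses $c_e \in \{\bot, \top\}$, the supernodes of the resulting minor $G'$ are exactly the connected components of the subgraph $H = (V, \{e : c_e = \top\})$ of $G$. Since these components are (i) node-disjoint and (ii) \emph{connected in $G$} by construction, they form a valid collection of ``parts'' for \Cref{def:pa-task}. We do not need to globally materialize or label the parts; every subsequent communication primitive only requires that the parts be connected subgraphs of $G$ available for PA, which is already satisfied.

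Next, the consensus step is essentially the definition of PA. Each node $v$ holds a private input $x_v$ and we want every node in the supernode $s$ to learn $\bigoplus_{v \in s} x_v$. Invoking the PA subroutine on the parts induced by the contracted edges produces exactly the values $y_s$, and it does so in $\tau_{PA}$ CONGEST rounds. For the aggregation step, I use the fact that every edge $e \in E'$ corresponds to some non-contracted edge in $G$ whose two endpoints $u \in a$ and $v \in b$ already learned $y_a$ and $y_b$, respectively, in the preceding consensus step; a single CONGEST round along this $G$-edge lets both endpoints learn both $y_a$ and $y_b$, and hence compute $z_{e,a}$ and $z_{e,b}$ via local computation. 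To then form $\bigotimes_{e \in \text{incidentEdges}(s)} z_{e,s}$, each node $u \in s$ first locally aggregates the $z_{e,s}$ values over its own incident non-contracted $G$-edges (i.e.\ those with $c_e = \bot$); a second PA call over the same parts aggregates these per-node values into the supernode aggregate, which is delivered to every node of $s$ as required.

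Putting these pieces together, a single Minor-Aggregation round is simulated using two PA invocations plus $O(1)$ rounds of local $G$-edge communication, for a total of $\tilde{O}(\tau_{PA})$ CONGEST rounds; a $\tau$-round Minor-Aggregation algorithm therefore compiles into $\tilde{O}(\tau \cdot \tau_{PA})$ CONGEST rounds, and the more refined bound $\tau \cdot \poly(\shortcutquality(G)) \cdot n^{o(1)}$ follows from \Cref{thm:pa-simulation}. The main conceptual obstacle, and the one the above argument is designed to sidestep, is that PA is defined for a \emph{pre-specified} partition of $V$ into connected parts, whereas the Minor-Aggregation model allows the partition to change arbitrarily from round to round based on local decisions $c_e$; the key observation making the reduction clean is that the parts produced by the contraction step are automatically connected in $G$, so no separate round is ever needed to ``build'' the parts, and the two aggregation primitives of the model map directly onto the two natural uses of PA (aggregating node values inside a part, and aggregating boundary-edge values around a part).
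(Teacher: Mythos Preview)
Your overall plan is right, and the way you map the consensus and aggregation steps onto PA calls matches the paper's in spirit. But you skip one ingredient that the paper handles explicitly and that is genuinely needed: establishing a \emph{supernode identifier} for every node.

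The PA black box invoked via \Cref{thm:pa-simulation} is stated for a partition in which each node knows its part ID; indeed, the paper's own proof writes ``each node uses the leader's ID as their part ID'' when calling PA. After the contraction step, a node only knows which of its incident edges have $c_e=\top$; it has no identifier for its supernode, so you cannot directly hand this partition to the PA solver. Your sentence ``no separate round is ever needed to `build' the parts'' glosses over exactly this point. The paper closes the gap with a Boruvka-style leader election: start from singleton clusters (trivially labeled by their own ID), randomly mark clusters heads or tails, merge each tails cluster into a neighboring heads cluster across a contracted edge, and use PA inside the currently-labeled clusters to propagate the new leader ID. After $O(\log n)$ iterations every node knows its supernode's leader ID.

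Supernode IDs are also needed in your aggregation step, and here there is a concrete bug: you aggregate $z_{e,s}$ over \emph{all} incident non-contracted $G$-edges. But a $G$-edge $e=\{u,v\}$ with $c_e=\bot$ whose endpoints lie in the \emph{same} supernode (connected through other contracted edges) is a self-loop in $G'$, hence $e\notin E'$, and must be excluded from $\bigotimes_{e\in\text{incidentEdges}(s)} z_{e,s}$. Detecting this requires $u$ and $v$ to compare supernode IDs; equality of the consensus values $y_a$ and $y_b$ alone does not certify $a=b$. The paper's proof explicitly checks ``assuming $s_a\neq s_b$'' at this point, using the IDs produced by leader election.

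Neither issue affects the final $\tilde{O}(\tau\cdot\tau_{PA})$ bound---the leader election adds only $O(\log n)$ PA calls per simulated round---but both are real omissions in the reduction as you wrote it.
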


\textbf{Operating on minors.} A particularly appealing feature of the Minor-Aggregation model is that any algorithm can be run on a minor of the original communication network in a black-box way. The framework allows the algorithm to be completely unchanged when ran on a minor of a graph rather than on the original graph. Several aspects of the Minor-Aggregation model make operating on minors possible. For instance, in a notable difference with CONGEST, nodes in the Minor-Aggregation model do not know a list of their neighbors\footnote{Moreover, it is not hard to see that a model which allows contractions and gives nodes a list of their neighbors cannot be simulated in CONGEST with $o(n)$ round blowup.}. The following corollary is immediate from the definition.

\begin{corollary}\label{corollary:computation-on-minors}
Let $G=(V,E)$ be an undirected graph, and let $F \subseteq E$ be a subset of edges. Any $\tau$-round Minor-Aggregation algorithm on a minor $G' = G / F$ of $G$ can be simulated via a $\tau$-round Minor-Aggregation algorithm on $G$. Initially, each edge $e \in E$ needs to know whether $e \in F$ or not. Upon termination, each node $v$ in $G$ learns all the information that the $G'$-supernode $v$ was contained in learned.
\end{corollary}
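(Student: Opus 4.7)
The plan is to simulate the $G'$-algorithm on $G$ while maintaining the invariant that throughout the simulation each node $v \in V$ stores the full state of the $G'$-node $v' \ni v$ (i.e., the supernode of $G/F$ that contains $v$), and each edge $e \in E \setminus F$ stores the state of the corresponding edge of $G'$. Edges in $F$ are passive and only need to remember that they are in $F$, which is given initially.

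First I would run a one-round setup phase: each $e \in F$ picks $c_e = \top$, each $e \notin F$ picks $c_e = \bot$, and in the consensus step every $v \in V$ submits its own ID under $\bigoplus = \min$. After this, the supernodes are exactly the $v' \in V'$, and each node $v$ learns the minimum ID in its $v'$; the node whose own ID matches is designated the leader of $v'$. This one extra round is absorbed into the $\tau$ bound (or into constants).

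To simulate round $t$ of the $G'$-algorithm, in the contraction step each $e \in F$ always picks $c_e = \top$ and each $e \in E \setminus F$ picks $c_e$ according to what the $G'$-algorithm prescribes for the corresponding edge (computable from the edge's stored state). The resulting minor has supernodes $s = \bigcup_{v' \in s'} v'$ in exact bijection with the supernodes $s'$ of $G'/F'_t$, where $F'_t$ is the $G'$-algorithm's chosen set. For the consensus step, each node $v$ uses its stored state to compute $x_{v'}$, and then contributes $x_{v'}$ if it is the leader of $v'$ and the identity element of $\bigoplus$ otherwise; this makes the aggregate over $s$ equal to $\bigoplus_{v' \in s'} x_{v'} = y_{s'}$, and every $v \in s$ (hence every $v \in v' \subseteq s$) learns it, preserving the invariant. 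The aggregation step is handled analogously: each $e \in E \setminus F$ not contracted this round plays the role of the corresponding $G'$-edge, using the endpoint values $y_{s_a} = y_{s'_a}$ and $y_{s_b} = y_{s'_b}$ to pick the same $z_{e, \cdot}$ as in $G'$, and the edge-aggregate at each $s$ matches that at $s'$ because the bijection of supernodes preserves the multiset of non-self-loop incident edges.

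The main obstacle is the consensus step: naively having every $v \in v'$ submit $x_{v'}$ would over-count under non-idempotent aggregators like sum, and the leader-election setup is exactly what sidesteps this. A small bookkeeping worry is how the model treats self-loops and parallel edges produced by contraction, but since the two minors $G/(F \cup F'_t)$ and $G'/F'_t = (G/F)/F'_t$ are equal as minors (contracting in two stages versus one), the incidence structure seen by the aggregation step is identical under either convention, so no correction is needed.
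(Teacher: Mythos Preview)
The paper does not actually prove this corollary; it simply states that it ``is immediate from the definition'' and moves on. Your write-up is therefore strictly more detailed than what the paper offers, and your core simulation idea---always force $c_e=\top$ on $F$, so that the supernodes of $G$ after round-$t$ contraction are in bijection with the supernodes of $G'$ after its round-$t$ contraction---is exactly the right one and is presumably what the authors had in mind.

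Two small points worth tightening. First, your leader-election setup costs an extra round, so strictly speaking you get a $(\tau+1)$-round simulation rather than a $\tau$-round one; this is irrelevant for every use of the corollary in the paper (round counts are always stated up to $\tilde O(1)$ or $n^{o(1)}$ factors), but if you want the literal statement you should either argue that the setup can be folded into the first simulated round or note that the paper's $\tilde O(1)$ convention absorbs it. Second, you assume the aggregation operator $\bigoplus$ has an identity element so that non-leaders can submit it; the model as defined in \Cref{def:dmam} only asks for commutativity and associativity. This is easily patched by adjoining a formal identity using one extra bit of payload, but it is worth saying explicitly. Your observation about over-counting under non-idempotent $\bigoplus$ is a genuine subtlety that the paper glosses over, and your leader-based fix is the standard and correct way to handle it.
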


\textbf{An example: MST.} The following example illustrates the expressive power and efficiency of the Minor-Aggregation model framework. We describe an $O(\log n)$-round Minor-Aggregation algorithm for MST. Combining with the simulation \Cref{thm:dmam} and lower bound \Cref{thm:lower-bound-shortcuts}, this implies a $\tilde{O}(1)$-competitive universally-optimal distributed algorithm for MST in CONGEST. We note that, on a weighted planar graph, this algorithm provably completes in $\diameter(G) \cdot n^{o(1)}$ CONGEST rounds; similar results exist for many other graphs, see \Cref{theorem:shortcut-on-specific-graphs}.
\begin{example}[MST]
  \label{example:MST-using-dmam}
  The minimum spanning tree of a weighted graph $G$ can be computed via a $O(\log n)$-round Minor-Aggregation algorithm on $G$.
\end{example}
\begin{proof}
  We can directly implement Boruvka's algorithm~\cite{nevsetvril2001otakar}. In each round, every node $v$ finds the minimum weight $m_v$ of an edge incident to it. This is clearly doable in the the Minor-Aggregation model: no contractions and no consensus are necessary; each edge simply reports its weight to both of its endpoints and the aggregation operator $\bigotimes$ is simply the $\min$ operation.

  We say that an edge $e = \{u, v\}$ is \emph{marked} if its weight is equal to $m_u$ or to $m_v$. Each edge can identify whether it is marked in a single aggregation step (by choosing $y_s := m_s$ for each node $s$). Finally, we add all marked edges to the MST and contract them. We repeat the algorithm on the contracted graph (\Cref{corollary:computation-on-minors}). After $O(\log n)$ iterations, the graph shrinks to a single node and each edge knows whether it is in the MST or not.
\end{proof}

We obtain a $\tilde{O}(\diameter(G))$-round CONGEST algorithm for weighted planar graphs by combining the above $O(\log n)$-round Minor-Aggregation algorithm with the CONGEST simulation \Cref{thm:dmam}, which stipulates that algorithms on planar (or more generally, minor-free) networks can be simulated with a $\tilde{O}(\diameter(G))$ overhead.

\section{Distributed Computation of the $\ell_1$-Oblivious Routing}\label{sec:distributed-implementation-oblivious-routing}

In this section, we show how to distributedly implement \Cref{algo:oblivious-routing-construction} using $n^{o(1)}$ Minor-Aggregation rounds. To this end, we first build a few necessary building blocks in \label{sec:distributed-storage-and-basic-operations} which will simplify the distributed implementation of our algorithms in the remaining sections. We then proceed to prove \Cref{theorem:distributed-evaluation-routing} in \Cref{sec:distributed-R-Rt-evaluation}.

\subsection{Preliminary: distributed storage and basic linear algebra operations}\label{sec:distributed-storage-and-basic-operations}

It is often easier to describe algorithms a purely linear-algebraic setting without going into the low-level details of how and where to store individual values required to specify distributed algorithms. In order to streamline the description of (linear-algebraic) distributed algorithms we first define the notion of distributed storage for the Minor-Aggregation as follows.

\begin{enumerate}
\item We distributedly store a \textbf{node vector} $x \in \R^V$ by storing the value $x_v$ in the node $v \in V$. Similarly, given an \textbf{edge vector} $x \in \R^{\vec{E}}$ we store the value $x_e$ in the edge $e$ (we remind the readers that edges are computational units in the Minor-Aggregation model).

\item We distributedly store a \textbf{spanning subgraph} $H \subseteq G$ (where $G$ is the communication network) by storing distributedly storing the indicator edge vector $x_e = \1{e \in E(H)}$.

\item As explained in \Cref{sec:dmam}, the input to transhipment, namely the weights (edge vector) and the demand $d$ (node vector), are distributedly stored. Each node $v$ knows its part of the demand $d(v)$. Upon output, we require the flow $f$ (edge vector) and potential $\phi$ (node vector) to be distributedly stored. The specification is analogous for SSSP: on input, we distributedly store the edge weights and require that all nodes know the ID of the source $s$. Upon output, we require the SSSP tree (spanning subgraph) and distances (node vector) to be distributedly stored.

\end{enumerate}

Furthermore, we verify that the following linear-algebraic graph operations can be evaluated quickly. In the following suppose $a, b$ are two distributedly stored node or edge vectors.
\begin{enumerate}
\item \textbf{Vector addition and component-wise transforms.} Without any extra communication we can compute and distributedly store (1) the sum vector $a + b$, or (2) the vector $( f(a_i) )_i$, i.e., the vector $a$ with $f : \R \to \R$ applied component-wise. 
  
\item \textbf{Dot products and $\ell_p$ norms.} With a single Minor-Aggregation round we can compute and broadcast to all nodes the value of (1) the dot product $a^T b$, or (2) $\ell_p$-norm of a vector, i.e., $(\sum_i |a_i|^p)^{1/p}$. We show why for the case of dot products. Suppose $a, b$ are node vectors. Each node $v$ evaluates $a_v \cdot b_v$ and stores it as its private input $x_v$. We contract all edges of the graph and apply the consensus step with the plus-operator, which broadcasts the dot product $\sum_{v \in V} a_v \cdot b_v = a^T b$ to all nodes. For the case of edge vectors, we contract all edges, and perform an aggregation step with the $+$-operator where each edge $e = \{a, b\}$ computes $a_e \cdot b_e$ and sets $z_{e, a} = a_e \cdot b_e$, $z_{e, b} = 0$. This informs all nodes about $a^T b$. Computing $\ell_p$ norms is analogous.

\item \textbf{Multiplication with $B$ and $B^T$.} With a single Minor-Aggregation round we can compute and distributedly store $B \cdot a$ and $B^T \cdot b$, where $a \in \R^{\vec{E}}$ is an edge vector and $b \in \R^V$ is a node vector. To compute $B \cdot a$, we leave all edges uncontracted and perform an aggregation step with the $+$-operator, where each edge $e = \{u, v\}$ with orientation $\vec{e} = (u, v)$ sets $z_{e,u} = +a_u$ and $z_{e, v} = -a_v$. Upon completion of the step, a node $w \in V$ learns $(B \cdot a)_w$, as required for distributed storage. To compute $B^T \cdot b$, we perform a consensus step without any contracted edges with each node $v$ setting $y_v = x_v = b_v$ as its private input. Each edge $\vec{e} = (u, v)$ learns $b_u$ and $b_v$ and computes $(B^T b)_e = b_u - b_v$, as required.

\item \textbf{Multiplication with $W$.} With a single Minor-Aggregation round we can compute and distributedly store $W \cdot a$, where $a \in \R^{\vec{E}}$ is an edge vector. 
Since for each edge $e = (u, v)$, $a_e$ and $w_e$ are stored at both vertex $u$ and $v$,
every node $u \in V$ learns $W\cdot a$ for all the edges incident to $u$.
\end{enumerate}

\subsection{Distributed evaluation of $R$ and $R^T$}\label{sec:distributed-R-Rt-evaluation}

In this section we show the following result.

\theoremDistributedEvaluationRouting*

The rest of the section is dedicated to proving \Cref{theorem:distributed-evaluation-routing}. Inspecting \Cref{algo:oblivious-routing-construction}, we require distributed implementations of a few subroutines. First, we require a way to efficiently sample from the LDD distribution. To this end, we leverage the following theorem from prior work~\cite{HL18}.
\begin{restatable}{lemma}{lemmaDistributedLddSampling}[LDD sampling~\cite{HL18}] 
  \label{theorem:distributed-LDD-sampling}
  Suppose $G$ is a weighted graph $G$ with weights in $[1, n^{O(1)}]$. For any $\rho \ge 1$, there exists a distributed algorithm which samples an LDD from a distribution of radius $\rho$ and quality $\exp(O(\sqrt{\log n \cdot \log \log n}))$ in $\exp(O(\sqrt{\log n \cdot \log \log n}))$ rounds of Minor-Aggregation.

  Upon termination, each node $v$ knows the center node $c_i$ of its LDD component $S_i$, and $v$'s parent edge in the shortest path tree of $G[S_i]$ that is centered at $c_i$.
\end{restatable}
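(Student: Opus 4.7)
The plan is to implement the classical Miller–Peng–Xu (MPX) low-diameter decomposition in the Minor-Aggregation model, with a bounded-hop shortest path approximation responsible for the subpolynomial slack. MPX samples for each vertex $v$ an independent exponential shift $\delta_v \sim \mathrm{Exp}(\beta)$ with rate $\beta = \Theta(q/\rho)$ for a quality parameter $q$, and then assigns each node $u$ to the center $c(u) = \arg\min_v (\dist_G(v,u) - \delta_v)$ with ties broken by ID. One then truncates the assignment at distance $\rho$ from each center so that every component has radius at most $\rho$ deterministically, while the two-point separation probability is at most $\beta \cdot \dist(x,y) = (q/\rho)\cdot\dist(x,y)$ by the memorylessness of the exponential distribution. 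Taking $q = O(\log n)$ would give an optimal LDD, and using the bounded-hop approximate distances discussed below will blow this up to $\exp(O(\sqrt{\log n \cdot \log \log n}))$.

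My plan for the distributed implementation is to reduce the assignment to a multi-source shifted Bellman–Ford. Every vertex $v$ initializes a tuple $(\mu_v, c_v, \mr{parent}_v) = (-\delta_v, v, \bot)$ and for $h$ iterations performs the update $(\mu_v, c_v, \mr{parent}_v) \gets \arg\min_{e=\{v,u\}}(\mu_u + w_e,\, c_u,\, e)$ taken over incident edges, breaking ties by $c_u$. A single such iteration is exactly one Aggregation step of \Cref{def:dmam}: no edges are contracted; each edge $e = \{u,v\}$ learns $(\mu_u, c_u)$ via a Consensus step and then reports $(\mu_u + w_e, c_u)$ to $v$; the node aggregates with the lexicographic $\min$ operator. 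By induction, after $h$ iterations $\mu_v$ equals the minimum over all sources $s$ and all paths $P$ from $s$ to $v$ of length at most $h$ hops of $w(P) - \delta_s$, and $c_v, \mr{parent}_v$ identify the winning source and the last edge on the winning path, so that following parent pointers yields the required shortest-path tree of $G[S_i]$ rooted at the center.

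To achieve both radius $\rho$ and $h = \exp(O(\sqrt{\log n \log \log n}))$ hops we use a standard hopset that turns shortest paths of weighted length $\le \rho$ into paths of at most $h$ hops with multiplicative stretch $\alpha = \exp(O(\sqrt{\log n \log \log n}))$; such parameter tradeoffs are available from Cohen/Thorup–Zwick-type constructions, and these constructions are themselves amenable to the Minor-Aggregation model since they are built from iterated hitting-set selections plus bounded-hop relaxations which are aggregation-friendly. Running MPX on the hopset-augmented graph yields quality $O(\log n) \cdot \alpha = \exp(O(\sqrt{\log n \log \log n}))$ and radius $\rho$ after truncation, matching the claim. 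To produce a genuine parent edge in $G$ (rather than a hopset edge), we finally replace each hopset edge on a parent chain by the underlying $G$-path it represents; this does not affect the stored parent-edge information because we can pre-record, for each hopset edge, a single representative $G$-edge used to hop out of each endpoint.

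The main obstacle is the hopset part: specifically, constructing and using, in $\exp(O(\sqrt{\log n \log \log n}))$ Minor-Aggregation rounds, a hopset with the $(h,\alpha)$ tradeoff above. Everything else — the exponential sampling (local, no communication), the Bellman–Ford iteration (one aggregation round each), the extraction of centers and parent edges, and the truncation at distance $\rho$ — is mechanical. I would encapsulate the hopset construction as a subroutine whose own internals (hitting-set sampling, repeated $\tilde{O}(h)$-hop relaxations) are verified individually to fit the Minor-Aggregation interface, and then conclude by combining round counts: $h$ Bellman–Ford steps on top of a hopset built in $\exp(O(\sqrt{\log n \log \log n}))$ rounds, for a total complexity matching the quality parameter, exactly as stated.
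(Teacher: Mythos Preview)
Your approach differs substantially from the paper's, and the step you yourself flag as ``the main obstacle'' is a genuine gap. The paper does not use hopsets at all. Instead it iterates on a sequence of minors $G_0 = G, G_1, G_2, \ldots$: on $G_t$ it runs an MPX-style LDD of small radius $r := O(\log n)/\beta$ via $O(r)$ rounds of ball-growing, contracts each component to form $G_{t+1}$, and adds $r$ to every surviving edge weight. Since edges of $G_{t+1}$ then have weight at least $r$, a radius-$r^2$ ball in $G_{t+1}$ has hop-diameter at most $r$, so the next level again costs $O(r)$ rounds. After $t = O(\sqrt{\log n/\log\log n})$ levels the effective radius reaches the target $\rho$; the accumulated quality is $O(\log n)^t/\beta = \exp(O(\sqrt{\log n\log\log n}))$ and the total round count is $t\cdot O(r) = \exp(O(\sqrt{\log n\log\log n}))$ with $\beta=\exp(-\Theta(\sqrt{\log n\log\log n}))$. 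The point is that \emph{contraction}, which is native to the Minor-Aggregation model, plays the role of the hopset.

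Your hopset route leaves two issues unaddressed. First, constructing a hopset with the stated $(h,\alpha)$ tradeoff in the Minor-Aggregation model is not a black box: the Cohen/Thorup--Zwick constructions you invoke require computing distances from sampled centers to their clusters, which is itself a bounded-hop Bellman--Ford on $G$ where the hop bound may be as large as $\rho = n^{O(1)}$ (edges have weight $\ge 1$, so a length-$\rho$ path may have $\rho$ hops). This is circular unless the hop count is reduced recursively, which is exactly what the paper's iterated contraction accomplishes. Second, even granted a hopset $H$, you cannot run Bellman--Ford on $G\cup H$ in this model: a single relaxation step requires each node to learn the current label of each of its $H$-neighbors, but $H$-edges are not edges of $G$, and the model only allows communication along $G$-edges or contraction of them. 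Your remark about ``pre-recording a representative $G$-edge'' handles the output tree but not the relaxation itself. If you restructure the hopset as a hierarchy of disjoint contractible clusters so that one Minor-Aggregation round simulates one augmented-graph relaxation, you will essentially have reinvented the paper's algorithm.
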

\begin{proof}Deferred to \Cref{sec:deferred-proofs}.\end{proof}

In order to compute the flows obtained by routing the demand $d_{i-1}$ using the shortest path trees in each LDD component, we need to be able to quickly compute subtree sums of rooted trees. This is furnished by the following result from \cite[Theorem 5.1, full version]{dory2019improved}.

\begin{lemma}[Subtree sum~\cite{dory2019improved}]
  \label{theorem:subtree-sum}
  Let $G = (V, E)$ be a graph and let $F \subseteq G$ be a collection of node-disjoint rooted trees (i.e., a rooted forest) that are subgraphs of $G$. Initially, $F$ is stored distributedly (each edge know whether $e \in F$ and its orientation). Furthermore, each node $v$ has a $O(\log n)$-bit private input $x_v$. There exists a $\tilde{O}(1)$-round Minor-Aggregation algorithm such that, upon termination, each node $v$ learns both the sum of private values of all of its descendents and the sum of private values of all of its ancestors.
\end{lemma}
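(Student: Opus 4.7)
The plan is to adapt the classical rake-and-compress tree contraction scheme [Miller–Reif '85] to the Minor-Aggregation model. Each node $v$ maintains two accumulators: a descendant sum $D_v$ (initialized to $x_v$) and an ancestor sum $A_v$ (initialized to $0$). Each edge of $F$ carries a persistent status bit marking whether it has been ``processed'' (raked or compressed). At every Minor-Aggregation round, the contraction step contracts exactly the processed edges, giving the current working minor on which each alive node can determine its remaining $F$-degree via one aggregation step.

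Forward phase (descendant sums). We iterate $L = O(\log n)$ times. Rake: each alive leaf $v$ (having one unprocessed $F$-edge, to parent $p$) pushes $D_v$ to $p$ via an aggregation step; $p$ updates $D_p \leftarrow D_p + \sum_{v} D_v$, and the rake edges are marked processed. Compress: each alive degree-$2$ node flips a fair coin, and a parent–child $F$-edge between two degree-$2$ nodes is marked processed whenever it matches the standard tails-then-heads pattern. Within each chain that gets contracted in this step, the $D$-values accumulate via the consensus step after contraction. A standard analysis gives a constant-factor shrinkage in alive nodes per round, so after $L = O(\log n)$ rounds every tree reduces to a single supernode, and at the moment each $v$ is contracted, $D_v$ equals its descendant sum in $F$.

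Backward phase (ancestor sums). We iterate in reverse for $i = L, L-1, \ldots, 1$. At backward round $i$, we contract only those edges that were processed at forward rounds $< i$, effectively ``uncontracting'' the layer processed at forward round $i$. For each $v$ that was raked at forward round $i$ into its parent $p$, a single aggregation step transmits $A_p + x_p$ across the (now uncontracted) rake edge, and $v$ sets $A_v \leftarrow A_p + x_p$; here $A_p$ was already finalized at backward round $i+1$. For each chain compressed at forward round $i$, every interior chain node $v$ recovers $A_v$ as a locally stored prefix-sum of $x$-values along its chain (from forward compression) plus $A_{\text{top}} + x_{\text{top}}$, where ``top'' is the uppermost node of that chain.

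The main obstacle is correctly propagating ancestor sums through compressed chains: unlike a rake edge that involves only one original $F$-edge, a compressed chain merges many $F$-edges into one supernode, so each interior chain node must recover its relative position in the original chain. The key observation is that during a compression step each node $v$ on a chain can record, as a single auxiliary scalar, the partial sum $\sum_{u \text{ strictly above } v \text{ in the chain}} x_u$, and this scalar can be maintained inductively across successive compression rounds using the consensus step within the chain's growing supernode. With this bookkeeping in place, each backward layer costs $O(1)$ Minor-Aggregation rounds, so both passes together run in $O(\log n) = \tilde{O}(1)$ rounds.
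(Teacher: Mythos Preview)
The paper does not prove this lemma itself; it is quoted as \cite[Theorem~5.1, full version]{dory2019improved}, whose underlying technique (as the paper notes when invoking the companion tree-rooting result) is a heavy-light decomposition rather than Miller--Reif contraction. Your rake-and-compress outline is a plausible alternative route, but as written it has a real gap in the forward phase.

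You assert that ``at the moment each $v$ is contracted, $D_v$ equals its descendant sum in $F$.'' This holds when $v$ is \emph{raked} as a current leaf, but it is false when $v$ is \emph{compressed}. On the path $r\text{--}a\text{--}b\text{--}c\text{--}d$ rooted at $r$, if the edge $a\text{--}b$ is compressed in round~1 while $c,d$ are still unprocessed, then at that moment $D_a=x_a$ and $D_b=x_b$, whereas the correct descendant sums are $x_a{+}x_b{+}x_c{+}x_d$ and $x_b{+}x_c{+}x_d$. A consensus step over the new supernode $\{a,b\}$ can only deliver the \emph{same} aggregate to both nodes, so neither obtains its answer; the subtree below $b$ has simply not been seen yet. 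The standard remedy is to treat compressed nodes as deferred and finalize their descendant sums during the backward pass once the hanging subtree is processed---but your backward pass is devoted solely to ancestor sums, so descendant sums for compressed nodes are never computed. A related issue is your mechanism for per-node prefix sums along a chain: it cannot rest on ``the consensus step within the chain's growing supernode,'' because consensus by definition gives every node of a supernode the identical value. What you actually need is one extra round \emph{before} contracting each compress edge $e$ between supernodes $A$ (upper) and $B$ (lower): compute each side's total via consensus, then use the aggregation step so that $e$ passes $\mathrm{total}_A$ into $B$ (with all other incident edges contributing $0$), after which every node in $B$ can add $\mathrm{total}_A$ to its stored prefix. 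With both fixes---a backward pass that also finalizes descendant sums at un-compress time, and the edge-based transfer of partial totals in place of the consensus step---the rake-and-compress scheme does go through in $\tilde{O}(1)$ rounds.
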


\begin{lemma}\label{lemma:evaluating-LDD-routing-left-and-right}
  Suppose an LDD $\calP$ over a weighted graph $G$ is distributedly stored (in the sense of \Cref{theorem:distributed-LDD-sampling}). There exists a $\tilde{O}(1)$-round Minor-Aggregation algorithm that, given distributedly stored $d \in \R^V$ and $c \in \R^{\vec{E}}$, evaluates and distributedly stores $R(\calP) \cdot d \in \R^{\vec{E}}$ and $R(\calP)^T \cdot c \in \R^V$.
\end{lemma}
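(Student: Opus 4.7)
The plan is to reduce both computations to subtree and ancestor aggregations in the rooted forest $F := \bigcup_i T_i$ formed by the shortest path trees of the LDD components $S_i$ (each $T_i$ rooted at the center $c_i$), and then invoke \Cref{theorem:subtree-sum}. By \Cref{theorem:distributed-LDD-sampling}, each non-center node $v$ already knows its unique parent edge $e_v$ in the tree $T_i$ containing it, so $F$ is distributedly stored as required. For each tree edge $e \in F$ I define a sign $\sigma_e := +1$ if the fixed $\vec{E}$-orientation of $e$ runs from the child endpoint to the parent endpoint, and $\sigma_e := -1$ otherwise; each edge can compute $\sigma_e$ in a single consensus step in which both endpoints report which of them is the parent. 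Non-tree edges will store $0$ in both computations and need no further attention.

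For $R(\calP)\cdot d$, the routing sends $d(u)$ from every $u$ up the tree path to the center of its component. A direct check against the incidence matrix $B$, using that only the parent edge $e_v$ carries the flow contributed by $v$'s subtree, shows that for every tree edge $e_v$ with child endpoint $v$ the flow entry is $(R(\calP)\cdot d)_{e_v} = \sigma_{e_v}\cdot\sum_{u\in\mathrm{desc}(v)} d_u$, where $\mathrm{desc}(v)$ includes $v$ itself. Setting node private inputs $x_v := d_v$ and invoking \Cref{theorem:subtree-sum} in $\tilde O(1)$ Minor-Aggregation rounds, each node $v$ learns its descendant sum $S_v$. One further consensus step hands $S_v$ to the parent edge $e_v$, which stores $\sigma_{e_v}\cdot S_v$; all other edges store $0$. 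This distributedly stores $R(\calP)\cdot d$.

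For $R(\calP)^\top\cdot c$, unpacking the definition of $R(\calP)$ yields $(R(\calP)^\top c)_v = \sum_{e \in \mathrm{ancEdges}(v)} \sigma_e\, c_e$, the signed sum of $c$ over the unique tree path from $v$ to the center of its component. I assign to each non-center node $v$ the private input $x_v := \sigma_{e_v}\cdot c_{e_v}$, learned in one consensus step from its parent edge, and set $x_v := 0$ at each center. Invoking the ancestor-sum half of \Cref{theorem:subtree-sum} in $\tilde O(1)$ rounds, each $v$ then learns $\sum_{u\in\mathrm{anc}(v)} x_u$, which under the convention that ancestors include $v$ itself equals exactly $(R(\calP)^\top c)_v$.

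The total cost is $\tilde O(1)$ Minor-Aggregation rounds, dominated by the two calls to \Cref{theorem:subtree-sum} and a constant number of surrounding consensus/aggregation steps for propagating values between incident edges and nodes. The one subtlety I would verify carefully when writing the formal proof is the sign convention for $\sigma_e$: one must check that the edge vector produced for $R(\calP)\cdot d$ actually satisfies $B\cdot(R(\calP)d) = d$ whenever $d$ is a proper demand supported on a component, which reduces to a telescoping check along each root-to-leaf path. Everything else is a direct translation of the definitions, so I do not expect any further obstacle.
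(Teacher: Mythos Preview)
Your proposal is correct and follows essentially the same approach as the paper: both reduce the computation of $R(\calP)\cdot d$ to a descendant sum over the forest of LDD shortest-path trees and $R(\calP)^\top c$ to an ancestor (root-to-$v$ path) sum, each handled by a single call to \Cref{theorem:subtree-sum}. Your version is slightly more explicit about the sign $\sigma_e$ and the edge-to-node handoffs, but there is no substantive difference.
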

\begin{proof}
  We consider the shortest path trees $T_1, \ldots, T_k$ for each component of the LDD. To compute $R(\calP)\cdot d$, we need to output $(R(\calP) \cdot d)_e = 0$ for all non-tree edges. For each tree-edge edge $e = \{v, u\} \in E(T_i)$ where $u$ is closer to the root of $T_i$, we output $(R(\calP) \cdot d)_e$ to be the total sum of demands $d(w)$ over all $w$ that are descendents of $v$. This number needs to possibly be negated if $\vec{E}$ contains the edge in the opposite orientation (i.e., if $(u, v) \in \vec{E}$). The correctness of this procedure is immediate. We can directly implement this in $\tilde{O}(1)$ rounds of Minor-Aggregation via \Cref{theorem:subtree-sum}.

  Similarly, to compute $R(\calP)^T \cdot c$, each node $v \in V(T_i)$ needs to compute the sum of $c(e)$ over all edges $e$ (with some elements possibly negated) on the root-to-$v$ path (in $T_i$). Again, we can directly implement this in $\tilde{O}(1)$ rounds of Minor-Aggregation via \Cref{theorem:subtree-sum}.
\end{proof}

Finally, we are ready to prove the main theorem of this section.

\begin{proof}[Proof of \Cref{theorem:distributed-evaluation-routing}]
  The $\exp(O(\log n \cdot \log \log n)^{3/4})$-approximation guarantee of the $\ell_1$-oblivious routing constructed by \Cref{algo:oblivious-routing-construction} is ensured by \Cref{theorem:oblivious-routing-approximation}.

  We examine \Cref{algo:oblivious-routing-construction} and show we can implement each ingredient required for the procedure. Firstly, we sample and distributedly store all LDDs $\calP^i_j$ for $1 \le i \le \imax$, $1 \le j \le g$. This is provided by \Cref{theorem:distributed-LDD-sampling} in $g \cdot \imax \cdot \exp(\sqrt{\log n \cdot \log \log n}) = \exp(O(\log n\cdot \log \log n)^{3/4})$ rounds of Minor-Aggregation. Furthermore, we compute and store an arbitrary spanning tree $T$ of $G$. A simple choice is to use the MST, whose construction can be computed in $\tilde{O}(1)$ rounds of Minor-Aggregation, as stipulated by \Cref{example:MST-using-dmam}.

  Let $R_i = \frac{1}{g} R(\calP^i_1) + \frac{1}{g} R(\calP^i_2) + \ldots + \frac{1}{g} R(\calP^i_g)$ (\Cref{lemma:formula-for-R}). We note we can evaluate (and distributedly store) $R_i \cdot d$ (which we call \emph{multiplication from right}) and $R_i^T \cdot c$ (called \emph{multiplication from left}) in $\exp(O(\log n\cdot \log \log n)^{3/4})$ rounds for any distributedly stored vectors $c, d$. We simply evaluate each term in the sum (when multiplied from both left and right), and add them together. This contributes $\tilde{O}(1) \cdot g$ rounds of communication by utilizing \Cref{lemma:evaluating-LDD-routing-left-and-right} and basic operations of \Cref{sec:distributed-storage-and-basic-operations}. Furthermore, we can evaluate and distributedly store $(I - B R_i) \cdot d$ and $(I - B R_i) \cdot d$ for any vector $d \in \R^V$ in $\tilde{O}(g)$ rounds using the basic operations of \Cref{sec:distributed-storage-and-basic-operations}. Specially, we define $R_{\imax+1}$ to be the routing with respect to a spanning tree $T$. Since routing along $T$ can be seen as routing along a simple LDD, multiplication of $R_{\imax+1}$ from left and right in $\tilde{O}(1)$ rounds is furnished by \Cref{lemma:evaluating-LDD-routing-left-and-right}.

  Finally, we remember from \Cref{lemma:formula-for-R} that $R = \sum_{i=1}^{\imax+1} R_i \cdot (I - B R_{i-1}) \cdot (I - B R_{i-2}) \cdot \ldots \cdot (I - B R_1)$. We already showed that we can multiply each factor from both left and right in $\tilde{O}(g)$ steps, hence we can compute $R d$ and $R^T c$ in $\imax^2 \cdot \tilde{O}(g) = \exp(O(\log n\cdot \log \log n)^{3/4})$, as required.
\end{proof}

\section{Distributed $(1+\eps)$-Transshipment}
In this section, we present our distributed algorithm for approximating the transshipment problem. The result is shown in Theorem \ref{theorem:distributed-transshipment}, and for the sake of completeness, we restate it below. 
\theoremDistributedTransshipment*

We follow Sherman's framework~\cite{She17b} via the multiplicative weights update~(MWU) paradigm \cite{AroraHK12} based on the construction of efficient $\ell_1$-oblivious routing presented in the last section, and then give the round complexity of the implementation in the CONGEST model. Instead of exactly performing Sherman's method~\cite{She17b}, our algorithm does not apply Bourgain's $\ell_1$-embedding~\cite{Bou85} when constructing the $\ell_1$-oblivious routing operator.

Given an $\ell_1$-oblivious routing, we solve transshipment by utilizing its \emph{boosting} property: the $\ell_1$-oblivious routing yields an $n^{o(1)}$-approximate solution for transshipment, which we can then \emph{boosting} to an $(1+\eps)$-approximate one using multiplicative weights (or gradient descent). This boosting property was implicitly used in many papers~\cite{She17b,BKKL17,Li20,AndoniSZ20}, and was explicitly isolated in a recent writeup~\cite{zuzic2021simple} that shows any black-box dual approximate solution can be boosted to an $(1+\eps)$-approximate one. The following lemma gives an end-to-end interface to boosting: given an $\ell_1$-oblivious routing, we can construct $(1+\eps)$-approximate solutions to transshipment.

\begin{lemma} \label{lem:mainTransshipment}
  Let $R$ be an $\alpha$-approximate $\ell_1$-oblivious routing with respect to a transshipment instance on a weighted graph $G$. Suppose we can compute matrix-vector products with $R$ and $R^T$ in $M$ Minor-Aggregation rounds. Then we can compute and distributedly store a flow $f \in \R^{\vec{E}}$ and a \emph{vector of potentials} $\phi$ in $\tilde{O}(\alpha^2\eps^{-2}M)$ rounds such that
  \begin{enumerate}
  \item $\norm{Wf}_1 \leq (1+\eps)d^{\top} \phi \leq (1+\eps) \norm{d}_{\opt}$;
  \item $\norm{Bf - d}_{\opt} \leq \eps \norm{d}_{\opt}$;
  \item $\norm{W^{-1} B^T \phi}_\infty \le 1$.
  \end{enumerate}
\end{lemma}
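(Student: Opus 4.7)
The plan is to instantiate Sherman's multiplicative-weights framework~\cite{She17b} with the $\ell_1$-preconditioner $P := WR$. By the corollary in \Cref{sec:def-oblivious-routing}, $P$ satisfies $\norm{d}_\opt \le \norm{Pd}_1 \le \alpha \norm{d}_\opt$ for every proper demand, making it a valid $\alpha$-preconditioner in Sherman's sense. Running Sherman's MWU-based transshipment solver (or equivalently the black-box boosting interface of \cite{zuzic2021simple}, also used in \cite{Li20, AndoniSZ20}) with this preconditioner for $T = \tilde{O}(\alpha^2 \eps^{-2})$ iterations yields a primal-dual pair $(f, \phi)$ whose duality gap, primal infeasibility, and dual infeasibility are each controlled by $\eps$. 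Concretely, the algorithm performs accelerated gradient ascent on a softmax-regularized Lagrangian of the form $d^T\phi - \norm{d}_\opt \cdot \mathrm{smax}_\mu(W^{-1} B^T \phi)$, extracting the primal flow $f$ from the averaged gradient directions $P^T \nabla \mathrm{smax}_\mu$ combined with $Rd$.

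To port this to the Minor-Aggregation model, I would implement each iteration using the primitives in \Cref{sec:distributed-storage-and-basic-operations} and \Cref{sec:distributed-R-Rt-evaluation}. A single iteration involves $O(1)$ products with each of $R, R^T, B, B^T, W, W^{-1}$ plus entrywise vector operations (softmax gradients, signs, sums, projections onto $\ell_\infty$-balls), all acting on distributedly stored node or edge vectors. By hypothesis, products with $R$ and $R^T$ cost $M$ rounds; products with $B, B^T, W, W^{-1}$ cost $\tilde{O}(1)$ rounds; and entrywise operations cost zero rounds. Each iteration therefore fits in $O(M)$ rounds, for a total of $\tilde{O}(\alpha^2 \eps^{-2} M)$ rounds.

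The main obstacle will be to verify that all three stated output guarantees hold simultaneously after the final iterate. Properties (1) and (2)---primal cost nearly matching the dual objective, and small residual demand---are the standard convergence guarantees of MWU for a preconditioned $\ell_1$ problem, and follow by driving the gap of the regularized Lagrangian below $\eps \cdot \norm{d}_\opt$; the inner inequality $d^T \phi \le \norm{d}_\opt$ of property (1) is then just weak duality applied to the dual-feasible $\phi$ from property (3). Property (3), strict dual feasibility, is the one that requires care: MWU typically maintains $\phi$ as a convex combination of extreme points of the dual polytope (so feasibility is automatic), but in case the specific instantiation only delivers approximate feasibility, I would append a single rescaling step that divides $\phi$ by $\max(1, \norm{W^{-1} B^T \phi}_\infty)$. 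This max is computed distributedly in $\tilde{O}(1)$ rounds via one $\ell_\infty$ aggregation (see \Cref{sec:distributed-storage-and-basic-operations}), and since MWU already approaches the boundary of the dual feasible set, the rescaling degrades the dual objective only by a $(1 - O(\eps))$ factor, which can be absorbed by halving $\eps$ at the start.
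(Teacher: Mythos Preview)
Your proposal is correct and follows essentially the same approach as the paper: instantiate Sherman's MWU framework with the preconditioner $P = WR$, observe that each iteration costs $O(M)$ Minor-Aggregation rounds via the primitives of \Cref{sec:distributed-storage-and-basic-operations}, and conclude. The paper's concrete instantiation differs in presentation: rather than a single gradient-descent pass, it follows Li's reformulation~\cite{Li20} by wrapping an MWU feasibility subroutine (\Cref{algo:transshipment-mwu}, which for a guess $t$ returns either a flow with $\norm{Wf}_1 \le t$ or a potential with $d^\top\phi = t$) inside a binary search on $t$, and obtains property~(3) by citing \cite[Claim~C.3]{Li20} rather than an explicit rescaling step; the overall round count and guarantees are the same.
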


Before proving Lemma \ref{lem:mainTransshipment}, we give the following lemma which returns the rough flow $f$ and potential $\phi$ affected by the input parameter $t$.
The algorithm we simulate is summarized in Algorithm \ref{algo:transshipment-mwu}.
In Algorithm \ref{algo:transshipment-mwu}, MWU is used to determine if one region is approximately feasible. It returns a flow $f$ or a potential  $\phi$ that satisfy the two conditions in the following lemma. 

\begin{lemma}[Lemma C.2 in \cite{Li20}] \label{lem:TransshipmentMWU}
  Consider a transshipment instance with demand vector $d$ and a parameter $t \geq \norm{d}_{\opt}/2$. Let $R$ be an $\alpha$-approximate $\ell_1$-oblivious routing operator. 
  Suppose we can compute matrix-vector products with $R$ and $R^T$ in $M$ Minor-Aggregation rounds. 
  Then there is an $\tilde O(\alpha^2\eps^{-2}M)$-round Minor-Aggregation algorithm outputs either
  \begin{enumerate}
  \item an acyclic flow $f$ satisfying $\norm{Wf}_{1} \leq t$ and $\norm{WRBf - WRd}_1 < \eps t$, or
  \item a potential $\phi$ with $d^{\top} \phi = t$.
  \end{enumerate}
\end{lemma}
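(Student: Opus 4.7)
The plan is to run a multiplicative-weights (Plotkin--Shmoys--Tardos style) algorithm on the min-max saddle point
\[
\min_{\norm{Wf}_1 \le t}\ \max_{\norm{u}_\infty \le 1}\ u^{\top} W R (B f - d),
\]
whose optimal value is at most $\eps t$ iff some $f$ in the primal region satisfies $\norm{W R (B f - d)}_1 \le \eps t$. If MWU produces a primal average $\bar f$ witnessing this, we return case (1); otherwise a single dual iterate yields case (2). Using an $\ell_1$-norm relaxation of the equality constraint $Bf = d$ is natural here because its evaluation is exactly what the oblivious routing $R$ (and its transpose) furnishes.

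In iteration $k$, the flow player faces the linear objective $(\phi^{(k)})^{\top} B f$ over the $W$-scaled $\ell_1$-ball, where $\phi^{(k)} := R^{\top} W u^{(k)}$. The change of variables $g = Wf$ reduces this to minimizing $(W^{-1} B^{\top} \phi^{(k)})^{\top} g$ over $\norm{g}_1 \le t$, so the optimum equals $-t \cdot \norm{W^{-1} B^{\top} \phi^{(k)}}_\infty$ and is attained by placing all $t$ units of $\ell_1$-mass on the single maximizing edge $e^\star$, with the appropriate sign. Thus $f^{(k)}$ is supported on one edge. Per iteration we therefore perform one multiplication by $R^{\top}$ (producing $\phi^{(k)}$), one by $R$ (producing the loss $\ell^{(k)} := W R (B f^{(k)} - d)$, with $W R d$ precomputed once), $\tilde O(1)$ applications of $B, B^{\top}, W, W^{-1}$, and a single global max-aggregation to find $e^\star$---totalling $O(M)$ Minor-Aggregation rounds via the primitives in Section~\ref{sec:distributed-storage-and-basic-operations}. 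The adversary then updates each coordinate of $u^{(k+1)}$ by running an independent Hedge step on a ``positive/negative'' pair of experts, which is purely local.

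The width $\rho := \sup_k \norm{\ell^{(k)}}_\infty$ is bounded by two applications of the oblivious-routing guarantee: first, $f^{(k)}$ itself routes $B f^{(k)}$ with cost $\norm{W f^{(k)}}_1 \le t$, hence $\norm{B f^{(k)}}_\opt \le t$ and $\norm{W R B f^{(k)}}_1 \le \alpha t$; second, $\norm{W R d}_1 \le \alpha \norm{d}_\opt \le 2\alpha t$ by the hypothesis $t \ge \norm{d}_\opt/2$. Since coordinates are dominated by the $\ell_1$ norm, $\rho \le 3 \alpha t$. Standard Hedge regret with step size $\eta = \Theta(\sqrt{\log n}/(\rho \sqrt T))$ then guarantees an $\eps t$-approximate saddle pair $(\bar f, \bar u)$ after $T = O(\rho^2 \log n / (\eps t)^2) = O(\alpha^2 \log n / \eps^2)$ iterations, giving the total $\tilde O(\alpha^2 \eps^{-2} M)$ rounds. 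By the saddle inequality, either $\norm{W R (B \bar f - d)}_1 = \max_{\norm{u}_\infty \le 1} u^{\top} W R (B \bar f - d) \le \eps t$ and we return $\bar f$, or some $u^{(k)}$ certifies $\min_f (u^{(k)})^{\top} W R (B f - d) > \eps t$; using the oracle's closed form the latter rearranges to $-(R^{\top} W u^{(k)})^{\top} d > \eps t + t \norm{W^{-1} B^{\top} R^{\top} W u^{(k)}}_\infty$, so setting $\tilde\phi := -R^{\top} W u^{(k)}$ and rescaling by $t/(d^{\top} \tilde\phi)$ produces the required $\phi$ with $d^{\top} \phi = t$ (and the same rescaling automatically preserves $\norm{W^{-1} B^{\top} \phi}_\infty < 1$, the implicit dual feasibility used by the outer caller). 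Finally, $\bar f$ is supported on at most $T$ edges; we restore acyclicity by computing a spanning tree of its support (\Cref{example:MST-using-dmam}), routing $B \bar f$ along the tree via \Cref{theorem:subtree-sum}, and discarding off-tree edges, which costs $\tilde O(1)$ rounds and can only decrease $\norm{Wf}_1$.

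The main obstacle I expect is enforcing an $\ell_1$ (rather than $\ell_\infty$) guarantee on the residual $W R (B \bar f - d)$: the adversary must play over the whole box $[-1,1]^V$, not one extreme point at a time, so we maintain $V$ independent two-expert Hedge instances whose regret adds up with only an extra $\log n$ factor; this is what keeps the iteration count at $O(\alpha^2 \log n / \eps^2)$ despite the exponential number of $\ell_1$-ball extreme points. Everything else---the single-edge oracle, the transpose width $\norm{W^{-1} B^{\top} R^{\top} W u}_\infty \le \alpha \norm{u}_\infty$ dualized from the $\alpha$-approximation of $R$, and the acyclicity post-processing---is standard once the saddle-point view is fixed.
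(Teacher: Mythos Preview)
Your overall plan---solve the min-max saddle $\min_{\|Wf\|_1\le t}\max_{\|u\|_\infty\le 1}u^\top WR(Bf-d)$ by a no-regret/best-response iteration---is the right idea, but you have reversed the roles of the two players, and that is precisely what breaks the $\tilde O(\alpha^2/\eps^2)$ iteration bound. You let the dual $u\in[-1,1]^{\vec E}$ (not $[-1,1]^V$; note $WR(\cdot)\in\R^{\vec E}$) play Hedge via $|\vec E|$ independent two-expert instances, and let the flow play a single-edge best response. But the total external regret of $m$ independent two-expert Hedge instances against losses with $\|\ell^{(k)}\|_\infty\le\rho$ (or even $\|\ell^{(k)}\|_1\le\rho$) is $\Theta(\rho\sqrt{mT})$, not $\Theta(\rho\sqrt{T\log m})$: the $m$ per-coordinate regrets contribute \emph{additively} to the $\ell_1$ residual you need to control, and there is a matching $\Omega(\sqrt{mT})$ lower bound for online linear optimization over the $\ell_\infty$-box against $\ell_1$-bounded losses. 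Plugging this in forces $T=\Theta(m\alpha^2/\eps^2)$ rather than $\tilde O(\alpha^2/\eps^2)$. Your sentence ``whose regret adds up with only an extra $\log n$ factor'' is the incorrect step.

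The paper (following \cite{Li20}) avoids this by swapping the roles: the \emph{flow} player runs MWU over the $2m$ signed-edge experts $\{\pm w_e^{-1}\chi_e\}_{e\in E}$---a single simplex, so the $\log(2m)$ appears exactly once---while the dual $y$ plays the best-response sign vector $y=-\mathrm{sign}(WR(B\hat f+d/t))$. The width per expert is then $|y^\top WRBw_e^{-1}\chi_e|\le\|WRBw_e^{-1}\chi_e\|_1\le\alpha\|Bw_e^{-1}\chi_e\|_\opt\le\alpha$, so $T=\tilde O(\alpha^2/\eps^2)$ suffices, and the output flow is the current weighted combination over edges rather than an average of single-edge responses. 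A secondary issue: your acyclicity fix---replace $\bar f$ by the tree routing of $B\bar f$ on a spanning tree of $\supp(\bar f)$---does not in general ``only decrease $\|Wf\|_1$'' (a heavy tree edge can replace a cheap off-tree path; consider a triangle with one expensive edge); standard cycle-cancellation would preserve both $Bf$ and the cost bound, but that is not what you described.
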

\begin{proof}

  We can show an algorithm for the above lemma applying the MWU method~\cite{AroraHK12} for (approximately) checking the feasibility of the following region
  \begin{equation}
    \left\{ y \in \R^{\vec{E}} \bigg| \left\| y^{\top} W R B W^{-1} \right\|_{\infty} + \frac{1}{t} y^{\top} W R d \leq - \eps \text{ and } \norm{y}_{\infty} \leq 1 \right\}.
  \end{equation}
  Observe that this problem is tightly connected to the dual convex-programming formulation of the transshipment problem. Building upon the presentation of Li~\cite{Li20}, we give a self-contained procedure summarized in Algorithm~\ref{algo:transshipment-mwu} for checking the feasibility of the region. The correctness of \Cref{algo:transshipment-mwu} is proved in \cite{Li20}. Furthermore, one can easily check that all the operations required to implement \Cref{algo:transshipment-mwu} in the Minor-Aggregation model are either matrix-vector multiplications or basic operations presented in \Cref{sec:distributed-storage-and-basic-operations}.
\end{proof}

\begin{algorithm}[H]
    \caption{MWU for transshipment using $\ell_1$-oblivious routing}
    \label{algo:transshipment-mwu}

    \begin{enumerate}
      \itemsep0em 
    \item Set $\delta \gets \eps/(2\alpha)$.
    \item Set $p^{+}_0(e) \gets 1/(2m)$ and $p^{-}_0(e) = 1/(2m)$ for all $e \in E$.
    \item Set $\phi^{+}_0(e) = 1$ and $\phi^{-}_0(e) = 1$ for all $e \in E$.
    \item Define $\chi_e := (\mathbb{1}_u - \mathbb{1}_v)$, where $e=(u,v) \in E$, and $\mathbb{1}_u,\mathbb{1}_v$ are indicator vectors.
    \item For $t=1,2,\ldots, T$ where $T=O(\alpha^2\eps^{-2}\log{m})$: 
      \begin{enumerate}
      \item If $\norm{WRB \sum_{e \in E} w^{-1}_e (p^+_{t-1}(e) \chi_e - p^-_{t-1}(e) \chi_e) + \frac{1}{t}WRd}_1 \geq \eps$
        \begin{enumerate}
        \item Set $y(t) \gets -\mathrm{sign} \left( WRB \sum_{e \in E} w^{-1}_e (p^+_{t-1}(e) \chi_e - p^-_{t-1}(e) \chi_e) + \frac{1}{t}WRd \right)$
        \end{enumerate}
      \item Otherwise, set $f \gets -t \sum_e w_e^{-1} (p^+_e \chi_e - p^-_{e} \chi_e)$ and output $f$.
      \item For each $e \in E$:
        \begin{enumerate}
        \item Set $\phi^{+}_t(e) \gets \frac{1}{2} \phi^{+}_{t-1}(e) \cdot \exp{(\delta \cdot (y_{t}^{\top} WRB w_e^{-1} \chi_e + \frac{1}{t}y_{t}^{\top}WRd))}$.
        \item Set $\phi^{-}_t(e) \gets \frac{1}{2} \phi^{-}_{t-1}(e) \cdot \exp{(\delta \cdot (-y_{t}^{\top}WRB w_e^{-1} \chi_e + \frac{1}{t}y_{t}^{\top}WRd))}$.
        \end{enumerate}
      \item For each $e \in E$, set $p^{+}_{t}(e) \gets \phi^{+}_t(e)/\sum_{e \in E} \phi^{+}_t(e)$, $p^{-}_{t}(e) \gets \phi^{-}_t(e)/\sum_{e \in E} \phi^{-}_t(e)$.
      \end{enumerate}
    \item Set $x \gets \frac{1}{T}\cdot \sum_{i \in [T]} y_{t}$.
    \item Set $\phi$ to be the vector $-(x^{\top}WR)^{\top}$ scaled up so that $\phi^{\top}d = t$.
    \item Output $\phi$.
    \end{enumerate}
  \end{algorithm}





Now we prove Lemma \ref{lem:mainTransshipment} based on Lemma \ref{lem:TransshipmentMWU} by doing a careful binary search on the value of $t$. 
\begin{proof}[Proof of Lemma~\ref{lem:mainTransshipment}]
We first describe our algorithm. Begin with $t = \textrm{poly}(n)$, which is an upper bound on $\norm{d}_{\opt}$. As long as Algorithm~\ref{algo:transshipment-mwu} with parameter $t$ returns a flow $f$, we decrease the value of $t$ by setting $t \gets \frac{1+\eps}{2} t$ and invoke Algorithm~\ref{algo:transshipment-mwu} with this new value of $t$. At some point Algorithm~\ref{algo:transshipment-mwu} must return a potential $\phi$ for which $\norm{d}_{\opt} \geq d^{\top} \phi = t$. This means that $\norm{d}_{\opt} \in (t,2t)$ and we can run binary search in this interval to compute two values $t_\ell, t_r$ such that (1) $ t_r - t_\ell \leq \eps \norm{d}_{\opt} $ and (2) $ \norm{d}_{\opt} \in (t_\ell, t_r).$ Finally, we run Algorithm~\ref{algo:transshipment-mwu} with parameter $t = t_\ell/(1+\eps)$ to obtain potentials $\phi$ and then run Algorithm~\ref{algo:transshipment-mwu} with parameter $t = t_r$ to obtain a flow $f$. The algorithm returns the flow-potential pair $(f, \phi)$. The property (3) in Lemma~\ref{lem:mainTransshipment} can be found in Claim C.3 of \cite{Li20}.
  
We next give the round complexity. 
By 
Lemma \ref{lem:TransshipmentMWU}, for each demand vector $d$ and a parameter $t$, Algorithm~\ref{algo:transshipment-mwu} takes $\tilde O(\alpha^2\eps^{-2}M)$ Minor-Aggregation rounds. By the above analysis, the binary search of $t$ takes $O(\log{(n/\eps)})$ times. Since each matrix-vector multiplication takes $M$ Minor-Aggregation rounds, we conclude that the round complexity of Lemma \ref{lem:mainTransshipment} is $\widetilde{O}(\alpha^{2}\eps^{-2}M)$. 
\end{proof}

Finally, combining Sherman's transshipment framework with our distributed graph-based $\ell_1$-oblivious routing, we can approximate transshipment.
To make sure that the demand is satisfied exactly, we select a carefully-chosen approximation factor within the framework, followed by routing the residual demand via the $\ell_1$-oblivious routing.

\begin{proof}[Proof of \Cref{theorem:distributed-transshipment}]
Our algorithm first computes an $\ell_1$-oblivious routing operator $R$ with approximation factor $\alpha = \exp(O(\log{n}\log\log{n})^{3/4})$, and then computes an $(1+\eps / (2\alpha))$-approximate transshipment solution $f$ and $\phi$ based on $R$. 
To make sure that demand is satisfied, we use the $\ell_1$-oblivious routing operator $R$ to route the residual demand $d - Bf$ to obtain $f'$. 

We show that flow vector $f+ f'$ and potential vector $\phi$ satisfying the requirement. 
Since $Bf' = d-Bf$, we have $B(f + f') = Bf + d - Bf= d$. 
In addition, we have 
\begin{align*}
\norm{W(f + f')}_1\leq & \norm{Wf}_1 + \norm{Wf'}_1 \\
\leq & \left(1+\frac{\eps }{2\alpha}\right) \norm{d}_{\opt} + \alpha \|d - Bf\|_\opt \\
\leq & \left(1 + \frac{\eps }{ 2\alpha}\right)  \norm{d}_{\opt} + \alpha \cdot \frac{\eps}{2\alpha} \norm{d}_\opt\\
< & (1+\eps) \|d\|_\opt.\end{align*}

The round complexity is obtained by our parameter setting, Theorem~\ref{theorem:distributed-evaluation-routing}, and Lemma~\ref{lem:mainTransshipment}.
\end{proof}

\section{Distributed $(1+\eps)$-SSSP}
In this section, we present our distributed algorithm to construct a single source shortest path (SSSP) tree and prove the following result. 
\thmSSSP*
Our algorithm is obtained by a distributed implementation of a simplified version of the SSSP algorithm presented in~\cite{Li20} and mainly included in this paper for completeness.

We first review the SSSP algorithm of \cite{Li20} in \Cref{sec:sssp-algo-sequential}, and then give its distributed implementation in \Cref{sec:distributed-sssp}. 

\subsection{SSSP Algorithm}\label{sec:sssp-algo-sequential}

In ~\cite{Li20}, Li presented an algorithm to construct an approximate SSSP tree given an algorithm that approximate the transshipment problem. We start by defining the notion of \emph{expected} single source shortest path (ESSSP) tree. 

\begin{definition}[Definition D.1 in \cite{Li20}]
  Given a graph $G=(V, E)$, a source $s \in V$ and a demand vector $d \in \R^V$ with $d_v \ge 0$ for each $v \neq s$, an $\alpha$-approximate \textbf{expected SSSP (ESSSP)} tree is a tree $T$ such that
  \[
    \mathbb{E}\left[\sum_{v : d_v>0}d_v\cdot \dist_T(s,v)\right]\le\alpha\sum_{v : d_v>0}d_v\cdot \dist_G(s,v),
  \]
  where $\dist_T(s,v)$ and $\dist_G(s,v)$ denote the distances between $s$ and $v$ in tree $T$ and graph $G$ respectively. 
\end{definition}

In \cite{Li20}, Li gave an algorithm to compute an ESSSP tree with respect to a given graph, a source, and a demand vector. 
The algorithm is summarized in Algorithm~\ref{algo:ESSSP}.
On a high level, the ESSSP algorithm first  obtains a flow vector, that is an approximate transshipment solution for the given graph and demand vector, and then samples an outgoing edge (with respect to the flow vector) for each vertex with probability proportional to the outgoing flow value. 
If the flow vector is acyclic, then the sampled edges form a ESSSP tree. 
But if the flow vector contains some directed cycles, then the sampled edges form some connected components such that each connected component contains exactly one directed cycle. For this case, the algorithm iteratively contracts connected components of the sampled graph in the input graph, and recurses on the new graph until the result is a directed tree. In the end, the algorithm uses sampled edges in all the recursions to construct an ESSSP tree.   
The correctness of Algorithm~\ref{algo:ESSSP} is proved in Claim D.8 of
\cite{Li20}. 

\begin{lemma}[Correctness of ESSSP, Claim D.8, \cite{Li20}]\label{lem:ESSSP}
  Given a graph $G=(V, E)$, a source $s \in V$, a demand vector $d \in \R^V$ with $d_v\ge0$ for each $v\neq s$, and a parameter $\eps$,  
  Algorithm~\ref{algo:ESSSP} computes an $(1+\eps)$-approximate ESSSP tree $T$. 
  Furthermore, the recursion depth of the algorithm is $O(\log n)$. 
\end{lemma}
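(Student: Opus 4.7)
The plan is to establish both parts of the lemma---the $(1+\eps)$-approximation guarantee and the $O(\log n)$ recursion depth bound---by combining a flow-decomposition identity with a shrinkage estimate for the sampled functional graph. As a preliminary observation, I would note that for the considered demand ($d_v \ge 0$ for $v \neq s$ and $d_s = -\sum_{v \neq s} d_v$), the transshipment optimum satisfies the duality identity $\norm{d}_\opt = \sum_{v \neq s} d_v \cdot \dist_G(s, v)$, since the optimal flow routes each $d_v$ units along a shortest $v$-to-$s$ path.

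First, I would prove the approximation bound in the non-recursive (acyclic-flow) case. Applying \Cref{lem:mainTransshipment} with tolerance parameter $\eps' = \Theta(\eps/\log n)$ yields a flow $f$ and a potential $\phi$ with $W(f) \le (1 + \eps') \norm{d}_\opt$. When $f$ is acyclic, sampling one outgoing edge at each non-source vertex proportional to the flow value produces a functional graph whose unique sink is $s$, i.e., a spanning in-tree $T$ rooted at $s$. The key identity is
\[
\mathbb{E}\!\left[\sum_{v \neq s} d_v \cdot \dist_T(s,v)\right] = \sum_{e \in \vec E} w_e f_e = W(f),
\]
proved by a telescoping induction on the layers of $f$: at each vertex $u$, the sampled outgoing edge carries the expected conditional flow $f_e / f^{\text{out}}_u$, so the load passing through $u$ on the sampled tree matches the incoming $f$-flow in expectation. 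Combining this with the transshipment bound and the duality identity yields a $(1+\eps')$-approximate ESSSP guarantee at a single level.

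Next, I would handle the general cyclic case by analyzing the recursion. In a functional graph with $|V| - 1$ sampled edges and no outgoing edge at $s$, every non-source connected component has exactly one directed cycle, and contracting these cyclic components in $G$ yields a strictly smaller instance. I would argue that the contracted graph is a legitimate ESSSP instance: distances in $G/C$ are upper-bounded by distances in $G$, and each supernode inherits a demand equal to the sum of demands inside its component (the intra-component routing along cycle edges costs at most $W(f)$ extra, which is already within the $(1+\eps')$ slack). Chaining the acyclic-case identity across all recursion levels and using linearity of expectation, the total expected $d$-weighted distance in the final tree telescopes to at most $(1 + \eps')^{O(\log n)} \cdot \norm{d}_\opt \le (1+\eps)\norm{d}_\opt$ by the choice of $\eps'$.

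Finally, for the recursion depth bound, the structural observation is that every non-source component contains a cycle of length $\ge 2$, so contracting all such components merges at least two vertices into one for each non-source component. To obtain a constant-factor shrinkage per level---and thus $O(\log n)$ depth---I would use the standard property of sampled functional graphs that a constant fraction of vertices (in expectation) lie outside the source-rooted tree of the sampled graph and thus end up in contracted components. I expect the \emph{main obstacle} to be the bookkeeping for approximation slack across the $O(\log n)$ recursion levels: naively multiplying per-level $(1+\eps')$ factors is fine, but one must also verify that when \Cref{lem:mainTransshipment} is invoked on the contracted instance, its approximate-feasibility term $\norm{Bf - d}_\opt \le \eps' \norm{d}_\opt$ can be absorbed as residual demand at the next recursion level without breaking the telescoping, which in turn requires that residual demands remain non-negative on non-source vertices (addressed by a careful choice of demand reassignment at each contraction step).
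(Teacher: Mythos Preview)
The paper does not prove this lemma; it defers entirely to Claim D.8 of \cite{Li20}. Your proposal is therefore a reconstruction of Li's argument, and the approximation half is largely on the right track: the expected-cost identity $\mathbb{E}\bigl[\sum_{v} d_v \dist_T(s,v)\bigr] = W(f)$ (for an acyclic flow $f$ that exactly routes $d$) is indeed the heart of the analysis, and telescoping a per-level $(1+\eps')$ factor over $O(\log n)$ levels with $\eps' = \Theta(\eps/\log n)$ is exactly how the $(1+\eps)$ bound is obtained in \cite{Li20}.

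Your recursion-depth argument, however, takes an unnecessary and misdirected detour. You correctly observe that every non-source component of $H$ contains a directed cycle of length at least $2$ and hence has at least two vertices---but you then abandon this and reach for a probabilistic claim that ``a constant fraction of vertices lie outside the source-rooted tree.'' That second step is not needed, and it misreads the algorithm: \Cref{algo:ESSSP} contracts \emph{every} connected component of $H$, including the source component, to a single supervertex, so $|V'|$ equals the number of components of $H$. With one component of size $\ge 1$ (the source, whose cycle is the self-loop at $s$) and each remaining component of size $\ge 2$, the number of components is at most $(|V|+1)/2$. This is a deterministic halving per level and yields $O(\log n)$ depth with no appeal to randomness. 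Your first observation already finishes the job; the probabilistic add-on should be dropped.
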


\begin{algorithm}[h]
  \caption{ESSSP($G=(V, E, w)$, $s$, $d$, $\eps$) \ \ \ (Algorithm 4 of \cite{Li20}, restated)}
  \label{algo:ESSSP}
  \begin{enumerate}
      \item  Set $\eps' = c \eps / \log n$ for some constant $c$, and compute the flow $f$ and the potential $\phi$ of an $(1+\eps')$-approximate transshipment on $G$ with demand vector $d$.
  \item Each vertex $u\in V\setminus\{s\}$ samples an edge $(u,v)$ such that $f(u,v)>0$ with probability $f(u,v)/\sum_{f(u,v)>0}f(u,v)$. Let $H$ be the directed graph consisting of the sampled edges and directed self-loop $(s, s)$.
      \item For each connected component $C$ of $H$ (ignoring edge directions when computing connected components): 
      \begin{enumerate}
          \item Let $w(C)$ be the total weights of the edges in the (unique) cycle in $C$.
          \item Let $T_C$ be the graph formed by contracting the (unique) cycle in $C$ into a supervertex $x_C$.
          \item Let $v_C$ be a supervertex for $C$ with demand $d'_{v_C}\assign\sum_{v\in V(C)}d_v$.
      \end{enumerate}
      \item For each edge $(u,v)\in E$: 
      \begin{enumerate}
          \item Let $C$ and $C'$ be the connected components of $H$ containing $u$ and $v$ respectively. 
          \item If $C\neq C'$:
          \begin{enumerate}
              \item $e'\assign(u,v)$.
              \item $w'(u,v)\assign w(u,v)+\dist_{T_C}(u,x_C)+\dist_{T_{C'}}(v,x_{C'})+w(C)+w(C')$.
          \end{enumerate}
      \end{enumerate}
      \item Let $s'\assign v_{C_s}$, where $C_s$ is the component in $H$ containing $s$.
      \item Denote graph $G'=(V', E', w')$ with $V'=\{v_C\}$ and $E'=\{e'\}$, where $C$ is the component in $H$.
      \item $T' \assign\text{ESSSP}(G', s', d', (1 + \eps) / (1 + 3\eps') - 1)$.
      \item Initialize $T\assign\emptyset$. 
      \item For each edge $(u,v)\in T'$:
      \begin{enumerate}
          \item $T\leftarrow T\cup(u,v)$.
      \end{enumerate}
      \item For each connected component $C$ of $H$:
      \begin{enumerate}
          \item Remove an arbitrary edge from the (unique) cycle in $C$ and merge the resulting tree with $T$. 
      \end{enumerate}
    \item Output $T$. 
  \end{enumerate}
\end{algorithm}

\textbf{Converting ESSSP to an SSSP tree.} The SSSP construction algorithm is summarized in Algorithm~\ref{algo:TStoSSSP} (SSSP), which is simplified version of Algorithm 6 from~\cite{Li20}. 
On a high level, the algorithm tries to finds an $(1 + \eps)$-ESSSP tree $T^*$, which can be shown to provide a $(1 + \eps)$-approximate path to at least half of the nodes (appropriately weighted). We can identify which half of the nodes the path induces via the transshipment potentials $\phi$. When a good path to a node $v$ is found, we can remove $v$ from the \emph{target set} of nodes $V'$ (which initially starts with $V' \gets V$) and start a new iteration with a smaller $V'$.

This procedure intuitively produces $O(\log n)$ trees such that for each node $v$ at least one tree offers an $(1+\eps)$-optimal $s$-to-$v$ path. There is a simple ``trick'' to obtain a single tree that is good with respect to all nodes (described in \cite{HL18}, Algorithm ExpectedSPDistance and proven in Lemma 14): each node keeps track of its parent pointer $\parent(v)$. If a new tree $T^*$ offers a better path to $v$ than previously known, we reassign $\parent(v)$ to point to the parent of $v$ in $T^*$.




Algorithm~\ref{algo:TStoSSSP} (SSSP) is a simplified version \cite[Algorithm 6]{Li20}, due to the fact that we do not require an $\ell_1$-embedding to solve a transshipment instance.  
The correctness of Lemma~\ref{lem:TStoSSSP} follows from Claim E.1 and Claim E.2 in~\cite{Li20}. 
\begin{lemma}[Correctness of SSSP, Claim E.1 and Claim E.2, \cite{Li20}]\label{lem:TStoSSSP}
  Given a graph $G=(V, E)$, a source $s \in V$, and a parameter $\eps$,  
  Algorithm~\ref{algo:TStoSSSP} computes an $(1+\eps)$-approximate SSSP tree $T$. 
  Furthermore, the number of iterations of the while loop is $O(\log n)$ with high probability.
\end{lemma}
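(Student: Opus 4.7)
The plan is to adapt the analysis of Claims E.1 and E.2 in \cite{Li20} to our simplified algorithm; the only substantive change from \cite{Li20} is that transshipment (and hence ESSSP via Algorithm \ref{algo:ESSSP}) is solved via our LDD-based $\ell_1$-oblivious routing (Lemma \ref{lem:mainTransshipment}) rather than via an $\ell_1$-embedding, so the correctness proof transplants with essentially no change. The skeleton of one iteration of the while loop is: (i) maintain a set $V' \subseteq V$ of ``uncovered'' nodes (initialized to $V \setminus \{s\}$); (ii) invoke Algorithm \ref{algo:ESSSP} on $G$, source $s$, and a demand vector $d$ concentrated on $V'$ to obtain an ESSSP tree $T^*$ together with the transshipment potentials $\phi$ computed inside its first step; (iii) use $\phi$ as a certificate to identify nodes $v \in V'$ for which $T^*$ provides a $(1+\eps)$-approximate $s$-to-$v$ path, remove those $v$ from $V'$, and update the global parent pointer $\parent(v)$ in the style of the ExpectedSPDistance procedure of \cite{HL18}, i.e.\ overwriting only when $T^*$ yields a strictly shorter $s$-to-$v$ distance than the currently stored tree.

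For the approximation guarantee, the key observation is that dual feasibility of $\phi$ (property~(3) of Lemma \ref{lem:mainTransshipment}) implies $\phi_s - \phi_v \le \dist_G(s,v)$ for every $v$, so $\phi_s - \phi_v$ is a usable per-node lower bound on the true distance. Hence any $v$ satisfying $\dist_{T^*}(s,v) \le (1+\eps)(\phi_s - \phi_v)$ in fact enjoys a $(1+\eps)$-approximate $s$-to-$v$ path in $T^*$, which the parent-pointer update then ``locks in'' for the global tree $T$. This delivers the desired $(1+\eps)$-SSSP guarantee once every node has been covered.

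For the iteration count, I would use a Markov-style argument on the ESSSP bound of Lemma \ref{lem:ESSSP}: setting $d_v = 1$ for $v \in V'$ (or, more carefully, weighting by $\phi_s - \phi_v$ to match the dual lower bound), we obtain
\[
\E\!\left[\sum_{v \in V'} \dist_{T^*}(s,v)\right] \;\le\; (1+O(\eps)) \sum_{v \in V'} \dist_G(s,v) \;\le\; (1+O(\eps))\sum_{v \in V'} (\phi_s - \phi_v) + O(\eps)\!\cdot\!\text{slack},
\]
so with constant probability at least a constant fraction of the $V'$-mass satisfies the certifying inequality above and is covered. Standard amplification (repeating each iteration $O(\log n)$ times and keeping the best) then ensures that $|V'|$ shrinks by a constant factor per outer iteration with high probability, yielding $V' = \emptyset$ after $O(\log n)$ iterations.

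The main obstacle is verifying the invariant underlying the parent-pointer merging step: it is not a priori obvious that reassigning $\parent(v)$ to point to $T^*$'s parent of $v$ yields a tree $T$ whose $s$-to-$v$ distance is bounded by $\dist_{T^*}(s,v)$, since $v$'s new ancestors may come from earlier iterations and from different trees. The resolution (as in Lemma 14 of \cite{HL18}) is an inductive argument in the order of increasing depth in $T^*$: if the parent $u$ of $v$ in $T^*$ already satisfies $\dist_T(s,u) \le \dist_{T^*}(s,u)$ at the moment $v$ is processed, then after the update $\dist_T(s,v) \le \dist_T(s,u) + w(u,v) \le \dist_{T^*}(s,u) + w(u,v) = \dist_{T^*}(s,v)$, so the update is monotone in distance. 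Together with the per-iteration approximation step above, this invariant yields that $\dist_T(s,v) \le \min_{i : v \text{ covered in iter. } i} \dist_{T_i^*}(s,v) \le (1+\eps)\dist_G(s,v)$, completing the proof.
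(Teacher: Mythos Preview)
The paper does not give its own proof of this lemma; it simply states that ``The correctness of Lemma~\ref{lem:TStoSSSP} follows from Claim E.1 and Claim E.2 in~\cite{Li20}.'' Your proposal therefore goes beyond the paper by actually sketching the underlying argument, and your sketch is essentially the one from \cite{Li20} (together with the parent-pointer merging invariant of \cite{HL18}, which the paper also cites in the paragraph preceding the lemma).

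Two minor remarks. First, a sign slip: with the demand convention $d = \sum_{v\in V'}(\mathbb{1}_v - \mathbb{1}_s)$ used in Algorithm~\ref{algo:TStoSSSP}, the dual objective is $d^{\top}\phi = \sum_{v\in V'}(\phi_v - \phi_s)$, so the per-node lower bound on $\dist_G(s,v)$ is $\phi_v - \phi_s$ (i.e.\ $\phi(v)$ after the normalization $\phi(s)=0$ that the algorithm performs), not $\phi_s - \phi_v$. Second, in your iteration-count sketch you move from ``a constant fraction of the $V'$-\emph{mass} is covered'' to ``$|V'|$ shrinks by a constant factor''; this step needs a little care, since the Markov argument on the ESSSP guarantee naturally controls a distance-weighted sum rather than the cardinality $|V'|$. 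The full argument in \cite{Li20} handles this, but it is worth flagging that the passage from weighted mass to cardinality is not entirely free.
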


\begin{algorithm}[h]
  \caption{SSSP($G=(V, E, w)$, $s \in V$, $\eps > 0$)}
  \label{algo:TStoSSSP}
  \begin{enumerate}
      \item  Initialize $V' \gets V\setminus\{s\}$, potential vector $\phi\assign \vec{0}$, and \emph{parent pointers} $p : V \to V$ initially $p(v) \gets v$. 
      \item While $V'\neq\emptyset$:
      \begin{enumerate}
          \item Let $d\assign\sum_{v\in V'}(\mathbb{1}_v-\mathbb{1}_s)$.
          \item Obtain a $(1 + \frac{\eps}{10})$-apx flow-potential pair $(f^*, \phi^*)$ for transshipment. 
          \item Obtain $T^* \assign\text{ESSSP}(G, s, d, \Theta(\eps/\log{n}))$. 
          \item Root $T^*$ at $s$ and compute distances $\dist_{T^*} (s,v)$ for all $v$.
          \item Let $T = (V, \{ (v, \parent(v)) \}_v )$ be the tree defined by parent pointers $\parent$. 
          \item Compute the distances $\dist_T(s, v)$ for all $v$ (with weights $w$). 
          \item For each vertex $v\in V$:
          \begin{enumerate}
              \item If $\dist_{T^*}(s,v) \le \dist_T(s, v)$:
              \begin{enumerate}
                  \item set $\parent(v) \gets $ parent of $v$ in $T^*$.
              \end{enumerate}
              \item $\phi(v)\assign\max\{\phi^*(v), \phi(v)\}$ where $\phi^*$ is translated so that $\phi^*(s) = 0$. 
              \item If $\dist_T(s,v)\le(1+\eps)\phi(v)$:
              \begin{enumerate}
                  \item $V'\assign V'\setminus\{v\}$. 
              \end{enumerate}
          \end{enumerate}
      \end{enumerate}
     \item Output tree $T = (V, \{ (v, \parent(v)) \}_v )$ defined by parent pointers $\parent$.
  \end{enumerate}
\end{algorithm}





\subsection{Distributed implementation of SSSP}\label{sec:distributed-sssp}
We give our distributed implementation of Algorithm~\ref{algo:TStoSSSP} (SSSP), and prove Theorem~\ref{thm:SSSP}.
We start with some useful subroutines that follow from prior work~\cite{dory2019improved}. 
The following lemma roots (i.e., orients the edges) given an unrooted forest and roots for each connected component. 
\begin{lemma}[Rooting a tree, \cite{dory2019improved}]\label{lem:sssp_basic_operations}
  Let $G$ be a graph, and $G' \subseteq G$ be a distributedly stored undirected forest and suppose that in each connected component all nodes agree on a (so-called) root node. There is a $\tilde{O}(1)$-round Minor-Aggregation algorithm to root the each tree in $G'$ at its root (i.e., each edge computes its direction towards its component's root), and to compute the distances between any vertex and its root in $G'$.
\end{lemma}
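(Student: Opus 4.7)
The plan is to decompose the task into two parts: (i) \emph{orient} each edge of $G'$ toward its component's root, and (ii) compute the root-distance $\dist_{G'}(\cdot,\text{root})$ at every vertex. Once (i) has been carried out, part (ii) is immediate in $\tilde O(1)$ additional Minor-Aggregation rounds via the Subtree Sum Lemma (\Cref{theorem:subtree-sum}): take each vertex's private input $x_v$ to be the weight of its parent edge (with $x_{\text{root}}=0$), and read off $\dist_{G'}(v,\text{root})$ as the sum of private values over all ancestors of $v$, which the lemma provides as a by-product. So the real content is part (i): orienting a rooted forest in $\tilde O(1)$ rounds of Minor-Aggregation.

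For (i), I propose to run a Miller--Reif-style parallel tree-contraction (rake-and-compress) on $G'$. Maintain a distributedly-stored subforest $H \subseteq G'$ (initially $H = G'$); at each vertex, keep a flag indicating whether the vertex is its component's root. In each iteration, perform a short $\tilde O(1)$-round sub-procedure:
\begin{enumerate}[(a)]
\item Using one aggregation step over $H$-edges, every vertex learns its degree in $H$.
\item Each non-root leaf (degree $1$ in $H$) marks its unique incident $H$-edge for contraction, permanently records this edge's orientation as pointing \emph{away} from itself (the other endpoint is by construction on the root side), and passes its ``root-flag,'' which is off, to the contraction.
\item Each non-root degree-$2$ vertex marks itself for a compress step: its two incident $H$-edges are fused into a single edge whose weight is the sum of the two original weights, and both original edges commit to an orientation consistent with whichever fusion direction ultimately carries the root-flag.
\item A contraction step of the Minor-Aggregation model is then executed for all marked edges, replacing $H$ by its corresponding minor; the root-flag of a new supernode is the OR of its constituents' flags.
\end{enumerate}
By the standard Miller--Reif analysis, $O(\log n)$ such iterations suffice (deterministically with a proper rake/compress interleaving, or with high probability using randomized rake-and-compress) to contract each tree of $G'$ down to a single supernode — namely the root — at which point every original $G'$-edge has been processed and has committed to a final orientation.

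The main obstacle I anticipate is bookkeeping: each original edge of $G'$ must end up knowing its orientation \emph{in the original forest}, even though it may have been contracted inside a long compressed super-edge. I would handle this by having each original edge, at the moment it participates in either a rake or a compress, record the \emph{local} side that is currently closer to the root (determined by the root-flag that ends up on the surviving endpoint); consistency across compressed chains follows because the compress operation aligns the two constituent edges along a single path-direction, and the final root-flag percolates deterministically through these chains. Packaging everything: $O(\log n)$ rake-and-compress iterations of $\tilde O(1)$ Minor-Aggregation rounds each give (i), and one application of \Cref{theorem:subtree-sum} on the oriented forest gives (ii), for a total round complexity of $\tilde O(1)$ as claimed.
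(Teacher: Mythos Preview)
Your plan for part (ii) --- once orientations are known, invoke the subtree-sum primitive (\Cref{theorem:subtree-sum}) with each vertex's private input equal to its parent-edge weight --- matches the paper exactly. The paper's treatment of part (i), however, is quite different from yours: it simply invokes Theorem~5.3 from the full version of \cite{dory2019improved}, which constructs a heavy-light decomposition in $\tilde O(1)$ Minor-Aggregation rounds and, as a by-product, orients every tree edge toward the root. So where you give a self-contained rake-and-compress argument, the paper defers entirely to a black-box citation.

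Your rake-and-compress route is viable, but as written step~(c) has a real gap. The Minor-Aggregation model only supports \emph{edge contraction}; ``fusing the two incident edges into a single edge'' is not a minor operation and cannot be executed as stated. The natural fix --- have each selected degree-$2$ vertex contract \emph{one} of its two incident edges --- works structurally, but then at the moment of contraction neither endpoint's supernode need carry the root-flag, so the contracted edge cannot ``commit to an orientation'' at that time as you claim. What is actually required is a second, \emph{reverse} pass: after the forward rake/compress phase collapses each tree to its root in $O(\log n)$ rounds, replay the contraction levels from last to first; at level $t$, rebuild the minor at that level (each edge remembers the round $t_e$ at which it was contracted, so set $c_e=\top$ iff $t_e<t$), run one consensus so every supernode learns whether it contains the root, and orient every edge with $t_e=t$ toward the root-side supernode. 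This is what your phrase ``the final root-flag percolates deterministically through these chains'' must cash out to, but it is not automatic and should be spelled out. You should also make the compress step act on an \emph{independent set} of degree-$2$ vertices --- your parenthetical about randomized rake-and-compress gestures at this, but step~(c) as written compresses all of them at once, which interacts badly with the one-edge-contraction fix. With these repairs your argument goes through in $\tilde O(1)$ rounds and is arguably more elementary than importing the full heavy-light machinery; the paper's approach buys brevity at the cost of an external dependency.
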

\begin{proof}
  Note that we can assume without loss of generality that $G'$ is a tree (i.e., there is a single connected component) since we can independently run algorithms on different connected components. Connected components can be easily identified via a single round by contracting all edges and using $\min$-aggregation.
  
  To root a single tree we utilize Theorem 5.3 in the full version of \cite{dory2019improved} (which can be reinterpreted as constructing a heavy-light decomposition in $\tilde{O}(1)$ rounds of Minor-Aggregation). We only use the fact that, upon exit, each edge learns the orientations towards the root. Computing the distance to the root is performed via \Cref{theorem:subtree-sum}.
\end{proof}

We give a distributed algorithm to find all the directed cycles in a directed minor of the communication network if every vertex in the minor has one outgoing edge.
The algorithm is summarized in Algorithm~\ref{algo:cycle}. 

The algorithm iteratively samples vertices with constant probability, and contracts other vertices to the sampled vertices until a cycle can be easily identified. Then the algorithm backtracks the identified cycle through the contraction process. 

\begin{algorithm}
  \caption{FindCycles$(G' = (V, E))$ where $G'$ is a directed graph where each node has one outgoing edge}
  \label{algo:cycle}
  
  \begin{enumerate}
  \item Find all the connected components of $G'$ (ignoring edge directions only in this step).
  \item For each connected component $C$ of $G'$, if $C$ contains a self-loop, then return the self-loop, otherwise
    \begin{enumerate}
    \item Let $i$ be $0$ and  $C_0$ be $C$.  Every vertex in $C_i$ samples itself with probability $1/10$. 
    \item Repeat until there is a sampled vertex $v$ in $C_i$ such that the walk from $v$ along outgoing edges finds a cycle before visits another sampled vertex.
      \begin{enumerate}
      \item For each directed edge $e = (u, v)$ of $C_i$ such that $u$ is sampled and $v$ is not sampled, 
        set $c_e = \bot$. Set $c_e = \top$ for all the other edges. 
      \item Set $C_{i+1} = C_{i} / \{e: c_e = \top\}$, and $i = i + 1$.
      \item Every vertex in $C_i$ samples itself with probability $1/10$. 
      \end{enumerate}
    \item Let $v$ be an arbitrary vertex whose walk finds a cycle along outgoing edges, and $E_i$ be the edges of the cycle obtained from the walk from $v$.
    \item For $j = i - 1$ down to $0$ 
      \begin{enumerate}
      \item Let $E_{j}$ be $E_{j+1}$'s  corresponding edges in $C_{j}$, and $V_j$ be the endpoints of $E_j$ in $C_j$. 
      \item For each vertex of $C_{j+1}$ which corresponds to an induced subgraph of $C_{j}$ that contains   two vertices $u, v$ of $V_j$, add the edges on the directed path between $u$ and $v$ in $C_j$ to $E_j$.
      \end{enumerate}
    \item Return $E_0$.   
    \end{enumerate}
  \end{enumerate}
\end{algorithm}

\begin{lemma}[Finding cycles]\label{lem:cycle}
  Let $G$ be a graph, and $G' \subseteq G$ be a distributedly stored directed subgraph (each edge knows whether $e \in E(G')$ and its direction) such that every vertex of $G'$ has one outgoing edge (the outgoing edge can be a self-loop). 
  There is an $\tilde{O}(1)$ round Minor-Aggregation algorithm to find all the directed cycles in $G'$ (i.e., every vertex learns the incident edges belonging to a cycle of $G'$). 
\end{lemma}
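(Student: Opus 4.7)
The plan is to verify correctness of \Cref{algo:cycle}, analyze its round complexity in the Minor-Aggregation model, and bound the number of iterations of the contraction loop.

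Since every vertex of $G'$ has exactly one outgoing edge, a standard functional-graph argument shows that each weakly connected component of $G'$ contains a unique directed cycle with (possibly empty) trees oriented toward the cycle. Step 1 identifies weakly connected components via contraction plus $\min$-aggregation in $\tilde O(1)$ rounds, and Step 2 detects self-loops via a single aggregation. For components without self-loops, I would argue that the contraction step preserves this functional-graph structure: an edge $(u,v)$ is uncontracted iff $u$ is sampled and $v$ is non-sampled, so by a case analysis, each supervertex of $C_{i+1}$ is a maximal directed chain ending at a unique "sampled vertex with non-sampled successor," hence has exactly one outgoing uncontracted edge; the unique cycle of $C_{i+1}$ is the contracted image of the cycle of $C_i$, and tree pieces are absorbed into the supervertices they flow into.

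I would then bound the iteration count. Let $\ell_i$ denote the cycle length in $C_i$. Since $\ell_{i+1}$ equals the number of sampled-to-non-sampled transitions around the $C_i$-cycle, it is a sum of indicator variables with mean $\ell_i \cdot (1/10)(9/10) = 9\ell_i/100$. A Chernoff bound gives $\ell_{i+1} \le \ell_i / 5$ with probability $1 - 2^{-\Omega(\ell_i)}$, so $\ell_i$ shrinks by a constant factor per iteration until it drops to $O(\log n)$. Once $\ell_i = O(\log n)$, the probability that exactly one cycle vertex is sampled is $\Omega(1/\operatorname{poly}\log n)$, so $O(\log^2 n)$ further iterations suffice to meet the exit condition with high probability. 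I would also observe that the exit condition "some sampled $v$ finds a cycle before hitting another sampled vertex" is equivalent (up to constant probability) to "the cycle contains at most one sampled vertex," which is checkable in $\tilde O(1)$ rounds by contracting all cycle edges per component and $+$-aggregating the sampled-indicator values; combined with $O(\log n)$ iterations, this keeps the shrinkage phase at $\tilde O(1)$ rounds.

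Finally, the backtracking stage recovers cycle edges level by level. Given $E_{j+1}$, each edge corresponds to a unique uncontracted edge in $C_j$, giving a partial $E_j$. For each cycle supervertex of $C_{j+1}$ that contains two distinct boundary endpoints of $E_{j+1}$-edges, the missing cycle segment inside the supervertex is the unique directed path between these boundaries inside the contracted directed chain of $C_j$; this path can be identified by an ancestor/descendant-sum computation via \Cref{theorem:subtree-sum} in $\tilde O(1)$ rounds per level. Summing over the $O(\log n)$ contraction levels yields the claimed $\tilde O(1)$ total round complexity. The main obstacle will be carefully verifying that contractions preserve the functional-graph invariant in the presence of tree pieces and that the termination condition can be realized purely via Minor-Aggregation primitives without a global "walk" that would exceed the allowed per-round budget.
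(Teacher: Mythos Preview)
The main gap is in checking the loop's termination condition. You propose to test ``the cycle contains at most one sampled vertex'' by contracting the cycle edges and $+$-aggregating sampled indicators, but this is circular: the cycle edges are precisely what the algorithm is trying to identify, so you cannot contract them at this stage. The paper resolves this differently. Since each vertex of $C_i$ is independently sampled with probability $1/10$, a Chernoff bound shows that every directed path of length $O(\log n)$ in $C_i$ contains a sampled vertex with high probability. Consequently, the walk from any sampled vertex either reaches another sampled vertex or revisits a vertex (and hence detects the cycle) within $O(\log n)$ steps. These bounded-length walks are simulated directly in $O(\log n)$ Minor-Aggregation rounds, with the tie-breaking rule that whenever two walks meet at a vertex one keeps only the walk initiated by the smallest-ID sampled vertex, so the per-vertex state stays $\tilde O(1)$. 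In other words, the ``global walk'' you flagged as the obstacle is not an obstacle at all: the sampling truncates every walk to polylogarithmic length, which is the missing idea in your plan.

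Your iteration-count argument via the cycle length $\ell_i$ (shrinking by a constant factor until $\ell_i=O(\log n)$, then waiting for a round with at most one sampled cycle vertex) differs from the paper, which simply argues that the total number of vertices of $C_i$ drops by a constant factor per iteration with constant probability; either route yields $\tilde O(1)$ iterations. The remainder of your plan---the functional-graph invariant under contraction and the level-by-level backtracking using \Cref{theorem:subtree-sum}---matches the paper's proof.
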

\begin{proof}
  We first show that Algorithm~\ref{algo:cycle} finds all the directed cycles in $G'$.
  Since every vertex has one outgoing edge, every connected component of $G'$ identified in Step 1 contains exactly one cycle. 
  For each connected component $C$ of $G'$,  $C_i$ is a contracted graph of $C_{i-1}$, and thus for each $i$, $C_i$ is a directed minor of $G'$ such that each vertex has one outgoing edge, and $C_i$ only contains a cycle.
  Note that $E_j$ forms a cycle of $C_j$ by induction. The algorithm outputs the directed cycle for each connected component found in Step 1.

  Now we show that Algorithm~\ref{algo:cycle} can be implemented in $\tilde{O}(1)$ Minor-Aggregation rounds.
  Finding all the connected components of $G'$ and identifying the connected components with a self-loop can be done in a single round by contracting all edges and doing a $\min$-aggregation to find the minimum ID in the component. 
  By the Chernoff bound, for any $C_i$,
  each directed path of length $O(\log n)$ contains a sampled vertex with probability $1 - O(1 / n^{10})$.
  By the union bound, with probability at least $1 - O(1/ n)$, for each $C_i$, each walk on $C_i$ either finds a cycle or hits a sampled vertex after $O(\log n)$ steps of the walk.
  Note that if two walks visit the same vertex, then their following walks are the same. 
  Hence, if a vertex is visited by different walks, we only keep the walk initiated by the vertex with the smallest ID.
  Thus, one can determine if there is a walk from a sampled vertex finding a cycle before hitting another sampled vertex in $O(\log n)$ Minor-Aggregation rounds, and return such a sampled vertex if exists. 

  By Corollary~\ref{corollary:computation-on-minors}, the vertex sampling and the minor construction can be implemented in $O(1)$ rounds, and thus Step 2(a) can be implemented in $O(1)$ rounds.  
  Step 2(b) can be implemented in $\tilde{O}(1)$ rounds using the observation that every iteration of Step 2(b) reduces the number of vertices in the connected component by a constant factor with constant probability. 
  Step 2(c) also can be implemented in $O(\log n)$ rounds by walking for $O(\log n)$ steps along the outgoing edges from a given vertex. 

  Note that for any $j \geq 1$, a vertex of $C_j$ belongs to $V_j$ if and only if the corresponding induced subgraph of $C_{j-1}$ contains two vertices of $V_{j-1}$. 
  For each induced subgraph of $C_{j-1}$ that corresponds to a vertex of $C_j$, it takes $O(1)$ rounds to identify the vertices in $V_{j-1}$, and another $O(1)$ rounds to identify the path connecting the two vertices in $V_{j-1}$ by Lemma~\ref{theorem:subtree-sum}. 
  Hence Step 2(d) can be simulated in $O(\log n)$ rounds.
\end{proof}

Now we are ready to show that Algorithm~\ref{algo:ESSSP} can be implemented in an efficient distributed manner. Roughly speaking, besides obtaining the solution to the transshipment problem, each step of Algorithm~\ref{algo:ESSSP} can be implemented in polylogarithmic number of rounds. 
Together with the round complexity of approximating transshipment (Theorem~\ref{theorem:distributed-transshipment}) and the fact that Algorithm~\ref{algo:ESSSP} has a recursion depth $O(\log n)$ (Lemma~\ref{lem:ESSSP}), we obtain the following lemma.

\begin{lemma}[Round complexity of ESSSP]\label{lemma:distribted-esssp-complexity}
  Suppose $G = (V, E, w)$ is a weighted graph, $s \in V$ is a distinguished node, $d \in \R^V$ is a demand vector satisfying $d_s \ge 0$ and $d_v \le 0$, and $\eps > 0$ is a parameter. There exists a distributed implementation of \Cref{algo:ESSSP} which outputs an $(1 + \eps)$-approximate ESSSP in $\eps^{-2} \cdot \exp(O(\log n \cdot \log \log n)^{3/4})$ rounds of Minor-Aggregations.
\end{lemma}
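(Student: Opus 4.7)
The plan is to implement \Cref{algo:ESSSP} step-by-step in the Minor-Aggregation model, bound the per-iteration cost, and then combine this with the $O(\log n)$ recursion depth guaranteed by \Cref{lem:ESSSP}. To achieve an $(1+\eps)$-approximate ESSSP at the top level, each sub-call to transshipment is invoked with accuracy parameter $\eps' = \Theta(\eps/\log n)$, which inflates the cost of a single transshipment call by a $\log^2 n$ factor that is absorbed into the final exponential.

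First I would execute the only heavy step: computing the $(1+\eps')$-approximate transshipment flow–potential pair $(f,\phi)$ via \Cref{theorem:distributed-transshipment} in $(\eps')^{-2} \cdot \exp(O(\log n \log \log n)^{3/4})$ Minor-Aggregation rounds, so that $f$ is stored on edges and $\phi$ on nodes. For the outgoing-edge sampling in Step~2 I would use the standard exponential-race trick: each directed edge $e=(u,v)$ with $f(u,v)>0$ draws a fresh key $k_e := -\ln(r_e)/f(u,v)$ for $r_e$ uniform on $(0,1]$, and each tail $u$ learns the outgoing edge attaining $\min_e k_e$ via one aggregation step with the $\min$-operator. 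This samples edge $(u,v)$ with probability exactly $f(u,v)/\sum_{v'} f(u,v')$ and costs $O(1)$ rounds.

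Once the sampled subgraph $H$ is distributedly stored, I would identify its connected components (ignoring edge orientation) by a single contract-and-$\min$-aggregate round, extract the unique directed cycle in each component by invoking \Cref{lem:cycle} in $\tilde{O}(1)$ rounds, and sum each cycle weight $w(C)$ with one aggregation. To form the contracted trees $T_C$ and compute $\dist_{T_C}(u,x_C)$ for every non-cycle vertex $u$, I would use \Cref{corollary:computation-on-minors} to contract each cycle to the supervertex $x_C$, orient the resulting rooted forest via \Cref{lem:sssp_basic_operations}, and read off root-distances with \Cref{theorem:subtree-sum}. The supervertex demands $d'_{v_C} := \sum_{v \in V(C)} d_v$ are computed by a single consensus step. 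For the weight updates in Step~4, each surviving edge $e=(u,v)$ with endpoints in distinct components $C \ne C'$ reads $\dist_{T_C}(u,x_C)$, $\dist_{T_{C'}}(v,x_{C'})$, $w(C)$, $w(C')$ from its two endpoints in one aggregation step and assembles $w'(u,v)$ locally, yielding the minor $G' = (V',E',w')$ distributedly stored on top of $G$.

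With $G'$ in hand, the recursive call $\mathrm{ESSSP}(G', s', d', \ldots)$ is simulated on the minor via \Cref{corollary:computation-on-minors}, and the final tree assembly (Steps~9--11), namely lifting each edge of $T'$ back to the corresponding original edge and merging with the spanning tree obtained by deleting one cycle edge of each $C$, takes a constant number of basic edge/node operations. Summing contributions, one iteration costs $\eps^{-2} \cdot \exp(O(\log n \log \log n)^{3/4})$ rounds, and multiplying by the $O(\log n)$ recursion depth from \Cref{lem:ESSSP} preserves this bound since the extra $\log n$ factor is absorbed into the sub-polynomial exponential. The main delicate point is the minor bookkeeping across recursion: every original node/edge must, at the end, retrieve the information assigned to its lifted copy at every recursion level, which follows mechanically by stacking the guarantees of \Cref{corollary:computation-on-minors}.
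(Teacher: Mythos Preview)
Your proposal is correct and follows the same overall structure as the paper: implement each line of \Cref{algo:ESSSP} in the Minor-Aggregation model, with the transshipment call dominating the per-level cost, and then multiply by the $O(\log n)$ recursion depth from \Cref{lem:ESSSP}.

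The one genuine deviation is your implementation of the edge-sampling in Step~2. You use the exponential-race trick: each outgoing edge draws a key $-\ln(r_e)/f(u,v)$ and the supernode takes a single $\min$-aggregation. The paper instead maintains, for each recursion level $i$, an explicit rooted spanning forest $F_i$ of the original graph $G$ (one tree per supernode of $G_i$), computes subtree sums of the positive outflows via \Cref{theorem:subtree-sum}, and walks from the root down $F_i$ sampling children proportionally until it hits the original node holding the selected edge. Both approaches are valid; yours is conceptually cleaner but implicitly assumes $\tilde O(1)$-bit approximations of the real-valued keys suffice, while the paper's is purely combinatorial. The paper's forest $F_i$ is also what makes the cross-recursion bookkeeping you flag at the end concrete: it is carried from level to level and used both for sampling and for reconstructing the final tree, whereas you defer this to ``stacking the guarantees of \Cref{corollary:computation-on-minors}'', which is correct in spirit but less explicit.
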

\begin{proof}
  Let $G_0 = (V_0, E_0)$ be the initial input graph of Algorithm~\ref{algo:ESSSP}, and $G_i = (V_i, E_i)$ be the input graph of $i$-th recursion for $i > 0$. 
  Note that $G_0$ is the same as $G$, and each $G_i$ is a minor of $G_0$.
  Hence, all the $G_i$ are minors of $G$.
  Throughout our implementation, for each $G_i$, we maintain a rooted forest $F_i$ on $G$
  such that 
  for each vertex $u$ of $G_i$, 
  there is a tree of $F_i$ corresponding to a spanning tree of $u$'s corresponding induced subgraph in $G$. 
  Since each vertex of $G_0$ is a vertex of $G$, $F_0$ only contains isolated vertices. 

  We analyze the round complexity to implement Algorithm~\ref{algo:ESSSP} (ESSSP) with respect to input graph $G_i$. 
  By Theorem~\ref{theorem:distributed-evaluation-routing}, Corollary~\ref{corollary:computation-on-minors} and Lemma~\ref{lem:mainTransshipment}, Line 1 can be implemented in $\eps^{-2} \exp(O(\log n \cdot \log \log n)^{3/4})$ rounds. 

  Now we show that Line 2 can be implemented in $O(1)$ rounds for each $G_i$. 
  First, for each vertex $u'$ in $G$,
  letting $u$ denote the vertex of $G_i$ whose corresponding induced subgraph in $G$ consisting $u'$, we compute 
  \[f_i(u') \defeq \sum_{v \in V_i: f(u, v) > 0 \text{ and  $(u, v)$ corresponds to an edge  of $G$ incident to $u'$}} f(u, v)\]
  in $O(1)$ rounds. 
  Second, we compute \[g_i(u')\defeq \sum_{v' \in V: v' \text{ is a descendant of $u'$ in $F_i$}} f_i(v')\] in $O(1)$ rounds by Lemma~\ref{theorem:subtree-sum}. 
  Third, each vertex $u'$ of $G$ samples a child $v'$ in $F_i$ with probability proportional to $g_i(v') / g_i(u')$, and no edge with probability $f_i(u') / g_i(u')$ in $O(1)$ rounds. 
  Forth, for each tree of $F_i$,
  we identify the vertex where the walk from the root  along edges sampled in third step stops. 
  For each tree $T$ in $F_i$, each vertex $u'$ in $T$ is sampled with probability $f_i(u') / \sum_{v' \in T} f_i(v')$.
  And this step can be simulated in $O(1)$ rounds by Lemma~\ref{theorem:subtree-sum}. 
  At the end, in $O(1)$ rounds, each vertex $u'$ of $G$ identified in the last step samples an edge $(u, v)$ with probability proportional to $f(u, v)$ among all the edges with positive flow value and corresponding to edges incident to $u'$. Since each vertex $u$ of $G_i$, edge $(u, v)$ with positive flow value is sampled with probability proportional to $f(u, v)$. Hence, Line 2 of Algorithm~\ref{algo:ESSSP} can be implemented in $O(1)$ rounds. 

  Let $H_i$ be the graph constructed in Line 2 of $i$-th recursion for $i > 0$. 
  The connected components of $H_i$ can be identified in a single round by contracting all edges and doing a $\min$-aggregation,
  and the directed cycles of $H_i$ can be found in $\tilde{O}(1)$ rounds by Corollary~\ref{corollary:computation-on-minors} and Lemma~\ref{lem:cycle}. 
  Line 3 to 12 be implemented in $O(1)$ rounds by Corollary~\ref{corollary:computation-on-minors} and Lemma~\ref{theorem:subtree-sum}.
  The construction of $G_{i+1}$ can be done in $O(1)$ rounds by Corollary~\ref{corollary:computation-on-minors}. The edges of $F_{i+1}$ can obtained by removing an arbitrary edge for each cycle in $H_i$, and taking the union with $F_{i}$, which can be simulated in $O(1)$ rounds. 
  By Corollary~\ref{corollary:computation-on-minors} and Lemma~\ref{lem:sssp_basic_operations}, the edge directions of $F_{i+1}$ can be obtained in $\tilde{O}(1)$ rounds.
  Line 15 to 19 can also be implemented in $O(1)$ rounds.

  Hence, besides the recursion, Algorithm~\ref{algo:ESSSP} on each input graph $G_i$ takes $\eps^{-2} \exp(O(\log n \cdot \log \log n)^{3/4})$ rounds. 
  Since the recursion depth of Algorithm~\ref{algo:ESSSP} is $O(\log n)$ by Lemma~\ref{lem:ESSSP}, Algorithm~\ref{algo:ESSSP} can be implemented in $\eps^{-2} \exp(O(\log n \cdot \log \log n)^{3/4})$ Minor-Aggregation rounds.  
\end{proof}

Now we give the distributed implementation of Algorithm~\ref{algo:TStoSSSP} (SSSP) and prove the main result of this section.

\thmSSSP*

\begin{proof}
  By Lemma~\ref{lem:TStoSSSP}, Algorithm~\ref{algo:TStoSSSP} outputs an SSSP with appropriate guarantees. In this proof, we show that Algorithm~\ref{algo:TStoSSSP} can be simulated in $\eps^{-2} \cdot \exp(O(\log n \cdot \log \log n)^{3/4})$ Minor-Aggregation rounds.

  Note that the while loop on lines 2--14 executes $O(\log n)$ w.h.p. (\Cref{lem:TStoSSSP}), hence it is sufficient to bound the complexity of a single iteration of the loop. Line 4 can be implemented in $\eps^{-2} \cdot \exp(O(\log n \cdot \log \log n)^{3/4})$ rounds via  \Cref{theorem:distributed-transshipment}. Line 5 has the same round complexity via \Cref{lemma:distribted-esssp-complexity}. Lines 6--8 are implemented in $\tilde{O}(1)$ rounds via Corollary~\ref{corollary:computation-on-minors} and \Cref{lem:sssp_basic_operations}. Other operations are trivially implementable in the Minor-Aggregation model.
\end{proof}

\bibliographystyle{alpha}
\bibliography{Refs}

\newpage
\appendix

\section{Deferred proofs}\label{sec:deferred-proofs}

\ThmAggregationCongest*
\begin{proof}
  The computation of an edge $e = \{u, v\}$ in the Minor-Aggregation model is simulated by both of its endpoints (they will always agree on the state of the edge). Naturally, the computation of a node in Minor-Aggregation model is simulated by the node itself in CONGEST. It is sufficient to show that we can simulate a single round of the Minor-Aggregation model in $\tilde{O}(\tau_{PA})$ rounds of the CONGEST model.
  
  \textbf{Leader election in each supernode.} We directly follow the argument laid out in \cite{GH16}. Initially, each node starts as its own \emph{cluster}. In subsequent iterations, we grow the clusters until the set of clusters matches the set of supernodes (i.e., there are no outgoing contracted edges from each cluster). Every node in each cluster $C \subseteq V$ maintains the minimum ID of the node in $C$, a node which we call the ``leader'' of $C$.

  We merge clusters together over contracted edges (i.e., $e \in E$ with $c_e = \top$). We do this by restricting the merges to be star shaped in the following way. The leader of each cluster $C$ throws a fair random coin and labels itself ``heads'' or ``tails''. This information is then propagated (via solving a PA task) to all nodes in $C$ in $\tilde{O}(\tau_{PA})$ rounds---following \Cref{def:pa-task}, the leader sets its private input to be $1$ or $0$ (corresponding to heads/tails), the aggregate operation is $\max$, each node uses the leader's ID as their part ID.

  Each tails cluster $C$ chooses an arbitrary edge connecting it to a heads cluster: (1) each node sends its leader ID to all neighbors, (2) each node sets its private input to be an arbitrary incident edge satisfying the condition (or $\bot$ if none), (3) solving a PA task inside each cluster we choose an arbitrary (e.g., one with minimal ID) such edge and inform each node in $C$ about the leader ID of the chosen heads cluster. Each node inside a tails cluster which has found a neighboring tails cluster takes over the part ID of the heads cluster.

  Consider a cluster $C$ that has an incident contracted edge $e$ (whose endpoints are in clusters $C$ and $D$). With probability at least $1/4$, if $C$ is tails and $D$ is heads, the cluster $C$ will dissapear (be merged into $D$). Therefore, using a standard argument, after $O(\log n)$ iterations, the clusters will exactly match the supernodes with high probability.
    
  \textbf{Contraction + Consensus step.} First, we elect a leader in each supernode. In other words, each node $v$ in a supernode $s \in V'$ agrees on some unique ID of that supernode. Then solving the distributed PA task with the leader's ID as the part ID, we can directly inform each node $v \in s$ of the aggregate $\bigoplus_{v \in s} x_v$ in $\tau_{PA}$ CONGEST rounds (\Cref{def:pa-task}).

  \textbf{Aggregation step.} Suppose that, after performing the consensus step, each node $v \in s$ is informed about $y_s$ and $v$'s supernode ID which we denote as $\mathrm{ID}(s)$. Each node $v \in s$ sends $\mathrm{ID}(s)$ and $y_s$ to all $v$'s neighbors. Consider an edge $\{a, b\} = e \in E$ and suppose they are in supernodes $a \in s_a, b \in s_b$. Both endpoints can (equally) simulate the computation on edge $e$: assuming $s_a \neq s_b$, $e$ corresponds to an edge $e \in E'$, in which case $a$ and $b$ compute $z_{e, s_a}$ and $z_{e, s_b}$. Now, each node $v \in s$ can compute the aggregate $\bigotimes_e z_{e, s}$ over all edges $e$ incident to $v$. Solving the distributed PA task, for each supernode $s$ we can compute the final aggregate $\bigoplus_{e \in \mathrm{incidentEdges}(s)} z_{e, s}$ and inform all $v \in s$ of this value.
\end{proof}

\lemmaDistributedLddSampling*
\begin{proof}
  The algorithm is presented in \cite{HL18}, in Section 2, Algorithm LDDSubroutine and Algorithm ExpectedSPForest. In order to avoid recreating large parts of \cite{HL18}, we will assume and reuse the notation of that paper in this proof.

  First, we observe that LDDSubroutine can be implemented in the Minor-Aggregation model. LDDSubroutine consists of (1) contracting 0-weighted nodes, and (2) performing multiple steps of ball-growing, where in each step each supernode becomes active if an neighboring supernode was active in the previous step. Contractions can be clearly performed via \Cref{corollary:computation-on-minors}. Ball-growing can also be simulated in the Minor-Aggregation model: each edge between supernode where one side is active and the other one is not will inform the inactive side to become active. Second, we also observe that ExpectedSPForest can be implemented using Minor-Aggregations in a trivial way since it simply calls LDDSubroutine multiple times.

  By reinterpreting Lemma 13, the result guarantees an LDD of quality $\frac{1}{\beta} n^{O(\log \log n) / \log(1/\beta)}$, and it runs in $\tilde{O}(\frac{1}{\beta})$ Minor-Aggregation rounds. Choosing $\beta = \exp(- O(\sqrt{\log n \cdot \log \log n}))$, we obtain the stated result.

  Specifically, we run the Algorithm ExpectedSPForest while $\rho^{(t)} = (\frac{O(\log n)}{\beta})^t$ is bounded by our desired radius $\rho_{\text{target}}$. In each step we find an LDD of radius $\frac{O(\log n)}{\beta}$ and quality $O(\log n)$; contract the components, increase non-contracted edges by $\frac{O(\log n)}{\beta}$ (so that the distance nodes in different components never decreases), and repeat the process on the contracted graph (one can operate on the contracted graph \Cref{corollary:computation-on-minors}). The returned decomposition is the one obtained in the final step of the process. This makes the radius property trivial since the final radius is at most $\rho^{(t)} \le \rho$, where $t \le \sqrt{\log n}$ is the number of iterations performed. The quality property is proven as follows. Let $G_i$ be the contracted graph in step $i$. Consider some nodes $x, y$. They are separated in the final components if they are separated at every step along the way. In the penultimate step $t-1$, the expected distance between them is $\E[d_{G_{t-1}}(x, y)] \le O(\log n)^{t-1} d_G(x, y)$ (Lemma 12). Therefore, the probability of them being cut in the final step is at most $\E[ \frac{d_{G_{t-1}}(x, y)}{\rho^{(t)}} \cdot O(\log n) ] \le \frac{\E[d_{G_{t-1}}(x, y)]}{\rho^{(t)}} \cdot O(\log n) \le \frac{d_G(x, y) \cdot O(\log n)^{t-1}}{\rho^{(t)}} \cdot O(\log n) = \frac{d_G(x, y)}{\rho^{(t)}} \cdot O(\log n)^t \le \frac{d_G(x, y)}{\rho} \cdot \frac{O(\log n)}{\beta} \cdot O(\log n)^t \le \frac{d_G(x, y)}{\rho} \cdot \exp(O(\sqrt{\log n \cdot \log \log n}))$. Therefore, by definition of quality, it is at most $\exp(O(\sqrt{\log n \cdot \log \log n}))$.%
\end{proof}

\end{document}